\documentclass[11pt]{article}
\usepackage[margin=1in]{geometry}
\usepackage{amsmath,amsthm,amssymb}
\usepackage[dvipsnames,svgnames]{xcolor}
\usepackage{hyperref}
\usepackage{cleveref}
\usepackage{dsfont}
\usepackage[linesnumbered,boxed,ruled,vlined]{algorithm2e}
\usepackage{tikz}
% \usetikzlibrary{shadows}
% \usetikzlibrary{shapes.geometric}
\usetikzlibrary{positioning}
% \usetikzlibrary{arrows,backgrounds,fit}
% \usetikzlibrary{matrix}
% \usetikzlibrary{3d}
% \usetikzlibrary{fadings}
% \usetikzlibrary{decorations.pathreplacing}
% \usetikzlibrary{calc,fadings,decorations.markings}

% \definecolor{DarkBlue}{RGB}{0,0,150}
% \definecolor{NotSoDarkBlue}{RGB}{15,15,210}
% \definecolor{DarkRed}{RGB}{150,0,0}
% \definecolor{DarkGreen}{RGB}{0,100,0}

\usepackage{hyperref}
\hypersetup{
    colorlinks=true,
    linkcolor=magenta,
    filecolor=magenta,
    citecolor=DarkRed,
    urlcolor=cyan,
}

% \ifnum\draft=1

% \else
% \makeatletter
% \newcommand{\authnote}[1]{\ifhmode\@bsphack\@esphack\fi}
% \makeatother
% \fi

\usepackage[T1]{fontenc}

% %%%%%%%%%%%%%%%%%%%%%%%%%%%%%%%%%%%%%
% General macros
% %%%%%%%%%%%%%%%%%%%%%%%%%%%%%%%%%%%%%
% \newcommand{\hide}[1]{}
% \newcommand{\ash}[1]{\textcolor{gray!80}{#1}}

% \newcommand\numberthis{\stepcounter{equation}\tag{\theequation}}
% \providecommand{\qedllncs}{\ifnum\llncs=1\qed\fi}
\def\clap#1{\hbox to 0pt{\hss#1\hss}}
\providecommand{\email}[1]{\href{mailto:#1}{\texttt{#1}}}

\makeatletter
\providecommand{\qedhere}{
\global\let\zeqno\relax
\ifmmode
  \eqno \def\@badmath{$$}%$$
    \let\eqno\relax \let\leqno\relax \let\veqno\relax
    \hbox{\qed}
\else
  \qed
\fi
}
\newcommand{\setstretch}[1]{%
  \def\baselinestretch{#1}%
  \@currsize
}
\makeatother

% \newenvironment{plainlist}{ \begingroup\list{}{\topsep0em\leftmargin0em} }{ \endlist\endgroup }

% \newcounter{temporarylabel}
% \newcounter{temporaryref}
% \newcommand{\templabel}{\stepcounter{temporarylabel}\label{tmp-\thetemporarylabel}}
% \newcommand{\lastref}[1][0]{\setcounter{temporaryref}{\value{temporarylabel}}\addtocounter{temporaryref}{-#1}\ref{tmp-\thetemporaryref}}
% \newcommand{\lasteqref}[1][0]{\setcounter{temporaryref}{\value{temporarylabel}}\addtocounter{temporaryref}{-#1}\eqref{tmp-\thetemporaryref}}

% %%%%%%%%%%%%%%%%%%%%%%%%%%%%%%%%%%%%%
% Tikz macros
% %%%%%%%%%%%%%%%%%%%%%%%%%%%%%%%%%%%%%
\makeatletter
\@ifpackageloaded{tikz}{
  \tikzstyle{no sep}=[inner sep=0,outer sep=0]
}{}
\makeatother

% %%%%%%%%%%%%%%%%%%%%%%%%%%%%%%%%%%%%%
% Math macros
% %%%%%%%%%%%%%%%%%%%%%%%%%%%%%%%%%%%%%

\providecommand{\deq}{\mathrel{:=}}
\newcommand{\supp}{\operatorname{\textnormal{\text{Supp}}}}

\newcommand{\TVD}{\operatorname{\Delta_\textnormal{TV}}}

\providecommand\Ex{\mathop{\operatorname{\mathbb E}}}

\newcommand{\1}{\mathop{\mathds{1}}}

\makeatletter
\def\metadef#1#2{%
% #1 csname
% #2 content
  \def\metadef@iter##1{\ifx##1;\else \expandafter\newcommand\csname#1\endcsname{#2}\expandafter\metadef@iter\fi}%
  \expandafter\metadef@iter%
}
\makeatother

% Vector
\metadef{vec#1}{\mathbf #1}abcdefghijklmnopqrstuvwxyzABCD;

% Tensor
\metadef{ten#1}{\mathbf #1}ABCDEFGHIJKLMNOPQRSTUVWXYZ;

% Group & Ring (Semiring) & Field

% \newcommand{\fZ}{\mathbb Z}

% frak -> \mathfrak
\metadef{frak#1}{\mathfrak #1}ABCDEFGHIJKLMNOPQRSTUVWXYZ;

% c -> \mathcal
\metadef{c#1}{\mathcal #1}ABCDEFGHIJKLMNOPQRSTUVWXYZ;

% s -> \mathsf
\metadef{s#1}{\mathsf #1}ABCDEFGHIJKLMNOPQRSTUVWXYZ;

% % Group & Ring (Semiring) & Field
% \newcommand{\gG}{\mathbb G}
% \newcommand{\gH}{\mathbb H}
% \newcommand{\gZ}{\mathbb Z}

% \newcommand{\rR}{\mathcal R}
% \newcommand{\rZ}{\mathbb Z}
% \newcommand{\rN}{\mathbb N}

% \newcommand{\fQ}{\mathbb Q}
% \newcommand{\fR}{\mathbb R}
% % \newcommand{\fZ}{\mathbb Z}

\makeatletter
\@ifpackageloaded{mathtools}{}{
  \def\clap#1{\hbox to 0pt{\hss#1\hss}}

  \def\mathclap{\mathpalette\mathclapinternal}

  \def\mathclapinternal#1#2{\clap{$\mathsurround=0pt#1{#2}$}}
}
\makeatother

% \catcode`\^=\active
% \def\bracketsp#1{\sp{(#1)}}
% \def^#1{\ifx^#1\expandafter\bracketsp\else\sp{#1}\fi}

% %%%%%%%%%%%%%%%%%%%%%%%%%%%%%%%%%%%%%
% Theoritial CS macros
% %%%%%%%%%%%%%%%%%%%%%%%%%%%%%%%%%%%%%

\newcommand{\bit}{\{0,1\}}
% \providecommand{\lang}[1]{\textnormal{\textsf{#1}}}
% \providecommand{\comp}[1]{\textnormal{\textbf{#1}}}
% \providecommand{\NC}[1]{\ensuremath\textnormal{\textbf{NC}}^#1}

% %%%%%%%%%%%%%%%%%%%%%%%%%%%%%%%%%%%%%
% Complexity Class
% %%%%%%%%%%%%%%%%%%%%%%%%%%%%%%%%%%%%%

% \newcommand{\RTFAM}{\textnormal{$\mathbb R$-\comp{TFAM}}}

% %%%%%%%%%%%%%%%%%%%%%%%%%%%%%%%%%%%%%
% Theory environment macros
% %%%%%%%%%%%%%%%%%%%%%%%%%%%%%%%%%%%%%

  \theoremstyle{plain}% default
  %[section]
  \newtheorem{theorem}{Theorem}%[section]
  \newtheorem{lemma}{Lemma}

  \theoremstyle{definition}
  \newtheorem{proposition}[lemma]{Proposition}
  \newtheorem{definition}[lemma]{Definition}%[section]
  
  \newtheorem{claim}[lemma]{Claim}

% \newtheorem{example}{Example}%[section]

% %%%%%%%%%%%%%%%%%%%%%%%%%%%%%%%%%%%%%
% Algorithm macros
% %%%%%%%%%%%%%%%%%%%%%%%%%%%%%%%%%%%%%

% \makeatletter
% \newcommand{\LineBreaker}{%
%   \setlength\@tempdima{\algorithmicindent}%
%   \Statex\hskip\dimexpr\theALG@nested\@tempdima\relax}
% \newcommand{\FunctionSpacing}{%
%   \Statex
%   \setcounter{ALG@line}{0}%
%   }
% \makeatother
% % \algrenewcommand\algorithmicprocedure{\textbf{private procedure}}
% % \algrenewcommand\algorithmicfunction{\textbf{\textcolor{blue}{public} procedure}}
% \algdef{SL}[LIF]{lIf}{0}[1]{\algorithmicif\ #1\ \algorithmicthen}%

% %%%%%%%%%%%%%%%%%%%%%%%%%%%%%%%%%%%%%
% Other macros specified for this article
% %%%%%%%%%%%%%%%%%%%%%%%%%%%%%%%%%%%%%

\metadef{d#1}{\mathbb #1}ABCDEFGHIJKLMNOPQRSTUVWXYZ;

\newcommand{\indpprod}{\mathbin{\cdot_{\textnormal{\sf indp}}}}
\newcommand{\NPR}{\mathcal R_\textnormal{NP}}

\newcommand{\RTVD}{\operatorname{\mathcal{R}_\textnormal{TVD}}}
\newcommand{\MTVD}{\operatorname{\Delta_\textnormal{MTV}}}
\newcommand{\LBTV}{d_{\mathrm{LB}}}

\newcommand{\epsr}{\varepsilon_{\mathrm{s}}}
\newcommand{\epsa}{\delta_{\mathrm{s}}}

\newcommand{\MK}[3][]{_{#2|#3}\if\relax#1\relax\else|{#1}\fi}

\providecommand\Dist{\mathop{\operatorname{\mathbb D}}}

\title{
On Deterministically Approximating Total Variation Distance\\
% Sparsify the Likelihood Ratio Distribution w.r.t. the Minimum Total Variation Distance \\
% Sparsification w.r.t.\ the Minimum Total Variation Distance \\
% On Deterministically Approximating Total Variation Distance\\
% On the Deterministic Approximation of Total Variation Distance\\
% % Deterministic Approximation of Total Variation Distance\\
% based on Sparsifying the Likelihood Ratio Distribution\\
% based on the Likelihood Ratio and the Minimum Total Variation Distance
}
%\author{Anonymous}
\author{Weiming Feng\\UC Berkeley\\\email{fwm1994@gmail.com}\and Liqiang Liu\\Peking University\\\email{lql@pku.edu.cn}\and Tianren Liu\\Peking University \\\email{trl@pku.edu.cn}}
\date{}

\begin{document}

\maketitle

\begin{abstract}
Total variation distance (TV distance) is an important measure for the difference between two distributions.
Recently, there has been progress in approximating the TV distance between product distributions:
a deterministic algorithm for a restricted class of product distributions (Bhattacharyya, Gayen, Meel, Myrisiotis, Pavan and Vinodchandran 2023)
and a randomized algorithm for general product distributions (Feng, Guo, Jerrum and Wang 2023).
We give a \emph{deterministic} fully polynomial-time approximation algorithm (FPTAS) for the TV distance between product distributions. 
Given two product distributions $\dP$ and $\dQ$ over $[q]^n$, our algorithm approximates their TV distance with relative error $\varepsilon$ in time $O\bigl( \frac{qn^2}{\varepsilon} \log q \log \frac{n}{\varepsilon \TVD(\dP,\dQ) } \bigr)$.

Our algorithm is built around two key concepts:
1) The \emph{likelihood ratio} as a distribution,
which captures sufficient information to compute the TV distance.
2) We introduce a metric between likelihood ratio distributions, called the \emph{minimum total variation distance}.
Our algorithm computes a sparsified likelihood ratio distribution that is close to the original one w.r.t.\ the new metric.
The approximated TV distance can be computed from the sparsified likelihood ratio.

Our technique also implies deterministic FPTAS for the TV distance between Markov chains.
\end{abstract}

\section{Introduction}

The \emph{total variation distance} (TV distance), which is also known as the \emph{statistical difference}, is a fundamental metric for measuring the difference between two probability distributions.
For two distributions $\dP$ and $\dQ$ over the sample space $\Omega$, their TV distance is defined by
\begin{align*}
	\TVD(\dP,\dQ) \deq \frac{1}{2}\sum_{x \in \Omega}|\dP(x) - \dQ(x)|\,. 
	%= \max_{A \subseteq \Omega}|\dP(A) - \dQ(A)|.
\end{align*}
The TV distance is essentially the $L_1$-norm of the difference, and can also be characterized as the discrepancy of the optimal coupling between $\dP$ and $\dQ$~\cite{mitzenmacher2017probability}. 
The TV distance connects to many statistical distances including the relative entropy and other well-studied divergences~\cite{PW22}.
% As a result, it is widely used in many areas of math and computer science.

% \tianren{TODO: check the discussion about SZK}

%The problem of computing TV distance was recently initiated by Bhattacharyya, Gayen, Meel, Myrisiotis, Pavan and Vinodchandran~\cite{BGMMPV22}.
This paper studies the computational problem of TV distance. 
The problem is particularly interesting when both $\dP$ and $\dQ$ are high-dimensional distributions that admit succinct descriptions.
However, computing (even approximating with additive error) the TV distance is known to be hard for general distributions.
The seminal work of Sahai and Vadhan~\cite{SahaiV03} proved that deciding whether two distributions given by Boolean circuits are ``close'' or ``far-apart'' w.r.t.\ TV distance is a complete problem in \textbf{SZK} (Statistical Zero Knowledge). 
\textbf{SZK}-complete problems are often believed to be computationally hard~\cite{BoulandCHTV20}.
Many works studied the complexity of computing the TV distance for specific classes of distributions~\cite{LP02,CMR07,Kiefer18,BGMMPV22}.
%Recently, Bhattacharyya, Gayen, Meel, Myrisiotis, Pavan and Vinodchandran~\cite{BGMMPV22} studied this problem for some special distributions.
%
One of the simplest situations is when $\dP$ and $\dQ$ are both \emph{product distributions} over the sample space $[q]^n$, where $[q] \deq \{1,2,\ldots,q\}$ is a finite domain and $n$ is the dimension.
%
% Then $\dP$ and $\dQ$ can be described by marginal distributions on every index.
%
Recently, Bhattacharyya, Gayen, Meel, Myrisiotis, Pavan and Vinodchandran~\cite{BGMMPV22} proved a surprising result: the exact computing of the TV distance between two product distributions is $\#\mathbf{P}$-complete, even in the Boolean domain ($q = 2$).
%However, the hardness result in~\cite{BGMMPV22} shows that even in the Boolean case ($q = 2$), the exact computing of the TV distance between two product distributions is $\#\mathbf{P}$-complete.

%In the same paper~\cite{BGMMPV22},  Bhattacharyya \emph{et. al.} also proposed 
%
The hardness result motivates the study of approximation algorithms for the total variation distance between two product distributions. 
The algorithm is required to estimate the TV distance within $\varepsilon$ \emph{relative error}.
Previous works mainly focused on randomized approximation algorithms, which find a good estimation with probability at least $1 - \delta$.
The first polynomial-time randomized approximation algorithm was given in the preprint version of~\cite{BGMMPV22}, but it only works for a restricted class of product distributions.
Later on, Feng, Guo, Jerrum and Wang~\cite{FGJW23} gave a randomized algorithm for general product distributions. The algorithm runs in time $O(\frac{qn^2}{\varepsilon^2}\log \frac{1}{\delta})$.

All algorithms mentioned above are based on the Monte Carlo method. One natural question is how to design deterministic approximation algorithms.
Very recently, in the conference version\footnote{Link to the paper: \url{https://www.cs.toronto.edu/~meel/Papers/ijcai23-bggmpv.pdf}} of~\cite{BGMMPV22}, authors strengthened their randomized algorithm into a deterministic one.
However, the algorithm still requires some restrictions on the input distributions, i.e., the domain needs to be Boolean and $\dQ$ has a constant number of distinct marginals (e.g., $\dQ$ can be the uniform distribution).
The major open question is to give a deterministic approximation algorithm that works for \emph{general} product distributions.
We answered this question affirmatively.

\begin{quote}
	\textbf{Theorem~\ref{thm:product}} (TV distance between product distributions)
	\it
	There exists a deterministic algorithm such that given two product distributions $\dP,\dQ$ over $[q]^n$ and $\varepsilon > 0$, it outputs $\widehat{\Delta}$ satisfying $(1-\varepsilon)\TVD(\dP, \dQ) \leq \widehat{\Delta} \leq \TVD(\dP, \dQ)$ in time $O\bigl( \frac{qn^2}{\varepsilon} \log q \log \frac{n}{\varepsilon \TVD(\dP,\dQ) } \bigr)$.
\end{quote}

In the above theorem, both $\dP$ and $\dQ$ are described by their marginal distributions and each marginal distribution is given in binary. 
We remark that the term $\log \frac{1}{\TVD(\dP,\dQ)}$ in running time is always a polynomial with respect to the input size. 
Moreover, our algorithm is faster for product distributions with larger total variation distances.
The approximated TV distance $\widehat{\Delta}$ returned by our algorithm is always a lower bound of $\TVD(\dP,\dQ)$.
%The output $\widehat{\Delta}$ is a lower bound of $\TVD(\dP,\dQ)$, and it approximates the true value within $\varepsilon$ relative error.

%which is also the number of bits to 
%The output $\widehat{\Delta}$ is a lower bound of $\TVD(\dP,\dQ)$, and it  
%and it approximates $\TVD(\dP,\dQ)$  within $\epsilon$ relative error.
%achieves the approximation
%The output $\widehat{\Delta}$ approximates $\TVD(\dP,\dQ)$ within $\epsilon$ relative error and it further satisfies $\widehat{\Delta} \leq \TVD(\dP,\dQ)$.

Our technique can be extended to more general high-dimensional distributions with ``limited'' dependency. 
One example is the distribution of the Markov chain trajectory.
Let $\dP$ be the joint distribution of a random vector $(X_1,X_2,\ldots,X_n) \in [q]^n$, where $X_1$ follows the initial distribution $\dP_1$ and each $X_i$ is generated based on a Markov kernel $\dP_{i|i-1}$ that specifies the distribution of $X_i$ conditional on $X_{i-1}$. 
We call $\dP$ the distribution of an $n$-step Markov chain\footnote{The Markov chain has the state space $[q]$ as each $X_i \in [q]$, but $\dP$ is a distribution over the sample space $[q]^n$.}.
% Given initial distributions and Markov kernels of two $n$-step Markov chains $\dP$ and $\dQ$, 
Given two $n$-step Markov chains $\dP$ and $\dQ$, 
we can also approximate their TV distance in polynomial time.
%Consider a joint distribution $(X_1,X_2,\ldots,X_n)$ 
\begin{quote}
	\textbf{Theorem~\ref{thm:markov}} (TV distance between $n$-step Markov chains)
	\it
	There exists a deterministic algorithm such that given two $n$-step Markov chains $\dP,\dQ$ in state space $[q]$ and $\varepsilon > 0$, it outputs $\widehat{\Delta}$ satisfying $(1-\varepsilon)\TVD(\dP, \dQ) \leq \widehat{\Delta} \leq \TVD(\dP, \dQ)$ in time $O\bigl( \frac{q^2n^2}{\varepsilon} \log q \log \frac{n}{\varepsilon \TVD(\dP,\dQ) } \bigr)$.
\end{quote}

We remark that our algorithm estimates the TV distance between the joint distributions of $(X_1,X_2,\ldots,X_n)$ and 
$(Y_1,Y_2,\ldots,Y_n)$ rather than the TV distance between the distributions of $X_n$ and $Y_n$.
The latter is trivial because the marginal distribution of $X_i$ (also $Y_i$) can be computed by simple matrix multiplications.
% Computing $\TVD(X_n,Y_n)$ can be solved by simple matrix multiplications, 
But estimating $\TVD(\dP, \dQ)$ is non-trivial because the sample space is exponentially large.
Our result is related to the problem of comparing two labelled Markov chains. See \Cref{sec-related} for a detailed discussion.

\subsection{Technical Overview}
\label{sec:tech-overview}
\label{sec:overview}

Let $\dP,\dQ$ be two distributions over a sample space $\Omega$,
% In the literature of binary hypothesis testing,
their total variation distance is well known to equal the maximal \emph{advantage} of distinguishing between $\dP$ and $\dQ$.
Say a distinguisher is given a sample $x$ from either $\dP$ or $\dQ$
and is asked to guess which distribution $x$ is sampled from.
The distinguisher $D$ can be formalized as a randomized algorithm (or a Markov kernel) from $\Omega$ to $\bit$.
The probability that the distinguisher $D$ outputs 0 under hypothesis $\dP$ (resp.~$\dQ$) 
can be written as $\Pr_{X\sim \dP} [D(X) \to 0]$ (resp.~$\Pr_{X\sim \dQ} [D(X) \to 0]$).
The difference between them is called the advantage.
The maximal advantage equals the total variation distance
\[ 
  \TVD(\dP,\dQ) 
  = \max_{D} \Bigl( \Pr_{X\sim \dP} [D(X) \to 0] - \Pr_{X\sim \dQ} [D(X) \to 0] \Bigr) \,,
\]
and the optimal distinguisher who maximizes the advantage, 
is the \emph{likelihood ratio test}
\[
  D(x)
  = \begin{cases}
    0, &\text{ if } \frac{\dP(x)}{\dQ(x)} > 1 \\
    1, &\text{ if } \frac{\dP(x)}{\dQ(x)} \leq 1 
  \end{cases}
\]
where $\frac{\dP(x)}{\dQ(x)}$ is called the likelihood ratio.
The likelihood ratio $\frac{\dP(x)}{\dQ(x)}$ is an important tool in information theory~\cite{PW22}.
It is particularly useful when $x$ is sampled from $\dP$ or from $\dQ$.

% Given two distributions $\dP,\dQ$ over a sample space $\Omega$,
% people call $\frac{\dP(x)}{\dQ(x)}$ the \emph{likelihood ratio} for each $x\in\Omega$.
In this paper, we formally define the likelihood ratio as a distribution, denoted by $(\dP\|\dQ)$,
such that $(\dP\|\dQ)(r) \deq \Pr_{X\sim\dQ}[\frac{\dP(X)}{\dQ(X)} = r]$.
In other words, $(\dP\|\dQ)$ is the distribution of $\frac{\dP(X)}{\dQ(X)}$ where $X\sim \dQ$.

As we will discussed in Section~\ref{sec:ratio}, 
the likelihood ratio distribution (or ``ratio'' in short) $\dR = (\dP\|\dQ)$ contains all the ``useful'' information about $(\dP,\dQ)$.
For example, $\TVD(\dP,\dQ) = \Ex_{R\sim\dR} \max(1-R, 0)$,
so we can denote the distance by $\TVD(\dR)$.
% Based on this observation, we can define a convenient notation $\TVD(\dR) \deq \TVD(\dP,\dQ)$.
%
For the task of computing the total variation distance, it suffices to compute $\dR = (\dP\|\dQ)$.

Given two product distributions $\dP = \dP_1 \dP_2 \dots \dP_n$ and $\dQ = \dQ_1 \dQ_2 \dots \dQ_n$,
their ratio $(\dP\|\dQ)$
\[
% \begin{aligned}  
  (\dP\|\dQ)(r)
  = \Pr_{(X_1,\dots,X_n)\sim\dQ} \Bigl[ \frac{\dP(X_1,\dots,X_n)}{\dQ(X_1,\dots,X_n)} = r \Bigr] 
  = \Pr_{\substack{X_1\sim\dQ_1\\\cdots\\X_n\sim\dQ_n}} \Bigl[ \underbrace{\frac{\dP_1(X_1)}{\dQ_1(X_1)}}_{R_1} \dots \underbrace{\frac{\dP_n(X_n)}{\dQ_n(X_n)}}_{R_n} = r \Bigr] 
  % = \Pr_{\substack{R_1\sim\dR_1\\\cdots\\R_n\sim\dR_n}} \Bigl[ R_1 R_2 \dots R_n = r \Bigr]
% \end{aligned}
\] 
is the distribution of $R_1R_2\dots R_n$,
where $R_1,\dots,R_n$ are independent and $R_i \sim \dR_i = (\dP_i\|\dQ_i)$.
This suggests a na\"ive algorithm to compute the total variation distance.
\begin{itemize}
\setlength\itemsep{0pt}
\setlength{\parskip}{2pt}
  \item 
    Compute $\dR_i = (\dP_i\|\dQ_i)$ for each $i\in[n]$.
  \item
    Compute $\dR = \dR_1 \indpprod \dR_2 \indpprod \cdots\dR_n$.
  \item 
    Output $\TVD(\dR)$.
\end{itemize}
Here $\dR_1 \indpprod \dR_2$ denotes the distribution of $R_1R_2$ where $R_1,R_2$ are independently sampled from $\dR_1,\dR_2$.
The na\"ive algorithm computes the \emph{exact} value of total variation distance, but has exponential time and space complexity,
because the support of $\dR$ can be exponentially large.

To contain the complexity blow-up, 
our actual algorithm computes an approximation of $\dR$.
The high-level framework of our algorithm looks as follows.
\begin{itemize}
\setlength\itemsep{0pt}
\setlength{\parskip}{2pt}
  \item 
    Compute $\dR_i = (\dP_i\|\dQ_i)$ for each $i\in[n]$.
  \item
    Compute $\dR_1\indpprod\dR_2$.
    Then sparsify it as $\tilde\dR_{1:2} \approx \dR_1\indpprod\dR_2$.
  \item
    Compute $\tilde\dR_{1:2}\indpprod\dR_3$.
    Then sparsify it as $\tilde\dR_{1:3} \approx \tilde\dR_{1:2}\indpprod\dR_3$.
  \item[]\phantom{oo}\smash{$\vdots$}
  \item
    Compute $\tilde\dR_{1:n-1}\indpprod\dR_n$.
    Then sparsify it as $\tilde\dR_{1:n} \approx \tilde\dR_{1:n-1}\indpprod\dR_n$.
  \item 
    Output $\TVD(\dR_{1:n})$ as an approximation of $\TVD(\dP,\dQ)$.
\end{itemize}
In our algorithm, the \emph{sparsification} subroutine plays a central role:
a given complicated ratio is sparsified into a simpler one that is ``close to'' the given ratio.
The closeness is formalized as a new metric.
For the correctness of our algorithm,
the metric should satisfy a few properties:
i)~triangle inequality, if $\dR\approx \dR'$ and $\dR'\approx\dR''$, then $\dR\approx\dR''$;
ii)~if $\dR\approx \tilde\dR$, then $\TVD(\dR) \approx \TVD(\tilde\dR)$;
iii)~if $\dR \approx \tilde\dR$, then $\dR\indpprod\dR' \approx \tilde\dR\indpprod\dR'$.
Then correctness is guaranteed as follows: 
% $\tilde\dR_{1:i+1} \approx \tilde\dR_{1:i}\indpprod\dR_{i+1}$ implies
\[ 
  \tilde\dR_{1:i+1} \approx \tilde\dR_{1:i}\indpprod\dR_{i+1}
  \implies
  \tilde\dR_{1:i+1} \indpprod\dR_{i+2} \indpprod\cdots\dR_{n} \approx \tilde\dR_{1:i}\indpprod\dR_{i+1} \indpprod\dR_{i+2} \indpprod\cdots\dR_{n} \,;
\]
then by triangle inequality, $\tilde\dR_{1:n} \approx \dR_1 \indpprod \dR_2 \indpprod \cdots\dR_n = \dR$;
therefore $\TVD(\dR) \approx \TVD(\tilde\dR)$.

We say two ratios $\dR,\dR'$ are close,
if there exist distributions $\dP,\dQ,\dP',\dQ'$ satisfying $(\dP\|\dQ)=\dR$, $(\dP'\|\dQ')=\dR'$ such that $\dP$ (resp.~$\dQ$) is close to $\dP'$ (resp~$\dQ'$) with respect to the TV distance.
In such case, $\TVD(\dR)$ ($= \TVD(\dP,\dQ)$) must be close to $\TVD(\dR')$ ($= \TVD(\dP',\dQ')$) due to the triangle inequality of the TV distance.
This inspires us to consider the \emph{minimum total variation distance}, 
\[
  \MTVD(\dR,\dR')
  = \min_{\substack{(\dP\|\dQ)=\dR \\ (\dP'\|\dQ')=\dR'}} \max\bigl( \TVD(\dP,\dP'), \TVD(\dQ,\dQ') \bigr) \,.
\]
In Section~\ref{sec:MTVD}, we show that $\MTVD$ is a metric between ratios.

In the rest of this overview, 
we briefly describe how the sparsification subroutine works,
and why the sparsified ratio is close to the original ratio with respect to the minimum total variation distance.

% \tianren{$\downarrow$ replace $p_i$ by $q_i$?}
% \tianren{$q$ have been used to denote the domain size}

The sparsification subroutine takes a ratio $\dR$ as the input.
The ratio $\dR$ can be represented by a table of entries $(r_1,p_1),(r_2,p_2),\dots$,
each entry $(r_i,p_i)$ represents $\dR(r_i) = p_i$.
If the table has too many entries, then some of them must be close together.
Say $(r_1,p_1), \ldots, (r_k,p_k)$ are very close together in the sense that %$r_1\approx r_2\approx r_3$.
% W.l.o.g., assume that 
$r_1,\dots,r_k \in [a,b)$, where $[a,b)$ is a sufficiently narrow interval.
W.l.o.g., assume $[a,b) \subseteq [0,1]$, and the case inside $[1,\infty]$ is symmetric.
The sparsification subroutine ``merges'' these $k$ entries into one entry.
% into one entry $(\frac{r_1p_1+r_2p_2+r_3p_3}{p_1+p_2+p_3}, p_1+p_2+p_3)$.
The sparsified ratio $\tilde\dR$ is represented by table $(\frac{\sum_{i\in[k]} r_ip_i}{\sum_{i\in[k]}p_i}, \sum_{i\in[k]}p_i), (r_{k+1},p_{k+1}),\allowbreak (r_{k+2},p_{k+2}), \dots$
% \[
%   \Bigl(\frac{r_1p_1+r_2p_2+r_3p_3}{p_1+p_2+p_3}, p_1+p_2+p_3\Bigr), (r_4,p_4), (r_5,p_5), \dots
% \]
We claim that $\tilde\dR$ is close to $\dR$ with respect to the minimum total variation distance $\MTVD$.

Define two distributions $\dP,\dQ$ over sample space $\Omega = \{\omega_0,\omega_1,\omega_2,\dots\}$ as
\[
  \dP(\omega_i) = r_i p_i, \quad
  \dQ(\omega_i) = p_i, \quad
  \dP(\omega_0) = 1-\Ex_{R\sim\dR}[R], \quad
  \dQ(\omega_0) = 0,
\] 
so that $(\dP\|\dQ) = \dR$.
Consider distribution $\tilde\dP$ that is the same as $\dP$ except 
\[
  \tilde\dP(\omega_i) = r^* p_i
  \text{ for $i\in[k]$, where } r^* \deq \frac{\sum_{i\in[k]} r_ip_i}{\sum_{i\in[k]}p_i} \,.
\]
Then $\tilde\dR = (\tilde\dP \| \dQ)$, this also ensures that $\tilde\dR$ is a valid ratio.
By definition, $\MTVD(\dR,\tilde\dR) \leq \TVD(\dP,\tilde\dP) \leq \sum_{i\in[k]} |r_i-r^*|\, p_i$.
Since $r_1,\dots,r_k,r^* \in [a,b]$, 
we have $|r_i-r^*|$ being small for all $i\in[k]$.
Quantitatively, say interval $[a,b]$ is sufficiently narrow means $\frac{1-a}{1-b} \leq 1+\varepsilon$.
Thus we have $|r_i-r^*| \leq \varepsilon (1-r_i)$ for all $i\in[k]$,
and the error introduced by merging is bounded by $\MTVD(\dR,\tilde\dR) \leq \sum_{i\in[k]} \varepsilon (1-r_i)p_i$.

The actual sparsification subroutine picks a collection of disjointed sufficiently narrow intervals\footnote{%
The last interval is \emph{not} sufficiently narrow because its right endpoint is 1,
and has to be analyzed separately.}
$[a_1,b_1),[a_2,b_2),\dots,[a_m,b_m)$ that jointly cover $[0,1)$,
then merges all the entries that lie in the same interval.
The total error introduced by merging is bounded by 
\[
  \sum_{j=1}^{m-1}
  \sum_{i \text{ s.t.~} r_i \in [a_j,b_j]} \varepsilon (1-r_i)p_i
  \leq
  \sum_{i \text{ s.t.~} r_i < 1} \varepsilon (1-r_i)p_i
  = 
  \varepsilon \Ex_{R\sim\dR} [\max(1-R,0)]
  =
  \varepsilon \TVD(\dR) \,.
\]
Symmetrically, the error introduced by merging within $(1,\infty]$ is also bounded by $\varepsilon \TVD(\dR)$.

\subsection{Related Works and Open Problems}\label{sec-related}

The TV distance between two labelled Markov chains (LMC, a.k.a. hidden Markov chain) has been studied by previous works~\cite{LP02,CMR07,Kiefer18}. It was proved that for LMCs, computing TV distance within $\varepsilon$-additive error, where $\varepsilon > 0$ is given in binary, is \#\textbf{P}-hard~\cite{Kiefer18}.
The Markov chains we studied in \Cref{thm-main-2} is \emph{not hidden}, which is equivalent (by a polynomial-time reduction) to the \emph{deterministic acyclic LMCs} in \cite{Kiefer18}.
Kiefer~\cite[Theorem 10]{Kiefer18} gave a randomized algorithm that approximates the TV distance between  (not necessarily deterministic) acyclic LCMs with \emph{additive-error} $\epsilon$, where the algorithm succeeds with probability at least $1-\delta$ and the running time is $\mathrm{poly}(\frac{1}{\epsilon},\log \frac{1}{\delta}, \text{input size})$.
Compared with our result, Kiefer's algorithm works for more general distributions but for deterministic acyclic LMCs, our algorithm is deterministic and achieves stronger relative-error approximation.
%Our algorithm approximates TV distance within relative error instead of additive error. 

%For this special class of LCMs, our algorithm achieves better accuracy than the algorithm in \cite[Theorem 10]{Kiefer18}.
%However, the algorithmic result in~\cite{Kiefer18} works for general (not necessarily deterministic) acyclic LCMs.

Some works~\cite{CDKS18,BGMV20} studied the problem of computing TV distances for structured high-dimensional distributions, e.g. Bayesian networks, Ising models and multivariate Gaussian distributions. 
%
%Some approximation algorithms were discovered but they are randomized and only guarantee the additive approximation error.
%
Some randomized approximation algorithms were discovered, but they only achieved an additive-error approximation. %which is weaker than the relative-error approximation.
One open problem is to find efficient deterministic approximation (with additive or relative error) algorithms for those distributions.
Our technique relies on a strong conditional independence property of the distribution.
We wonder whether one can relax this restriction and make our technique work for more general graphical models.

Our technique for approximating the TV distance is different from the previous ones.
Previous randomized algorithms~\cite{Kiefer18,CDKS18,BGMV20,FGJW23} are all based on the Monte Carlo method.
For the deterministic approximation of the TV distance between two product distributions, the algorithm in~\cite{BGMMPV22} first reduces the problem to the approximate counting of knapsack solutions (\#Knapsack) and then solves the \#Knapsack by existing deterministic approximation algorithms~\cite{gopalan2010polynomial,StefankovicVV12}.
We introduce a new sparsification technique together with a new metric called minimum total variation distance, which is of independent interest.

%For approximating the TV distance between product distributions, our technique is very different from the previous ones. 
%
%The algorithm in \cite{FGJW23} utilized the \emph{coupling technique} to design an unbiased estimator and use the Monte Carlo method to approximate the TV distance. .......
%
%Both randomized and deterministic algorithms in \cite{BGMMPV22} first reduce the original problem to the problem of approximating the number of knapsack solutions 

%Previous randomized algorithms~\cite{Kiefer18,CDKS18,BGMV20,FGJW23} all use the Monte Carlo method and the algorithm in \cite{FGJW23} further utilise the \emph{coupling technique} to approximate the TV distance between production

%Another related problem is to approximate the TV-distance with additive error.
%Although the problem is hard for general distributions~\cite{SahaiV03},  there are some randomized algorithms for special classes of distributions including Bayesian networks, undirected graphical models and multivariate Gaussian distributions~\cite{CDKS18,BGMV20}.
%
 %even for the addition approximation error.
%

%{\color{blue}Say something about testing and distinguishing ? seems related to our technique}

\section{Notations}

% no need to polish this section at the moment
We use $[n]$ to denote the set $\{1,2,\ldots,n\}$ for any positive integer $n$.
We use $\1[A] \in \{0,1\}$ to indicate whether the condition (or event) $A$ holds.

We use $\infty$ to denote the infinity point, which is an extended real number.
For any $a\in(0,\infty)$, let ${a}/0 = \infty$, $a/\infty = 0$ and $a<\infty$.

This paper only considers \emph{discrete} distributions.
Distributions are denoted by calligraphic letters (e.g.,~$\dP,\dQ$).
A {discrete} distribution $\dP$ over a sample space $\Omega$ can be defined by its probability mass function $\dP: \Omega \to [0,1]$.
For each $x\in\Omega$, the value $\dP(x)$ is the probability that $x$ is sampled from $\dP$.
We use $\supp(\dP)$ to denote the \emph{support} of $\dP$, which refers to $\{x \in \Omega \mid \dP(x) > 0\}$.
Let $X \sim \dP$ denote that \emph{random variable} $X$ follows distribution $\dP$.
For each subset $S \subseteq \Omega$, we use the conventional notation $\dP(S)$ to denote the probability that a sampling from $\dP$ falls in set $S$.  
That is, $\dP(S) \deq \Pr_{X\gets \dP} [X \in S]$.

For any two distributions $\dP,\dQ$, 
let $\dP \dQ$ (or $\dP \times \dQ$) denotes the distribution of $(X, Y)$,
let $\dP \indpprod \dQ$ denotes the distribution of $X\cdot Y$,
where $X,Y$ are independent random variables satisfying $\dP,\dQ$ respectively.

A Markov kernel $\kappa$ from sample space $\Omega$ to sample space $\Omega'$ is a function $\kappa:\Omega'\times\Omega\to[0,1]$,
such that for any $x\in\Omega$, function $y\mapsto \kappa( y | x)$ is a distribution over $\Omega'$.
For any distribution $\dP$ over $\Omega$,
% let $(\dP,\kappa)$ denote the distribution over $\Omega\times\Omega'$ that $(\dP,\kappa)(x,y) = \dP(x) \kappa(y|x)$;
let $\kappa\dP$ denote the distribution over $\Omega'$ that $(\kappa\dP)(y) = \sum_{x\in\Omega} \dP(x) \kappa(y|x)$.
Their semidirect product, denoted by $\dP\kappa$ or $\dP\times\kappa$,
is a distribution over $\Omega\times\Omega'$ that $(\dP\kappa)(x,y) = \dP(x) \kappa(y|x)$.

\section{Likelihood Ratio as a Distribution}
\label{sec:ratio}

Given two distributions $\dP,\dQ$ over a sample space $\Omega$,
people call $\frac{\dP(x)}{\dQ(x)}$ the \emph{likelihood ratio} for $x\in\Omega$.
We introduce a notation for the distribution of the likelihood ratio.

\begin{definition}[Ratio]\label{def:ratio}
  Let $\dP,\dQ$ be two discrete distributions over a sample space $\Omega$.
  The likelihood ratio distribution (or \emph{ratio}, in short) between $\dP,\dQ$, denoted by $(\dP\|\dQ)$, is a distribution over $[0,\infty)$.
  % In the paper, we typically define $\dR \deq (\dP\|\dQ)$
  % such that
  % Such that $R = (P\|Q)$ if and only if for all $r\in[0,\infty)$
  \[
    (\dP\|\dQ)(r) \deq \Pr_{X\sim \dQ} \Bigl[ \frac{\dP(X)}{\dQ(X)} = r \Bigr].
  \]
\end{definition}

Equivalently, the distribution $(\dP\|\dQ)$ can be defined as follows: 
sample random variable $X \sim \dQ$,
define $(\dP\|\dQ)$ as the distribution of $\frac{\dP(X)}{\dQ(X)}$.
Note that the denominator is always non-zero because $X$ is sampled from $\dQ$.
In the paper, we typically denote the ratio distribution by $\dR = (\dP\|\dQ)$.

Not every discrete distribution $\dR$ over $[0,\infty)$ is the ratio between some two discrete distributions.
We say $\dR$ is a \emph{valid ratio} if there exist two discrete distributions $\dP,\dQ$ such that $\dR = (\dP \| \dQ)$.

\begin{proposition}%[Condition of Valid Ratio Random Variable]
\label{prop-valid}
A discrete distribution $\dR$ over $[0,\infty)$ is a valid ratio if and only if $\Ex_{R\sim \dR}[R] \leq 1$.
\end{proposition}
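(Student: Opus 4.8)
The plan is to prove both directions of the equivalence. For the forward direction, suppose $\dR = (\dP\|\dQ)$ for some discrete distributions $\dP,\dQ$. Then by definition $\Ex_{R\sim\dR}[R] = \Ex_{X\sim\dQ}\bigl[\frac{\dP(X)}{\dQ(X)}\bigr] = \sum_{x\in\supp(\dQ)} \dQ(x)\cdot\frac{\dP(x)}{\dQ(x)} = \sum_{x\in\supp(\dQ)} \dP(x) \le \sum_{x\in\Omega}\dP(x) = 1$. This is just a one-line computation; the only subtlety is to be careful that the sum is over $\supp(\dQ)$ (so that dividing by $\dQ(x)$ is legitimate), and that we may lose mass exactly when $\dP$ puts weight outside $\supp(\dQ)$.

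For the converse, assume $\dR$ is a discrete distribution over $[0,\infty)$ with $\Ex_{R\sim\dR}[R] \le 1$; I must exhibit $\dP,\dQ$ with $(\dP\|\dQ)=\dR$. The natural construction mirrors the one already used in the technical overview for the sparsification argument: enumerate $\supp(\dR) = \{r_1, r_2, \dots\}$ with $\dR(r_i) = p_i$, take a sample space $\Omega = \{\omega_0, \omega_1, \omega_2, \dots\}$, and set $\dQ(\omega_i) = p_i$ for $i \ge 1$, $\dQ(\omega_0) = 0$, while $\dP(\omega_i) = r_i p_i$ for $i \ge 1$ and $\dP(\omega_0) = 1 - \Ex_{R\sim\dR}[R] = 1 - \sum_i r_i p_i$. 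The hypothesis $\Ex_{R\sim\dR}[R]\le 1$ is exactly what makes $\dP(\omega_0) \ge 0$, so $\dP$ is a valid distribution (its masses are nonnegative and sum to $1$); $\dQ$ is clearly a valid distribution since the $p_i$ sum to $1$. Then for each $i\ge 1$, $\frac{\dP(\omega_i)}{\dQ(\omega_i)} = r_i$, so under $X\sim\dQ$ the random variable $\frac{\dP(X)}{\dQ(X)}$ takes value $r_i$ with probability $p_i$; since $\dQ(\omega_0)=0$, the point $\omega_0$ contributes nothing, and hence $(\dP\|\dQ) = \dR$.

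One point that needs a small remark: if $0 \in \supp(\dR)$, i.e.\ some $r_i = 0$, the construction still works — then $\dP(\omega_i) = 0$ while $\dQ(\omega_i) = p_i > 0$, and the likelihood ratio at $\omega_i$ is genuinely $0$, consistent with Definition~\ref{def:ratio}. There is no division-by-zero issue because we only ever divide by $\dQ(\omega_i)$, which is positive for $i\ge 1$. I should also note the degenerate possibility that $\supp(\dR)$ is empty — but that cannot happen, as $\dR$ is a distribution.

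I do not expect any real obstacle here; the statement is essentially a characterization whose ``hard'' direction is a direct construction, and the construction is already foreshadowed in the overview. The only thing to be careful about in the writeup is bookkeeping: making sure the proposed $\dP$ and $\dQ$ are genuinely probability distributions (nonnegativity of $\dP(\omega_0)$ is where the hypothesis is used, and normalization of both is immediate), and verifying the ratio identity handles the auxiliary point $\omega_0$ correctly.
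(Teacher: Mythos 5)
Your proof is correct and follows essentially the same route as the paper: the forward direction is the same one-line computation, and your construction of $(\dP,\dQ)$ is just a relabelled version of the paper's canonical pair $(\dR^\dagger,\dR)$, where your auxiliary point $\omega_0$ plays the role of the paper's mass $1-\Ex_{R\sim\dR}[R]$ placed at $\infty$.
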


\begin{proof}%[Proof of \Cref{prop-valid}]
Suppose $\dR = (\dP\|\dQ)$,
then
$\Ex_{R\sim \dR}[R]
= \sum_{x \in \supp(\dQ)} \dQ(x) \frac{\dP(x)}{\dQ(x)} 
= \dP( \supp(\dQ) )
\leq 1$.

On the other hand, 
if $\Ex_{R\sim\dR}[R] \leq 1$,
define distribution $\dR^\dagger$ over $[0,\infty] = [0,\infty) \cup \{\infty\}$ as
\begin{equation}
\label{eq:canonical}
  \dR^\dagger(r) = \begin{cases}
    r \cdot \dR(r), &\text{ if } r \in [0,\infty) \\
    1 - \Ex_{R\sim\dR}[R], &\text{ if } r = \infty.
  \end{cases} 
\end{equation}
It is easy to verify that $\dR^\dagger$ is a distribution and $\dR = (\dR^\dagger\|\dR)$.
\end{proof}

\begin{definition}[Canonical Pair]\label{def-can-pair}
For any valid ratio $\dR$, 
define its alternative ratio, denoted by $\dR^\dagger$, as the distribution in~\eqref{eq:canonical}. 
We call $(\dR^\dagger,\dR)$ the \emph{canonical pair} of $\dR$.
\end{definition}

We can define another natural distribution of likelihood ratio $(\dP\|\dQ)^\dagger$
% when $X$ is sampled from $\dP$.
% Define distribution $(\dP\|\dQ)^\dagger$ as 
\[
  (\dP\|\dQ)^\dagger(r) \deq \Pr_{X\sim \dP} \Bigl[ \frac{\dP(X)}{\dQ(X)} = r \Bigr].
\]
Its sample space has to be extended to $[0,\infty]$, because the denominator can be zero.
It is easy to verify that $\dR = (\dP\|\dQ)$ implies $\dR^\dagger = (\dP\|\dQ)^\dagger$.

\begin{proposition}\label{prop:product}\label{prop-product}
  $(\dP_1 \dP_2\|\dQ_1 \dQ_2) = (\dP_1\|\dQ_1) \indpprod (\dP_2\|\dQ_2)$
  for any discrete distributions $\dP_1,\dQ_1,\dP_2,\dQ_2$.
\end{proposition}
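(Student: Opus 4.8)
The plan is to unfold both sides of the claimed identity directly from \Cref{def:ratio} and the definition of the independent product $\indpprod$, and check they agree as distributions over $[0,\infty)$. Concretely, let $\dR_1 = (\dP_1\|\dQ_1)$ and $\dR_2 = (\dP_2\|\dQ_2)$. By definition, $\dR_1 \indpprod \dR_2$ is the distribution of $R_1 R_2$ where $R_1 \sim \dR_1$ and $R_2 \sim \dR_2$ are independent; and $\dR_i$ is the distribution of $\frac{\dP_i(X_i)}{\dQ_i(X_i)}$ for $X_i \sim \dQ_i$. Since $X_1, X_2$ can be taken independent, the pair $(R_1, R_2) = \bigl(\frac{\dP_1(X_1)}{\dQ_1(X_1)}, \frac{\dP_2(X_2)}{\dQ_2(X_2)}\bigr)$ has the right joint law, so $\dR_1 \indpprod \dR_2$ is the distribution of $\frac{\dP_1(X_1)\dP_2(X_2)}{\dQ_1(X_1)\dQ_2(X_2)}$ with $(X_1,X_2) \sim \dQ_1 \dQ_2$. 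On the other hand, $(\dP_1\dP_2\|\dQ_1\dQ_2)$ is by definition the distribution of $\frac{(\dP_1\dP_2)(X_1,X_2)}{(\dQ_1\dQ_2)(X_1,X_2)}$ where $(X_1,X_2) \sim \dQ_1\dQ_2$, and since product distributions factor as $(\dP_1\dP_2)(x_1,x_2) = \dP_1(x_1)\dP_2(x_2)$ and likewise for $\dQ_1\dQ_2$, this ratio equals $\frac{\dP_1(X_1)\dP_2(X_2)}{\dQ_1(X_1)\dQ_2(X_2)}$. The two distributions are thus laws of the same random variable, hence equal.

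To present this cleanly at the level of mass functions, I would write, for any $r \in [0,\infty)$,
\[
  (\dP_1\dP_2\|\dQ_1\dQ_2)(r)
  = \sum_{\substack{(x_1,x_2) \in \supp(\dQ_1)\times\supp(\dQ_2)}} \dQ_1(x_1)\dQ_2(x_2)\, \1\Bigl[ \tfrac{\dP_1(x_1)}{\dQ_1(x_1)}\cdot\tfrac{\dP_2(x_2)}{\dQ_2(x_2)} = r \Bigr],
\]
and group the sum according to the values $r_1 = \frac{\dP_1(x_1)}{\dQ_1(x_1)}$ and $r_2 = \frac{\dP_2(x_2)}{\dQ_2(x_2)}$: the inner double sum over $x_1$ with $\frac{\dP_1(x_1)}{\dQ_1(x_1)} = r_1$ contributes exactly $\dR_1(r_1)$, and similarly for $x_2$, so the whole expression becomes $\sum_{r_1 r_2 = r} \dR_1(r_1)\dR_2(r_2)$, which is precisely $(\dR_1 \indpprod \dR_2)(r)$ by the definition of $\indpprod$ given in the Notations section.

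One small point worth flagging: the denominators are never zero, because the sums run over $\supp(\dQ_1)\times\supp(\dQ_2) = \supp(\dQ_1\dQ_2)$, so every likelihood ratio appearing is a well-defined element of $[0,\infty)$ and no $\infty$ values arise; this matches the sample space $[0,\infty)$ declared for the ratio in \Cref{def:ratio}. The general $n$-fold statement used in the technical overview follows by a trivial induction, applying the two-variable case with $\dP_1 \mapsto \dP_1\cdots\dP_{n-1}$, $\dP_2 \mapsto \dP_n$, and noting associativity of $\indpprod$; but for the proposition itself only the base case is needed. I do not anticipate a genuine obstacle here — the only thing to be careful about is bookkeeping the grouping of the sum so that the independence (product structure on both $\dP$ and $\dQ$) is used exactly once and the definition of $\indpprod$ is invoked correctly.
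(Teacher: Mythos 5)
Your argument is correct and is essentially the same as the paper's proof: both define $R_i = \frac{\dP_i(X_i)}{\dQ_i(X_i)}$ for independent $X_i \sim \dQ_i$, note that $R_1,R_2$ are independent, and identify $R_1R_2$ with the likelihood ratio of the product distributions; your mass-function computation is just a more explicit rendering of that same step.
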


\begin{proof}
  Let $(X_1,X_2) \sim \dQ_1\dQ_2$. 
  Define $R_i \deq \frac{\dP_i(X_i)}{\dQ_i(X_i)}$, then $R_i \sim (\dP_i \| \dQ_i)$.
  Since $X_1,X_2$ are independent, $R_1,R_2$ are also independent.
  Define $R \deq R_1R_2 = \frac{(\dP_1\dP_2)(X_1,X_2)}{(\dQ_1\dQ_2)(X_1,X_2)}$, then $R \sim (\dP_1 \dP_2\|\dQ_1 \dQ_2)$.
\end{proof}

The likelihood ratio is a well-studied concept.
Almost all statistical distances and divergences between $\dP$ and $\dQ$ can be derived from their ratio $\dR = (\dP\|\dQ)$.
The derivation of the total variation distance is shown in the following proposition.
The cases of other distances and divergences can be found in most information theory textbooks (e.g.~\cite{PW22}).

\begin{proposition}%[Ratio Random Variable and TV Distance]\label{prop-ratio-tvd}
  Let $\dR = (\dP\|\dQ)$, then $\TVD(\dP,\dQ) = \TVD(\dR)$, where
  \[
    \TVD(\dR) \deq  \Ex_{R\sim\dR}\bigl[ (1-R)\cdot \1[R < 1] \bigr] =\Ex_{R\sim\dR}\bigl[ \max(1-R,0)\bigr].
  \]
\end{proposition}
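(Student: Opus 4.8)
The plan is to unwind the definition of $\TVD(\dR)$ for the specific ratio $\dR = (\dP\|\dQ)$ and match it against the standard formula $\TVD(\dP,\dQ) = \frac12 \sum_{x} |\dP(x) - \dQ(x)|$. First I would rewrite the total variation distance in its one-sided form: it is a standard fact (following from $\sum_x (\dP(x) - \dQ(x)) = 0$) that $\TVD(\dP,\dQ) = \sum_{x : \dP(x) < \dQ(x)} (\dQ(x) - \dP(x))$. I would state this explicitly or re-derive it in one line, since it is the crucial reformulation that makes the ratio picture work.

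Next I would compute $\TVD(\dR) = \Ex_{R\sim\dR}[\max(1-R, 0)]$ by pushing the expectation through \Cref{def:ratio}. Since $\dR$ is the distribution of $\frac{\dP(X)}{\dQ(X)}$ for $X \sim \dQ$, we have
\[
  \Ex_{R\sim\dR}\bigl[\max(1-R,0)\bigr]
  = \sum_{x \in \supp(\dQ)} \dQ(x) \max\Bigl(1 - \tfrac{\dP(x)}{\dQ(x)},\, 0\Bigr)
  = \sum_{x \in \supp(\dQ)} \max\bigl(\dQ(x) - \dP(x),\, 0\bigr).
\]
The summand $\max(\dQ(x) - \dP(x), 0)$ is nonzero exactly when $\dP(x) < \dQ(x)$, and in that case it equals $\dQ(x) - \dP(x)$; moreover any $x$ with $\dP(x) < \dQ(x)$ automatically lies in $\supp(\dQ)$, so restricting the sum to $\supp(\dQ)$ loses nothing. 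Hence the right-hand side equals $\sum_{x : \dP(x) < \dQ(x)} (\dQ(x) - \dP(x)) = \TVD(\dP,\dQ)$, which completes the argument. I should also note that the two displayed expressions for $\TVD(\dR)$ in the statement agree trivially, since $(1-R)\1[R<1] = \max(1-R,0)$ pointwise.

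The only point requiring a little care — and what I expect to be the mildest obstacle — is the bookkeeping around $\supp(\dQ)$: one must check that ignoring the points $x \notin \supp(\dQ)$ (where the likelihood ratio is $\dP(x)/0 = \infty$ and thus contributes $0$ to both sides) does not drop any mass from either the $\TVD(\dR)$ side or the $\TVD(\dP,\dQ)$ side. This is immediate because $\dP(x) < \dQ(x)$ forces $\dQ(x) > 0$, but it is worth a sentence. Everything else is a direct substitution, so the proof is short.
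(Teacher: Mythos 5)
Your proposal is correct and follows essentially the same route as the paper: both rewrite $\TVD(\dP,\dQ)$ in its one-sided form $\sum_{x:\dQ(x)>\dP(x)}(\dQ(x)-\dP(x))$ and then factor out $\dQ(x)$ to identify the sum with $\Ex_{R\sim\dR}[(1-R)\1[R<1]]$, merely traversing the equalities in the opposite direction. Your extra remark about points outside $\supp(\dQ)$ is a harmless bit of bookkeeping the paper leaves implicit.
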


\begin{proof}
  $\displaystyle
    \TVD(\dP,\dQ) 
    % &= \frac{1}{2} \sum_{x \in \Omega} \vert P(x) - Q(x) \vert 
    = \sum_{\substack{x \text{~s.t.}\\\mathclap{\dQ(x) > \dP(x)}}} \Bigl( \dQ(x) - \dP(x) \Bigr)
    = \sum_{\substack{x \text{~s.t.}\\\mathclap{\dQ(x) > \dP(x)}}} \dQ(x) \Bigl(  1-\frac{\dP(x)}{\dQ(x)} \Bigr)
    = \Ex_{R\sim \dR}\Bigl[ (1-R)\cdot \1[R < 1] \Bigr]. 
  $
  % \begin{equation*}
  %   \TVD(\dP,\dQ) 
  %   % &= \frac{1}{2} \sum_{x \in \Omega} \vert P(x) - Q(x) \vert 
  %   = \sum_{\substack{x \text{~s.t.}\\\mathclap{\dQ(x) > \dP(x)}}} \Bigl( \dQ(x) - \dP(x) \Bigr)
  %   = \sum_{\substack{x \text{~s.t.}\\\mathclap{\dQ(x) > \dP(x)}}} \dQ(x) \Bigl(  1-\frac{\dP(x)}{\dQ(x)} \Bigr)
  %   = \Ex_{R\sim \dR}\Bigl[ (1-R)\cdot \1[R < 1] \Bigr]. \qedhere
  % \end{equation*}
\end{proof}

It comes as no surprise how ratio can derive these distances and divergences. 
In some sense, ratio $\dR = (\dP\|\dQ)$ is a complete characterization of the decision problem between $\dP$ and $\dQ$.
% We provides two perspectives arguing the completeness.

\begin{description}
  \item[Sufficient Statistic.]
    Say random variable $X$ is sampled from either $\dP$ or $\dQ$.
    Whether $X$ is sampled from $\dP$ or $\dQ$ is the hidden parameter of the statistical model.
    Consider a statistic $R = \kappa(X)$ where $\kappa(x) \deq \frac{\dP(x)}{\dQ(x)}$.
    If the hidden parameter is $\dQ$, then $R\sim \dR$;
    otherwise, $R\sim \dR^\dagger$.
    That is, $\dR = \kappa\dQ$, $\dR^\dagger = \kappa\dP$.
    It is well known that $R$ is a \emph{sufficient statistic}
    containing all the ``useful'' information of $X$.
    There exists a Markov kernel $\kappa^{-1}$ who recovers 
    the entire information of $(\dP,\dQ)$ from $(\dR^\dagger,\dR)$,
    such that $\dQ = \kappa^{-1}\dR$, $\dP = \kappa^{-1}\dR^\dagger$.
    So the problem of distinguishing $\dP,\dQ$
    is \emph{equivalent} to the problem of distinguishing $\dR^\dagger,\dR$.

  \item[Equivalent Decision Problems.]
    In decision theory, 
    the equivalence relationship between decision problems are formalized.
    Two decision problems $(\dP,\dQ)$ and $(\dP',\dQ')$ are called equivalent, 
    if there exist Markov kernels $\kappa,\kappa^{-1}$ such that $\kappa\dP=\dP', \kappa\dQ=\dQ', \kappa^{-1}\dP'=\dP, \kappa^{-1}\dQ'=\dQ$~\cite{LeCam}.
    It is easy to verify that they are equivalent if and only if $(\dP\|\dQ) = (\dP'\|\dQ')$.
    If direction: let $\dR = (\dP\|\dQ)= (\dP'\|\dQ')$, then both $(\dP,\dQ)$ and $(\dP',\dQ')$ equivalent to the canonical pair $(\dR^\dagger,\dR)$.
    Only if direction is ensured by a new metric defined in Section~\ref{sec:MTVD}.
    Thus $\dR = (\dP\|\dQ)$ can represent the equivalence class where $(\dP,\dQ)$ is. 
    % The canonical pair $(\dR^\dagger,\dR)$ is also a natural representative of the equivalence class.
\end{description}

The above discussion inspires us to consider the ``data processing'' relation.
Say decision problem $(\dP,\dQ)$ is stronger than $(\dP',\dQ')$, denoted by $(\dP,\dQ) \geq (\dP',\dQ')$,
if and only if there exists a Markov kernel $\kappa$ such that $\dP' = \kappa\dQ$, $\dQ' = \kappa\dP$.
This relation is almost an order relation~\cite{LeCam},
satisfying reflexivity, transitivity and antisymmetry w.r.t.~the equivalence relation between decision problems
\[
  (\dP,\dQ) \geq (\dP',\dQ')
  % \text{ and }
  ~\wedge~
  (\dP',\dQ') \geq (\dP,\dQ)
  ~\text{ if and only if }~
  (\dP\| \dQ) = (\dP'\|\dQ') \,.
\]
Therefore, it induce an order relation between ratios. 

\begin{definition}\label{def:order}
  For any two ratios $\dR_1,\dR_2$,
  we say ``$\dR_1$ is stronger than $\dR_2$'',
  denoted by $\dR_1 \geq \dR_2$,
  or ``$\dR_2$ is weaker than $\dR_1$'',
  denoted by $\dR_2 \leq \dR_1$,
  if there exists distributions $\dP,\dQ$ and a Markov kernel $\kappa$
  such that $\dR_1 = (\dP\|\dQ)$ and $\dR_2 = (\kappa\dP \| \kappa\dQ)$.
\end{definition}

% As the notation suggested, it is a order relation (the proof is deferred to Section~\ref{}).
It captures the ``data processing'':
$\kappa$ is the process,
$\kappa\dP, \kappa\dQ$ are the post-processing distributions.
Therefore, by the data processing inequality for TV distance,
$\dR_1 \geq \dR_2 \implies \TVD(\dR_1) \geq \TVD(\dR_2)$.

% \begin{lemma}
%   For any valid ratios $\dR_1 \geq \dR_2$, 
%   it holds that $\TVD(\dR_1) \geq \TVD(\dR_2)$.
% \end{lemma}

\begin{proposition}\label{prop:product-order}
  For any ratios satisfying $\dR_1 \geq \dR_1'$, $\dR_2 \geq \dR_2'$,
  it holds that $\dR_1 \indpprod \dR_2  \geq \dR_1' \indpprod \dR_2'$.
\end{proposition}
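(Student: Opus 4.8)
The plan is to unwind both $\geq$-relations into explicit Markov kernels, take their product (in the semidirect/independent sense), and show the product kernel witnesses $\dR_1 \indpprod \dR_2 \geq \dR_1' \indpprod \dR_2'$. Concretely, by \Cref{def:order} there exist distributions $\dP_1,\dQ_1$ over some space $\Omega_1$ and a Markov kernel $\kappa_1$ with $\dR_1 = (\dP_1\|\dQ_1)$ and $\dR_1' = (\kappa_1\dP_1\|\kappa_1\dQ_1)$; similarly $\dP_2,\dQ_2$ over $\Omega_2$ and $\kappa_2$ with $\dR_2 = (\dP_2\|\dQ_2)$, $\dR_2' = (\kappa_2\dP_2\|\kappa_2\dQ_2)$. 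Set $\dP \deq \dP_1\dP_2$ and $\dQ \deq \dQ_1\dQ_2$ over $\Omega_1\times\Omega_2$. By \Cref{prop:product}, $(\dP\|\dQ) = (\dP_1\|\dQ_1)\indpprod(\dP_2\|\dQ_2) = \dR_1\indpprod\dR_2$, which handles the ``stronger'' side.

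For the ``weaker'' side I need a single Markov kernel $\kappa$ from $\Omega_1\times\Omega_2$ (to the appropriate product target space) such that $\kappa\dP$ and $\kappa\dQ$ have ratio $\dR_1'\indpprod\dR_2'$. The natural choice is the product kernel $\kappa \deq \kappa_1\otimes\kappa_2$, defined by $\kappa((y_1,y_2)\mid(x_1,x_2)) = \kappa_1(y_1\mid x_1)\,\kappa_2(y_2\mid x_2)$; one checks immediately from the definition that this is a valid Markov kernel and that $\kappa(\dP_1\dP_2) = (\kappa_1\dP_1)(\kappa_2\dP_2)$ and likewise for $\dQ$, since pushing a product distribution through a product kernel acts coordinatewise. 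Then applying \Cref{prop:product} once more, $(\kappa\dP\|\kappa\dQ) = (\kappa_1\dP_1\|\kappa_1\dQ_1)\indpprod(\kappa_2\dP_2\|\kappa_2\dQ_2) = \dR_1'\indpprod\dR_2'$. Thus $\kappa$, together with $\dP,\dQ$, witnesses $\dR_1\indpprod\dR_2 \geq \dR_1'\indpprod\dR_2'$ by \Cref{def:order}.

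The only genuinely non-routine point is verifying the coordinatewise factorization $\kappa(\dP_1\dP_2) = (\kappa_1\dP_1)(\kappa_2\dP_2)$: expanding $(\kappa(\dP_1\dP_2))(y_1,y_2) = \sum_{x_1,x_2}\dP_1(x_1)\dP_2(x_2)\kappa_1(y_1\mid x_1)\kappa_2(y_2\mid x_2)$ and observing the double sum factors as $\big(\sum_{x_1}\dP_1(x_1)\kappa_1(y_1\mid x_1)\big)\big(\sum_{x_2}\dP_2(x_2)\kappa_2(y_2\mid x_2)\big)$. This is a one-line computation. I do not anticipate any real obstacle here; the proof is essentially just assembling \Cref{def:order} and \Cref{prop:product}. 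One small care point is the target sample space of $\kappa$ and the extended-reals issue when denominators vanish, but since \Cref{prop:product} is stated for arbitrary discrete distributions, reusing it twice sidesteps that entirely — the ratio notation already absorbs those conventions.
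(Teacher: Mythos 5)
Your proposal is correct and follows essentially the same route as the paper: take product distributions $\dP_1\dP_2$, $\dQ_1\dQ_2$, use Proposition~\ref{prop:product} to identify the ratios, and witness the order relation by the product kernel $\kappa(x_1',x_2'\mid x_1,x_2)=\kappa_1(x_1'\mid x_1)\kappa_2(x_2'\mid x_2)$. The only difference is cosmetic — you spell out the one-line factorization $\kappa(\dP_1\dP_2)=(\kappa_1\dP_1)(\kappa_2\dP_2)$ that the paper leaves implicit.
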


\begin{proof}
  There exist distributions $\dP_1,\dQ_1,\dP_2,\dQ_2$ and Markov kernels $\kappa_1,\kappa_2$ such that
  $\dR_i = (\dP_i \| \dQ_i)$ and $\dR_i' = (\kappa_i\dP_i \| \kappa_i\dQ_i)$.
  Then $\dR_1 \indpprod \dR_2 = (\dP_1\dP_2 \| \dQ_1 \dQ_2)$
  and $\dR_1' \indpprod \dR_2' = ((\kappa_1\dP_1)(\kappa_2\dP_2) \| (\kappa_1\dP_1)(\kappa_2\dP_2))$.
  So it suffices to find a Markov kernel $\kappa$ such that
  $\kappa(\dP_1\dP_2) = (\kappa_1\dP_1)(\kappa_2\dP_2)$, $\kappa(\dQ_1\dQ_2) = (\kappa_1\dQ_1)(\kappa_2\dQ_2)$.
  The required Markov kernel is $\kappa(x_1',x_2'|x_1,x_2) \deq \kappa_1(x_1'|x_1) \kappa_2(x_2'|x_2)$.
  % Consider Markov kernel $\kappa(x_1',x_2'|x_1,x_2) \deq \kappa_1(x_1'|x_1) \kappa_2(x_2'|x_2)$,
  % then 
  % $\kappa(\dP_1\dP_2) = (\kappa_1\dP_1)(\kappa_2\dP_2)$ and $\kappa(\dQ_1\dQ_2) = (\kappa_1\dQ_1)(\kappa_2\dQ_2)$.
\end{proof}

\subsection{Minimum Total Variation Distance}
\label{sec:MTVD}

We introduce a metric between (valid) ratios, called the minimum total variation distance.
% \tianren{I will talk about LeCam \cite{LeCam}}

\begin{definition}[MTV Distance]\label{def:min-tvd}\label{def:MTVD}
  For two valid ratios $\dR_1,\dR_2$,
  the \emph{minimum total variation distance} between them, 
  denoted by $\MTVD(\dR_1,\dR_2)$,
  is defined as
  \[
    \MTVD(\dR_1,\dR_2)
    \deq \inf_{\substack{\textit{discrete}~\dP_1,\dQ_1,\dP_2,\dQ_2\\(\dP_1\|\dQ_1) = \dR_1 \\(\dP_2\|\dQ_2) = \dR_2}} 
    \max\bigl( \TVD(\dP_1,\dP_2), \TVD(\dQ_1,\dQ_2)\bigr).
  \]
\end{definition}

We define $\MTVD$ as an infimum for safe.
% The search space of $(\dP_1,\dQ_1,\dP_2,\dQ_2)$ is infinite-dimensional and non-compact.
We cannot rule out the possibility that the minimum does not exist,
unless the supports of $\dR_1$ and $\dR_2$ are finite (Lemma~\ref{lem:inf}).

% The infimum symbol in the definition can be replaced by minimum,
% when the supports of $R_1$ and $R_2$ are finite (Lemma~\ref{lem:inf}).

\begin{lemma}
\label{lem:inf}
  When the supports of ratios $\dR_1$ and $\dR_2$ are finite,
  \[
    \MTVD(\dR_1,\dR_2)
    = \min_{\substack{\textit{discrete}~\dP_1,\dQ_1,\dP_2,\dQ_2\\(\dP_1\|\dQ_1) = \dR_1 \\(\dP_2\|\dQ_2) = \dR_2}} 
    \max\bigl( \TVD(\dP_1,\dP_2), \TVD(\dQ_1,\dQ_2)\bigr).
  \]
  % \tianren{w.l.o.g.~we can assume the minimum $(P_1,Q_1,P_2,Q_2)$ also have finite supports} 
\end{lemma}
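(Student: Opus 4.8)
The plan is to show that the infimum in \Cref{def:MTVD} is attained whenever $\supp(\dR_1),\supp(\dR_2)$ are finite, by a compactness argument. First I would observe that if $(\dP_1\|\dQ_1)=\dR_1$ with $\dR_1$ finitely supported, then the distribution $\dQ_1$ naturally partitions into finitely many ``classes'': for each $r\in\supp(\dR_1)$, the set of $x$ with $\dP_1(x)/\dQ_1(x)=r$ carries total $\dQ_1$-mass exactly $\dR_1(r)$ and total $\dP_1$-mass exactly $r\,\dR_1(r)$ (and there may be an extra class with $\dQ_1$-mass $0$ and $\dP_1$-mass $1-\Ex_{R\sim\dR_1}[R]$, the ``$r=\infty$'' class). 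Within a class the pair $(\dP_1,\dQ_1)$ is just ``$\dQ_1$ restricted to that class, scaled by $r$ to get $\dP_1$.'' So the only freedom in choosing $(\dP_1,\dQ_1)$ realizing $\dR_1$ is (a) the choice of sample space / how many atoms each class is split into, and (b) the distribution of mass among those atoms. The key reduction is that splitting an atom can only increase (or keep equal) the TV distances involved — more precisely, coupling considerations show that refining the partition never helps — so \emph{without loss of generality} one may assume all four distributions $\dP_1,\dQ_1,\dP_2,\dQ_2$ live on a common finite sample space $\Omega$ whose size is bounded in terms of $|\supp(\dR_1)|+|\supp(\dR_2)|$ (e.g.\ one atom per pair of classes, one from each ratio, plus a couple of extra atoms for the $\infty$-classes).

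Granting that reduction, the proof becomes a standard extreme-value argument. I would fix such a finite $\Omega$ and consider the set
\[
  \cS \deq \bigl\{ (\dP_1,\dQ_1,\dP_2,\dQ_2) \in (\Delta(\Omega))^4 : (\dP_1\|\dQ_1)=\dR_1,\ (\dP_2\|\dQ_2)=\dR_2 \bigr\},
\]
where $\Delta(\Omega)$ is the probability simplex on $\Omega$. The constraints ``$(\dP_1\|\dQ_1)=\dR_1$'' unfold into finitely many linear equalities in the coordinates of $\dP_1,\dQ_1$ (for each target ratio value $r$, the sum of $\dQ_1(x)$ over the designated block equals $\dR_1(r)$, and the sum of $\dP_1(x)$ over that block equals $r\,\dR_1(r)$, together with the requirement $\dP_1(x)=r\,\dQ_1(x)$ on that block), hence $\cS$ is a closed subset of the compact set $(\Delta(\Omega))^4$, so $\cS$ is compact. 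It is nonempty by \Cref{prop-valid} (the canonical pair gives a witness, placed on $\Omega$). The objective $(\dP_1,\dQ_1,\dP_2,\dQ_2)\mapsto \max(\TVD(\dP_1,\dP_2),\TVD(\dQ_1,\dQ_2))$ is continuous (TV distance is $\tfrac12\|\cdot\|_1$, and $\max$ of two continuous functions is continuous). A continuous function on a nonempty compact set attains its minimum, so the infimum over $\cS$ is a minimum; combined with the reduction step (the infimum over all realizations equals the infimum over realizations on $\Omega$, which equals this minimum), we get $\MTVD(\dR_1,\dR_2)$ is attained.

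The main obstacle, and the part that needs genuine care rather than routine checking, is the reduction step: justifying that one does not lose anything by restricting to a fixed finite sample space of bounded size. The worry is that an optimal pair of realizations might ``want'' to use infinitely many atoms, or atoms split in an asymmetric way between the two ratios. The clean way to handle this is to start from arbitrary finitely-or-infinitely-supported realizations $(\dP_1,\dQ_1)$ of $\dR_1$ and $(\dP_2,\dQ_2)$ of $\dR_2$ achieving value within $\eta$ of the infimum, take the optimal couplings witnessing the two TV distances, and then \emph{merge} atoms: within each block of constant likelihood ratio, collapse all atoms into a single atom, transporting mass accordingly. One checks that merging atoms within a $\dP_1$-block (equivalently $\dQ_1$-block, since they coincide as sets) does not change $\dR_1$, and — using the triangle inequality / data-processing for TV distance, or directly the fact that $\TVD$ is an $L_1$ distance and merging is an application of a common Markov kernel (the deterministic map sending every point of a block to its representative) — does not increase $\TVD(\dP_1,\dP_2)$ or $\TVD(\dQ_1,\dQ_2)$, provided we merge \emph{compatibly} on both sides. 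This is exactly the content of \Cref{def:order} and the data-processing remark following it: applying the same kernel $\kappa$ to $\dP_1,\dQ_1$ and to $\dP_2,\dQ_2$ can only shrink the TV distances. After merging, each surviving block of $\dR_1$ is a single atom and likewise for $\dR_2$; refining once more so that atoms of the two realizations are aligned (intersecting the two block-partitions) yields a common sample space of size at most $(|\supp(\dR_1)|+1)(|\supp(\dR_2)|+1)$, and the value has not increased. Letting $\eta\to 0$ and invoking the compactness argument above on this bounded sample space completes the proof.
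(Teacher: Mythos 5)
Your proposal is correct and follows essentially the same route as the paper: merge atoms according to the \emph{pair} of likelihood ratios (the paper's deterministic map $\pi(z)=\bigl(\dP'_1(z)/\dQ'_1(z),\,\dP'_2(z)/\dQ'_2(z)\bigr)$ onto the finite space $(\supp(\dR_1)\cup\{\infty\})\times(\supp(\dR_2)\cup\{\infty\})$), observe that this common kernel preserves both ratios and, by data processing, does not increase either TV distance, and then conclude by compactness and continuity of the TV objective on the finite-dimensional simplex. Only the phrasing of your merging step (``collapse each block, then refine once more'') should be tightened: one merges directly down to the common refinement of the two block partitions, since fully collapsing a $\dR_1$-block would distort $\dR_2$ — but that is exactly the construction your size bound $(|\supp(\dR_1)|+1)(|\supp(\dR_2)|+1)$ already reflects.
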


The lemma is proved by enforcing the search space of $(\dP_1,\dQ_1,\dP_2,\dQ_2)$ to be a tuple of distributions over a fixed finite domain. The proof is deferred to \Cref{sec-proof-mtvd}.

% This paper focuses on the ratio distributions with finite support.
The minimum total variance distance satisfies the following properties, as proved in \Cref{sec-proof-mtvd}.

\begin{lemma}\label{lem-metric}
  The minimum total variance distance is a metric.
\end{lemma}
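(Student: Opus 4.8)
The plan is to verify the three metric axioms for $\MTVD$: non-negativity with identity of indiscernibles, symmetry, and the triangle inequality. Symmetry is immediate from the symmetric role of $\dR_1,\dR_2$ in \Cref{def:MTVD}, and non-negativity is clear since it is an infimum of values of $\max(\TVD(\cdot,\cdot),\TVD(\cdot,\cdot)) \geq 0$. For the identity of indiscernibles, one direction is easy: if $\dR_1 = \dR_2$, then picking any $\dP,\dQ$ with $(\dP\|\dQ) = \dR_1$ and setting $\dP_1 = \dP_2 = \dP$, $\dQ_1 = \dQ_2 = \dQ$ shows $\MTVD(\dR_1,\dR_2) = 0$. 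The converse — that $\MTVD(\dR_1,\dR_2) = 0$ forces $\dR_1 = \dR_2$ — is the subtle part; I would handle it by a compactness/limiting argument. Restricting attention first to finitely supported ratios (invoking \Cref{lem:inf}), if $\MTVD(\dR_1,\dR_2)=0$ there exist $\dP_1,\dQ_1,\dP_2,\dQ_2$ over a fixed finite domain with $\dP_1 = \dP_2$, $\dQ_1 = \dQ_2$ and $(\dP_1\|\dQ_1)=\dR_1$, $(\dP_2\|\dQ_2)=\dR_2$, whence $\dR_1 = (\dP_1\|\dQ_1) = (\dP_2\|\dQ_2) = \dR_2$; for infinitely supported ratios I would either extract a convergent subsequence of the near-optimal witnesses over a suitably truncated domain, or reduce to the finite case by approximation. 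This also clarifies the ``only if'' direction promised right before \Cref{def:order}: $(\dP\|\dQ) = (\dP'\|\dQ')$ exactly when the decision problems are equivalent.

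The main work is the triangle inequality: $\MTVD(\dR_1,\dR_3) \leq \MTVD(\dR_1,\dR_2) + \MTVD(\dR_2,\dR_3)$. The natural approach is a ``gluing'' argument. Fix $\eta > 0$. Choose near-optimal witnesses: distributions $\dP_1,\dQ_1,\dP_2,\dQ_2$ over some sample space $\Omega$ with $(\dP_1\|\dQ_1)=\dR_1$, $(\dP_2\|\dQ_2)=\dR_2$ and $\max(\TVD(\dP_1,\dP_2),\TVD(\dQ_1,\dQ_2)) \leq \MTVD(\dR_1,\dR_2)+\eta$; and similarly $\dP_2',\dQ_2',\dP_3,\dQ_3$ over some $\Omega'$ with $(\dP_2'\|\dQ_2')=\dR_2$, $(\dP_3\|\dQ_3)=\dR_3$ and the corresponding bound by $\MTVD(\dR_2,\dR_3)+\eta$. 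The obstacle is that the two copies of $\dR_2$ are realized by \emph{different} pairs $(\dP_2,\dQ_2)$ and $(\dP_2',\dQ_2')$ on possibly different spaces, so I cannot directly concatenate the triangle inequality for TV distance. The resolution is to first note that since $(\dP_2\|\dQ_2) = (\dP_2'\|\dQ_2') = \dR_2$, both pairs are equivalent (in the sense discussed before \Cref{def:order}) to the canonical pair $(\dR_2^\dagger,\dR_2)$ via Markov kernels. I would therefore push both pairs through to a common representative: apply the kernel $\kappa$ (from the sufficient-statistic discussion, $\kappa(x) = \dP_2(x)/\dQ_2(x)$) to the first triple, mapping $(\dP_1,\dQ_1,\dP_2,\dQ_2)$ to $(\kappa\dP_1,\kappa\dQ_1,\dR_2^\dagger,\dR_2)$; by the data processing inequality for TV distance, the TV distances do not increase, and $(\kappa\dP_1\|\kappa\dQ_1)$ may no longer equal $\dR_1$ — which is fatal, so this naive pushforward is too lossy.

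Instead I would use the lossless recovery: there is a kernel $\kappa^{-1}$ with $\kappa^{-1}\dR_2 = \dQ_2$, $\kappa^{-1}\dR_2^\dagger = \dP_2$. The clean way to glue is to realize \emph{all three} ratios on a single sample space: take $\dP_2'',\dQ_2''$ to be the canonical pair $(\dR_2^\dagger,\dR_2)$, then recover $\dP_1,\dQ_1$ from $\dR_1$ such that $\TVD(\dP_1,\dR_2^\dagger) = \TVD(\dP_1,\dP_2)$ and $\TVD(\dQ_1,\dR_2) = \TVD(\dQ_1,\dQ_2)$ — this is possible because TV distance is invariant along an invertible pair of Markov kernels (one can transport the optimal coupling / witness). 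Concretely: given the first near-optimal witness on $\Omega$, apply $\kappa$ and observe that it sends $\dP_2,\dQ_2$ exactly to $\dR_2^\dagger,\dR_2$; then apply $\kappa^{-1}$ to pull back into a space where the second witness also lives, using that $\kappa^{-1}\kappa$ fixes $\dP_2,\dQ_2$. Working this out carefully, one arranges common distributions $\dP_1^*,\dQ_1^*,\dP_2^*,\dQ_2^*,\dP_3^*,\dQ_3^*$ on one space with $(\dP_1^*\|\dQ_1^*) = \dR_1$, $(\dP_2^*\|\dQ_2^*) = \dR_2$, $(\dP_3^*\|\dQ_3^*) = \dR_3$, and $\TVD(\dP_1^*,\dP_2^*) \leq \MTVD(\dR_1,\dR_2)+\eta$, $\TVD(\dP_2^*,\dP_3^*) \leq \MTVD(\dR_2,\dR_3)+\eta$, and likewise for the $\dQ$'s. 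Then ordinary triangle inequality for $\TVD$ gives $\TVD(\dP_1^*,\dP_3^*) \leq \MTVD(\dR_1,\dR_2)+\MTVD(\dR_2,\dR_3)+2\eta$, same for $\dQ$, hence $\MTVD(\dR_1,\dR_3) \leq \MTVD(\dR_1,\dR_2)+\MTVD(\dR_2,\dR_3)+2\eta$, and letting $\eta \to 0$ finishes. I expect the bookkeeping of constructing the common sample space and verifying the ratios are preserved to be the most delicate step; invoking \Cref{lem:inf} to work with finite supports (and a separate approximation step for the general case) should keep it manageable.
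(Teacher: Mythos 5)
Your high-level plan for the triangle inequality is the same as the paper's (realize all three ratios on one common sample space, then chain the ordinary triangle inequality for $\TVD$ with $\eta$-slack witnesses and let $\eta\to0$), but the one genuinely hard step---constructing that common realization---is exactly where your sketch breaks down. Your concrete mechanism is to push the first witness through $\kappa(x)=\dP_2(x)/\dQ_2(x)$ and then pull back with a recovery kernel $\kappa^{-1}$ into the space of the second witness (or to anchor everything at the canonical pair $(\dR_2^\dagger,\dR_2)$). But applying the composed kernel to $(\dP_1,\dQ_1)$ has the same defect you yourself called fatal for the naive pushforward: a Markov kernel applied to both members of a pair can only weaken its ratio, not preserve it. Concretely, let $\dR_2=\delta_1$ be the point mass at $1$, realized on $\{a,b\}$ by $\dP_2=\dQ_2=(\tfrac12,\tfrac12)$, and take $\dP_1=(1,0)$, $\dQ_1=(\tfrac12,\tfrac12)$, so that $\dR_1$ puts mass $\tfrac12$ on $0$ and $\tfrac12$ on $2$. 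Then $\kappa$ collapses $\{a,b\}$ to the single ratio value $1$, and no kernel out of that one-point space can recreate a pair whose ratio is $\dR_1$; likewise the canonical sample space of $\dR_2$ is a single point, so one cannot ``recover $\dP_1,\dQ_1$ from $\dR_1$'' on it at all, let alone at distance $\TVD(\dP_1,\dP_2)=\tfrac12$ from $\dR_2^\dagger$. So invertibility of $(\kappa,\kappa^{-1})$ on $(\dP_2,\dQ_2)$ does not yield a ratio-preserving, TV-non-increasing transport of $(\dP_1,\dQ_1)$, and the ``common distributions $\dP_1^*,\dots,\dQ_3^*$'' you assert are never actually constructed.

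What makes the gluing work in the paper is a different construction: couple the two realizations of $\dR_2$ through the ratio value itself. On $\Omega_1\times\Omega_2$ one defines joint distributions $\dQ_{3,4}$ and $\dP_{3,4}$ by making the two coordinates conditionally independent given that their likelihood ratios agree; the resulting conditional kernels satisfy $\dP_{4|3}=\dQ_{4|3}$ and $\dP_{3|4}=\dQ_{3|4}$, and one transports by semidirect product, $\dP_1'=\dP_1\dP_{4|3}$, $\dQ_1'=\dQ_1\dQ_{4|3}$, $\dP_2'=\dP_2\dP_{3|4}$, $\dQ_2'=\dQ_2\dQ_{3|4}$. Because the \emph{same} kernel multiplies numerator and denominator, it cancels in the likelihood ratio, so $(\dP_1'\|\dQ_1')=\dR_1$ exactly, and $\TVD(\dP_1\dP_{4|3},\dP_{3,4})=\TVD(\dP_1,\dP_3)$ exactly (data processing in both directions); this is precisely the property your argument needs and does not supply. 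Separately, your identity-of-indiscernibles step is fine for finitely supported ratios via \Cref{lem:inf}, but the infinite-support case is waved at: the set of discrete distributions on a countably infinite set is not compact in total variation, so ``extract a convergent subsequence of near-optimal witnesses'' needs a real argument. The paper instead proves positivity for arbitrary valid ratios by comparing Neyman--Pearson regions (a point in one closed region but not the other forces $\MTVD\geq\delta>0$), which avoids any compactness of witnesses.
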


\begin{lemma}\label{lem:error-MTVD}
For any two valid ratios $\dR_1,\dR_2$, it holds that $\left\vert \TVD(\dR_1) - \TVD(\dR_2) \right\vert\leq 2\MTVD(\dR_1,\dR_2)$.
% \begin{align*}
% 	\left\vert \TVD(\dR_1) - \TVD(\dR_2) \right\vert\leq \MTVD(\dR_1,\dR_2),
% \end{align*}
% which implies for any discrete distributions $\dP_1,\dQ_1,\dP_2,\dQ_2$ with $\dR_1 = (\dP_1 \| \dQ_1)$ and $\dR_2 = (\dP_2 \| \dQ_2)$, it holds that $\bigl| \TVD(\dP_1,\dQ_1) - \TVD(\dP_2,\dQ_2) \bigr| \leq \MTVD(\dR_1,\dR_2)$.
\end{lemma}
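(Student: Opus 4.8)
The plan is to unfold the definitions and reduce everything to the triangle inequality for the ordinary TV distance, exploiting the fact that $\TVD(\dR) = \TVD(\dP,\dQ)$ whenever $(\dP\|\dQ)=\dR$ (the proposition preceding this lemma). Fix any two valid ratios $\dR_1, \dR_2$ and let $\eta > 0$ be arbitrary. By the definition of $\MTVD$ as an infimum, choose discrete distributions $\dP_1,\dQ_1,\dP_2,\dQ_2$ with $(\dP_1\|\dQ_1)=\dR_1$, $(\dP_2\|\dQ_2)=\dR_2$, and
\[
  \max\bigl(\TVD(\dP_1,\dP_2),\TVD(\dQ_1,\dQ_2)\bigr) \leq \MTVD(\dR_1,\dR_2)+\eta \,.
\]

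Now I would write, using the proposition that $\TVD(\dR_i)=\TVD(\dP_i,\dQ_i)$ together with the triangle inequality for the TV distance applied (say) along the chain $\dP_1 \to \dP_2 \to \dQ_2 \to \dQ_1$:
\[
  \TVD(\dR_1) = \TVD(\dP_1,\dQ_1)
  \leq \TVD(\dP_1,\dP_2) + \TVD(\dP_2,\dQ_2) + \TVD(\dQ_2,\dQ_1)
  \leq 2\bigl(\MTVD(\dR_1,\dR_2)+\eta\bigr) + \TVD(\dR_2) \,.
\]
By symmetry (swapping the roles of $\dR_1$ and $\dR_2$) one likewise gets $\TVD(\dR_2) \leq 2(\MTVD(\dR_1,\dR_2)+\eta) + \TVD(\dR_1)$, so $|\TVD(\dR_1)-\TVD(\dR_2)| \leq 2(\MTVD(\dR_1,\dR_2)+\eta)$. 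Letting $\eta \to 0$ gives the claim.

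I do not expect a genuine obstacle here: the only mild subtlety is that $\MTVD$ is defined via an infimum that may not be attained (as the paper explicitly warns), which is precisely why I carry the $\eta$ through the argument rather than picking an exact optimizer; if one is willing to invoke Lemma~\ref{lem:inf} in the finite-support case the $\eta$ can be dropped, but the $\eta$-argument handles the general case uniformly. A second, even smaller point is the choice of intermediate chain in the triangle inequality — one needs a path from $\dP_1$ to $\dQ_1$ that passes through both $\dP_2$ and $\dQ_2$ so that each of the three legs is either a $\max$-controlled term or $\TVD(\dR_2)$ itself; the chain $\dP_1 \to \dP_2 \to \dQ_2 \to \dQ_1$ does exactly that, and any other valid ordering works equally well. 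The constant $2$ is inherent to this three-leg decomposition and matches the statement.
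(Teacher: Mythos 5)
Your proposal is correct and follows essentially the same route as the paper: both bound $|\TVD(\dP_1,\dQ_1)-\TVD(\dP_2,\dQ_2)|$ by $\TVD(\dP_1,\dP_2)+\TVD(\dQ_1,\dQ_2)$ via the triangle inequality for TV distance and then pass to the infimum defining $\MTVD$, picking up the factor $2$ from bounding the sum by twice the maximum. The only cosmetic difference is that the paper states the inequality for \emph{all} representing pairs and then takes the infimum directly, whereas you select an $\eta$-near-optimal pair and let $\eta\to 0$; these are logically equivalent ways of handling the possibly unattained infimum.
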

\begin{lemma}\label{lem:product-MTVD}\label{prop-cross}
For any valid ratios $\dR_1,\dR_2,\dR_3,\dR_4$,
\begin{align*}
	\MTVD(\dR_1 \indpprod \dR_2, \dR_3 \indpprod \dR_4 ) \leq \MTVD(\dR_1, \dR_3) + \MTVD(\dR_2, \dR_4). 
\end{align*}  	
\end{lemma}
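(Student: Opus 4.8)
The plan is to unfold the definitions of $\MTVD$ on the left-hand side and then exhibit a candidate decomposition built from near-optimal decompositions of $\dR_1,\dR_3$ and of $\dR_2,\dR_4$ separately. Concretely, fix $\eta>0$. By \Cref{def:MTVD} choose discrete distributions $\dP_1,\dQ_1,\dP_3,\dQ_3$ with $(\dP_1\|\dQ_1)=\dR_1$, $(\dP_3\|\dQ_3)=\dR_3$ and $\max(\TVD(\dP_1,\dP_3),\TVD(\dQ_1,\dQ_3))\le \MTVD(\dR_1,\dR_3)+\eta$; similarly choose $\dP_2,\dQ_2,\dP_4,\dQ_4$ for the pair $(\dR_2,\dR_4)$. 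The natural candidate on the left is the pair of product distributions $\dP_1\dP_2$ versus $\dP_3\dP_4$ and $\dQ_1\dQ_2$ versus $\dQ_3\dQ_4$: by \Cref{prop-product} we have $(\dP_1\dP_2\|\dQ_1\dQ_2)=\dR_1\indpprod\dR_2$ and $(\dP_3\dP_4\|\dQ_3\dQ_4)=\dR_3\indpprod\dR_4$, so this pair is admissible in the infimum defining $\MTVD(\dR_1\indpprod\dR_2,\dR_3\indpprod\dR_4)$.

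It then remains to bound $\TVD(\dP_1\dP_2,\dP_3\dP_4)$ and $\TVD(\dQ_1\dQ_2,\dQ_3\dQ_4)$. For this I would invoke the standard triangle-inequality/coupling fact that the TV distance is subadditive across independent coordinates: $\TVD(\dP_1\dP_2,\dP_3\dP_4)\le \TVD(\dP_1,\dP_3)+\TVD(\dP_2,\dP_4)$, proved by inserting the hybrid $\dP_1\dP_4$ (or $\dP_3\dP_2$) and using that $\TVD(\dP_1\dP_2,\dP_1\dP_4)=\TVD(\dP_2,\dP_4)$ since a common independent factor does not change TV distance, and likewise for the other hybrid. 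Applying this to both the $\dP$'s and the $\dQ$'s and taking the max gives
\[
  \max\bigl(\TVD(\dP_1\dP_2,\dP_3\dP_4),\TVD(\dQ_1\dQ_2,\dQ_3\dQ_4)\bigr)
  \le \bigl(\MTVD(\dR_1,\dR_3)+\eta\bigr)+\bigl(\MTVD(\dR_2,\dR_4)+\eta\bigr),
\]
where I used $\max(a+b,c+d)\le \max(a,c)+\max(b,d)$. Since the left-hand pair is admissible, $\MTVD(\dR_1\indpprod\dR_2,\dR_3\indpprod\dR_4)$ is at most this quantity, and letting $\eta\to0$ finishes the argument.

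I do not expect a serious obstacle here; the only point that needs a line of justification rather than a citation is the subadditivity of TV distance over independent products, which is elementary but should be stated carefully (it is essentially the data-processing/coupling argument, or a direct $L_1$ computation $\sum_{x,y}|\dP_1(x)\dP_2(y)-\dP_3(x)\dP_4(y)|\le \sum_{x,y}\dP_1(x)|\dP_2(y)-\dP_4(y)|+\sum_{x,y}|\dP_1(x)-\dP_3(x)|\dP_4(y)$). One small care item: the lemma is stated with $\MTVD$ as an infimum (not necessarily attained), so I work with $\eta$-optimal decompositions and take a limit rather than picking exact minimizers; if one prefers, when all supports are finite \Cref{lem:inf} lets the $\eta$ be dropped, but the $\eta$-argument is cleaner and fully general.
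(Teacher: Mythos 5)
Your proposal is correct and follows essentially the same route as the paper: take (near-)optimal decompositions of the two pairs, use Proposition~\ref{prop-product} to realize $\dR_1\indpprod\dR_2$ and $\dR_3\indpprod\dR_4$ via the product distributions, bound the TV distances of the products by the hybrid/triangle-inequality argument, and pass to the infimum. The only cosmetic difference is that you work with $\eta$-optimal decompositions and let $\eta\to 0$, whereas the paper bounds the quantity for arbitrary admissible decompositions and then takes the infimum directly; both handle the infimum issue equally well.
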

% The proofs of the above three properties are given in \Cref{sec-proof-mtvd}.

A very similar concept called \emph{Le\,Cam's distance} exists in decision theory. 
% A equivalent definition is presented in Definition~\ref{def:deficiency}.
Here we present an equivalent definition restricting on the decision problems that has only two hypotheses.

\begin{definition}[Deficiency and Le Cam's Distance \cite{LeCam}]\label{def:deficiency}
  For decision problems $(\dP_1,\dQ_1), (\dP_2,\dQ_2)$,
  the \emph{deficiency} of $(\dP_1,\dQ_1)$ with respect to $(\dP_2,\dQ_2)$
  is defined as
  \[
    \text{deficiency}((\dP_1,\dQ_1), (\dP_2,\dQ_2))
    \deq 
    \inf_{\substack{\textit{Markov kernel}~\kappa}} 
    \max \bigl( \TVD(\kappa\dP_1,\dP_2), \TVD(\kappa\dQ_1,\dQ_2)\bigr) \,;
  \]
  their \emph{Le\,Cam's distance} is $\max(\text{deficiency}((\dP_1,\dQ_1), (\dP_2,\dQ_2)), \text{deficiency}((\dP_2,\dQ_2), (\dP_1,\dQ_1)))$.
\end{definition}

% By definition, $\text{deficiency}((\dP_1,\dQ_1), (\dP_2,\dQ_2)) = 0$ if and only if $(\dP_1\|\dQ_1) \geq (\dP_2\|\dQ_2)$.

Le Cam's distance is a pseudo metric between decision problems. 
It is not hard to show that two decision problems $(\dP_1,\dQ_1), (\dP_2,\dQ_2)$ have zero Le Cam's distance if and only if $(\dP_1\|\dQ_1) = (\dP_2\|\dQ_2)$.
Therefore, Le Cam's distance is a metric between ratios.
We can similarly define deficiency between ratios.
% Similarly, deficiency also can be defined between ratios. 
\begin{align*}
  \text{deficiency}(\dR_1,\dR_2)
  &\deq \inf_{\substack{\textit{discrete}~\dP_1,\dQ_1,\dP_2,\dQ_2\\(\dP_1\|\dQ_1) \leq \dR_1 \\(\dP_2\|\dQ_2) = \dR_2}} 
  \max\bigl( \TVD(\dP_1,\dP_2), \TVD(\dQ_1,\dQ_2)\bigr) \,, \\
  \text{Le Cam's distance} (\dR_1,\dR_2) &\deq \max (\text{deficiency}(\dR_1,\dR_2), \text{deficiency}(\dR_2,\dR_1)) \,.
\end{align*}

For any two ratios $\dR_1,\dR_2$,
it holds that
$\text{deficiency}(\dR_1, \dR_2) = 0$ if and only if $\dR_1 \geq \dR_2$,
and $\text{Le Cam's distance} (\dR_1,\dR_2) \leq \MTVD(\dR_1,\dR_2)$.
% \[
%   \text{Le Cam's distance} (\dR_1,\dR_2) \leq \MTVD(\dR_1,\dR_2).
% \]
%\tianren{Can we show Le Cam's distance is close to MTVD (with a constant multiplicative factor)?}
% Similarly, deficiency can be defined between ratios: $\LCd((\dP_1 \| \dQ_1), (\dP_2 \| \dQ_2)) \deq \LCd((\dP_1,\dQ_1), (\dP_2,\dQ_2))$.
% We also conjecture that 
% $\text{deficiency}(\dR_1, \dR_2) + \text{deficiency}(\dR_2, \dR_1) \geq \MTVD(\dR_1,\dR_2)$.

% \tianren{$\downarrow$ unfinished; may be moved; may be removed}
% When $\LCd(\dR_1, \dR_2) = 0$,
% we say ``$\dR_1$ is stronger than $\dR_2$'' 
% and denote it by $\dR_1 \geq \dR_2$.
% As the notation suggested, it is a order relation.
% This order relation can also be directly defined without basing on deficiency.

% \begin{definition}
%   For any two ratios $\dR_1,\dR_2$,
%   if there exists distribution $\dP,\dQ$ and a Markov kernel $\kappa$
%   such that $\dR_1 = (\dP\|\dQ)$ and $\dR_2 = (\kappa\dP \| \kappa\dQ)$,
%   we say ``$\dR_1$ is stronger than $\dR_2$'',
%    denoted by $\dR_1 \geq \dR_2$.
% \end{definition}

% \begin{proposition}
%   \begin{enumerate}
%     \item 
%       Reflexive:
%       $\dR \geq \dR$.
%     % \item 
%     %   If $(\dP_1\|\dQ_1) \geq (\dP_2\|\dQ_2)$,
%     %   there exists a Markov kernel $\kappa$
%     %   such that  $\kappa\dP_1 = \dP_2$ and $\kappa\dQ_1 = \dQ_2$.
%     \item
%       Antisymmetric:
%       $\dR_1 \geq \dR_2$ and $\dR_2 \geq \dR_1$ 
%       implies $\dR_1 = \dR_2$ 
%     \item
%       Transitive:
%       $\dR_1 \geq \dR_2$ and $\dR_2 \geq \dR_3$ 
%       implies $\dR_1 \geq \dR_3$
%   \end{enumerate}
% \end{proposition}

\section{Sparsify the Likelihood Ratio}

%\tianren{define and analyze the sparsification algorithm in this section}
As discussed in Section~\ref{sec:ratio},
the ratio $\dR = (\dP\|\dQ)$ completely characterizes the problem of distinguishing $\dP$ and $\dQ$.
Once the ratio is known, the distances between $\dP,\dQ$ (including the TV distance) follow easily.
The bottleneck is the complexity of computing and representing $\dR$.
We will store ratio $\dR$ as the table of values of its probability mass function,
so the space complexity is proportional to the support size $|\supp(\dR)|$.
When $\dP,\dQ$ are described as product distributions,
the size of $\supp(\dR)$ can be exponentially large.
Our solution is to simplify the ratio $\dR$ without introducing too much error.
The process of simplification is called the \emph{sparsification} of the ratio.
The amount as error is measured by MTV distance (Definition~\ref{def:MTVD}).

\begin{lemma}[Sparsification]\label{thm:spar}\label{lem:sparsify}
  There exists a deterministic algorithm \textsf{Sparsify} as defined in~\eqref{eq:def-sparsify}.
  Given a valid ratio $\dR$ and two error bounds $\epsr, \epsa > 0$ as inputs,
  it outputs a sparsified ratio $\tilde\dR \leq \dR$ in time $O(\frac{1}{\epsr}\log \frac{1}{\epsa} + |\supp(\dR)|)$,
  such that $|\supp(\tilde\dR)|= O(\frac{1}{\epsr}\log \frac{1}{\epsa})$ and $\MTVD(\dR,\tilde\dR) \leq \frac12(\epsr \TVD(\dR) + \epsa)$.
\end{lemma}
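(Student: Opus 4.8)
The plan is to make precise the intuition sketched in Section~\ref{sec:tech-overview}: cover $[0,1)$ and $(1,\infty]$ by geometrically spaced intervals, merge all support points of $\dR$ that fall in a common interval into their probability-weighted average, and bound the incurred MTV error by exhibiting explicit distributions whose ratios are $\dR$ and $\tilde\dR$ respectively.

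First I would define the grid. Fix a parameter $\varepsilon_{\mathrm s}$ and choose breakpoints on $[0,1)$ so that consecutive ones have ratio of ``gaps to $1$'' at most $1+\varepsilon_{\mathrm s}$: concretely, set $1-b_j = (1+\varepsilon_{\mathrm s})^{-j}$ so the intervals are $[a_j,b_j)$ with $a_j = b_{j-1}$, and stop once $1-b_j$ drops below $\delta_{\mathrm s}$ (equivalently after $O(\tfrac1{\varepsilon_{\mathrm s}}\log\tfrac1{\delta_{\mathrm s}})$ intervals), lumping the tail $[1-\delta_{\mathrm s},1)$ and the point-mass structure near $1$ into one leftover interval that is handled separately. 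Mirror this on $(1,\infty]$ using $r-1$ (or $1-1/r$) in place of $1-r$, so the same count bound applies; the point $r=1$ and any masses at $\infty$ stay put. This immediately gives $|\supp(\tilde\dR)| = O(\tfrac1{\varepsilon_{\mathrm s}}\log\tfrac1{\delta_{\mathrm s}})$, and a single linear pass over $\supp(\dR)$ — after bucketing points into intervals, which is $O(|\supp(\dR)|)$ with $O(\log\tfrac1{\delta_{\mathrm s}})$ work per point for a binary search, or $O(1)$ amortized with the right indexing — produces the merged table in the claimed time. I would then verify $\Ex_{R\sim\tilde\dR}[R] = \Ex_{R\sim\dR}[R] \le 1$ since each merge preserves the weighted sum $\sum r_i p_i$, so by Proposition~\ref{prop-valid} $\tilde\dR$ is a valid ratio.

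The core estimate is the MTV bound. Following the overview, build $\dP,\dQ$ on $\Omega = \{\omega_0\}\cup\supp(\dR)$ with $\dQ(\omega_r)=\dR(r)$, $\dP(\omega_r)=r\,\dR(r)$, $\dP(\omega_0)=1-\Ex_\dR[R]$, $\dQ(\omega_0)=0$, so $(\dP\|\dQ)=\dR$; then let $\tilde\dP$ replace, within each merge group, the masses $r_i p_i$ by $r^* p_i$ with $r^*$ the weighted average. Since merging preserves the group's total $\dP$-mass, $\tilde\dP$ is still a distribution, and $(\tilde\dP\|\dQ)=\tilde\dR$ (one checks the ratio value is constant $r^*$ across the merged points and the other points are untouched), so $\MTVD(\dR,\tilde\dR)\le\TVD(\dP,\tilde\dP)=\tfrac12\sum_{\text{groups}}\sum_{i}|r_i-r^*|p_i$. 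For a sufficiently narrow interval $[a,b)\subseteq[0,1)$ with $\tfrac{1-a}{1-b}\le 1+\varepsilon_{\mathrm s}$ and $r_i,r^*\in[a,b)$, I would show $|r_i-r^*|\le (b-a)\le\varepsilon_{\mathrm s}(1-b)\le\varepsilon_{\mathrm s}(1-r_i)$, hence the contribution of the narrow part of $[0,1)$ telescopes to at most $\varepsilon_{\mathrm s}\sum_{r_i<1}(1-r_i)p_i = \varepsilon_{\mathrm s}\TVD(\dR)$; the symmetric argument on $(1,\infty]$ (using that merging within $(1,\infty]$ changes $\TVD$ not at all on the ``$\max(1-R,0)$'' side but we bound via the symmetric quantity, giving another $\varepsilon_{\mathrm s}\TVD(\dR)$ after accounting that each side is weighted by $\tfrac12$ in $\TVD(\dP,\tilde\dP)$); and the two leftover intervals near $1$, where $|r_i-r^*|\le\delta_{\mathrm s}$, contribute at most $\tfrac12\delta_{\mathrm s}$ each. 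Collecting: $\MTVD(\dR,\tilde\dR)\le\tfrac12(\varepsilon_{\mathrm s}\TVD(\dR)+\delta_{\mathrm s})$. Finally $\tilde\dR\le\dR$ holds because $\tilde\dP = \kappa\dP$, $\dQ = \kappa\dQ$ for the Markov kernel $\kappa$ that maps each $\omega_i$ in a group to a single representative state and fixes everything else — a genuine data-processing map — so by Definition~\ref{def:order} $\dR\ge\tilde\dR$.

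The main obstacle I anticipate is bookkeeping the two ``leftover'' intervals cleanly: the interval adjacent to $1$ is not geometrically narrow (its relevant gap shrinks to $0$), so the multiplicative bound $|r_i-r^*|\le\varepsilon_{\mathrm s}(1-r_i)$ fails there and must be replaced by the crude additive bound $|r_i-r^*|\le\delta_{\mathrm s}$, and one must be careful that the total mass in that region is at most $1$ so the contribution is genuinely $\le\tfrac12\delta_{\mathrm s}$ rather than something scaled by a sum of probabilities. A secondary subtlety is the $(1,\infty]$ side: the natural ``narrowness'' coordinate there is $1-1/r$ (so that the grid has $O(\tfrac1{\varepsilon_{\mathrm s}}\log\tfrac1{\delta_{\mathrm s}})$ cells and the point at $\infty$ is isolated), and one should check that merging in the $r$-coordinate by a probability-weighted average still lands inside the same cell and still admits a $|r_i-r^*|$ bound of the right order after converting back — this conversion is where a constant could sneak in, so I would track it explicitly to confirm the clean $\tfrac12(\varepsilon_{\mathrm s}\TVD(\dR)+\delta_{\mathrm s})$ form. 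Everything else (validity via Proposition~\ref{prop-valid}, the explicit $(\dP,\dQ,\tilde\dP)$ construction, the telescoping sum, and the running time) is routine.
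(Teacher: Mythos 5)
Your proposal matches the paper's proof on the $[0,1)$ side (same geometric grid, same merging rule, same data-processing/kernel argument for $\tilde\dR\leq\dR$, same additive treatment of the last interval before $1$), but the handling of $(1,\infty]$ has a genuine gap. You bound the entire error by $\TVD(\dP,\tilde\dP)$, i.e.\ you modify only the $\dP$-component of the canonical pair and keep $\dQ=\dR$ fixed on \emph{both} sides. On $(1,\infty]$ this is the wrong witness: a cell $J$ that is narrow in the $1-\frac1r$ coordinate only guarantees $|r-r^*|\leq \epsr\, r^*(r-1)$ for $r,r^*\in J$ (converting $|\frac1r-\frac1{r^*}|\leq\epsr(1-\frac1r)$ back multiplies by $rr^*$), and $r^*$ is not a constant --- it is of order $1/\epsr$ or $1/\epsa$ for the outer cells and unbounded for the cell reaching $\infty$. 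Concretely, take the outermost cell containing $r_1=2/\epsr$ with $\dR(r_1)=\epsr/4$ and a huge $r_2$ with $\dR(r_2)=c/r_2$ (remaining mass at $0$): merging them costs $\frac12\sum_i|r_i-r^*|\dR(r_i)\approx c$, a constant independent of $\epsr,\epsa$, while the claimed bound is $\frac12(\epsr\TVD(\dR)+\epsa)$. So the issue you flagged as ``a constant could sneak in'' is actually a failure of the one-sided construction, not a constant-tracking matter. In addition, even where the conversion is harmless, putting both halves into a single $\TVD(\dP,\tilde\dP)$ \emph{adds} the two sides' errors and would give $\epsr\TVD(\dR)+\epsa$, twice the stated bound.

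The paper's proof fixes exactly this by splitting the modification across the two components: $\tilde\dP$ alters $\dR^\dagger$ only on the cells in $[0,1)$ (where $\dR^\dagger(r)\leq\dR(r)$, so this is the cheap side), while $\tilde\dQ$ alters $\dR$ only on the cells in $(1,\infty]$, setting $\tilde\dQ(r)=\dR^\dagger(r)/r^{**}$ with $r^{**}=\dR^\dagger(J)/\dR(J)$. One then checks $(\tilde\dP\|\tilde\dQ)=\tilde\dR$, so $\MTVD(\dR,\tilde\dR)\leq\max\bigl(\TVD(\tilde\dP,\dR^\dagger),\TVD(\tilde\dQ,\dR)\bigr)$ --- a maximum, not a sum, which is where the factor $\frac12$ survives --- and the second term equals $\frac12\sum_{r>1}\dR^\dagger(r)\,|\frac1{r^{**}}-\frac1r|\leq\frac12\epsr\sum_{r>1}\dR^\dagger(r)(1-\frac1r)+\frac12\epsa=\frac12\epsr\TVD(\dR)+\frac12\epsa$, using the identity $\TVD(\dR)=\sum_{r>1}\dR^\dagger(r)(1-\frac1r)$. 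Your grid construction, support-size and running-time counts, validity via Proposition~\ref{prop-valid}, and the order argument $\tilde\dR\leq\dR$ are otherwise consistent with the paper (modulo the minor imprecision that the merged distribution is $\kappa\dP$ on the collapsed sample space, not $\tilde\dP$ itself).
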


The intuition behind the sparsification has been sketched in the technical overview (Section~\ref{sec:tech-overview}).
We divide $[0,\infty]$ into a collection of disjointed intervals, then ``merge'' all the probability masses within one interval together.
We formalized this process as \textsf{SparsifyWrtIntervals} (Algorithm~\ref{alg-wrtintervals}).
The support size of the sparsified ratio is no more than the number of disjoint intervals we choose.
The running time equals the input ratio support size plus the number of intervals,
assuming the ratio is presented by a sorted table
and the intervals is also sorted.

\begin{algorithm}
\caption{$\textsf{SparsifyWrtIntervals}(\dR,\{I_t\}_{1\leq t\leq m})$} \label{alg-wrtintervals}
  \KwIn{a ratio $\dR$ with finite support and a collection of disjoint intervals $I_1,\dots,I_m \subseteq [0,\infty]$}
  % \KwOut{a sparsified ratio $\tilde\dR$}
  Let $(\dR^\dagger,\dR)$ be the canonical pair of $\dR$\;
  Define distributions $\dP^*,\dQ^*$ over $\Omega^* = [0,\infty] - \bigcup_{t}I_t + \{\omega_1,\dots,\omega_m\}$ as
  \[
  \begin{aligned}
    \dP^*(\omega_t) &= \dR^\dagger(I_t) ,&
    \dQ^*(\omega_t) &= \dR(I_t) &
    &~\text{for $1\leq t\leq m$}; \\
    \dP^*(r) &= \dR^\dagger(r),&
    \dQ^*(r) &= \dR(r) &
    &~\text{for $\textstyle r\in [0,\infty] - \bigcup_{t}I_t$}.
  \end{aligned}
  \]
  Equivalently, $\dP^* = \kappa\dR^\dagger$, $\dQ^* = \kappa\dR$ for deterministic Markov kernel $\kappa$
  \[
    \kappa(r) = \begin{cases}
      \omega_i, &\text{ if } r\in I_i \text{ for some } i\in[m]\\
      r, &\text{ otherwise}
    \end{cases}
  \]

  \KwOut{a sparsified ratio $\tilde\dR = (\dP^* \| \dQ^*)$}
\end{algorithm}

As a warm-up, consider how does the sparsification process work on a single interval.
Let $(\dR^\dagger,\dR)$ be the canonical pair of $\dR$,
let $I$ be a sufficiently small interval.
The sparsification of $\dR$ with respect to an interval $I$ works as follows:
\[
  \tilde\dR = (\kappa\dR^\dagger \| \kappa\dR)
  \text{ where $\kappa$ is a deterministic Markov kernel }
  \kappa(r) = \begin{cases}
    \omega^*, &\text{ if } r \in I \\
    r, &\text{ if } r \notin I \\
  \end{cases}
\]
If $\dR^\dagger(I) = \dR(I) = 0$, the sparsification does nothing and $\tilde\dR = \dR$.
Otherwise, 
the sparsified ratio and its alternative are 
\[
  \tilde\dR(r) = \begin{cases}
    \dR(r), &\text{ if }  r\notin I\\
    0, &\text{ if }  r\in I\setminus \{r^*\} \\
    \dR(I), &\text{ if }  r = r^* \\
  \end{cases}
  \qquad
  \tilde\dR^\dagger(r) = \begin{cases}
    \dR^\dagger(r), &\text{ if }  r\notin I\\
    0, &\text{ if }  r\in I\setminus \{r^*\} \\
    \dR^\dagger(I), &\text{ if }  r = r^* \\
  \end{cases}
\]
where $r^* \deq \dfrac{\dR^\dagger(I)}{\dR(I)}$. 
Apparently $r^* \in I$.

To bound $\MTVD(\dR,\tilde\dR)$,
consider the following two distributions $\tilde\dP,\tilde\dQ$
\[
  \tilde\dQ(r) = \begin{cases}
    \dR(r), &\text{ if }  r\notin I\\
    \dR^\dagger(r) / r^*, &\text{ if }  r\in I \\
  \end{cases}
  \qquad
  \tilde\dP(r) = \begin{cases}
    \dR^\dagger(r), &\text{ if }  r\notin I\\
    \dR(r) \cdot r^*, &\text{ if }  r\in I
  \end{cases}
\]
They satisfy $\tilde\dR = (\tilde\dP \| \dR) = (\dR^\dagger \| \tilde\dQ)$.
Therefore, 
% Thus, $\MTVD(\dR,\tilde\dR)$ is less than both $\TVD(\dP,\dR^\dagger)$ and $\TVD(\dQ,\dR)$.
% We will use the tighter one to bound $\MTVD(\dR,\tilde\dR)$.
\[
\begin{aligned}
  \MTVD(\dR,\tilde\dR) \leq
  \TVD(\tilde\dP,\dR^\dagger)
  &= \frac12 \sum_{r\in I} \Bigl| \dR(r) \cdot r^* - \dR^\dagger(r) \Bigr|
  = \frac12 \sum_{r\in I} \dR(r) \cdot | r^* - r |  \,,
  \\
  \MTVD(\dR,\tilde\dR) \leq
  \TVD(\tilde\dQ,\dR)
  &= \frac12 \sum_{r\in I} \Bigl| \dR^\dagger(r) / r^* - \dR(r) \Bigr|
  = \frac12 \sum_{r\in I} \dR^\dagger(r) \cdot | \tfrac1{r^*} - \tfrac1r | \,.
\end{aligned}
\]
We should use the tighter one to bound $\MTVD(\dR,\tilde\dR)$.

Intuitively, $\tilde\dQ$ is obtained by modifying $\dR$ at points $r\in I$;
$\tilde\dP$ is obtained by modifying $\dR^\dagger$ at points $r\in I$.
If the interval $I \subseteq [0,1]$,
then $\dR^\dagger(r) \leq \dR(r)$ for all $r\in I$,
thus modifying $\dR^\dagger$ introduces less error,
in other words, $\TVD(\tilde\dP,\dR^\dagger)$ is a tighter bound of $\MTVD(\dR,\tilde\dR)$.
Symmetrically, if the interval $I \subseteq [1,\infty]$, $\TVD(\tilde\dQ,\dR)$ is a tighter bound.
This intuitive argument can be formalized by:
\[
  \TVD(\tilde{\dQ},\dR)
  = \frac12 \sum_{r\in I} \dR^\dagger(r) \cdot | \tfrac1{r^*} - \tfrac1r | 
  % = \frac12 \sum_{r\in I} \dR^\dagger(r) \cdot r r^* \cdot | {r^*} - r | 
  = \frac12 \sum_{r\in I} \dR(r) \cdot \frac1{r^*} \cdot | {r^*} - r | 
  = \frac1{r^*} \cdot \TVD(\tilde{\dP},\dR^\dagger) \,.
\]

Consider the case $I\subseteq [0,1]$, let $a,b$ be the endpoints of $I$, so $a < b\leq 1$.
Then $\TVD(\tilde{\dP},\dR^\dagger)$ can be upper bounded by one of the two following  arguments:
\begin{itemize}
  \item If $|r^* - r| \leq \epsr (1-r)$ for all $r\in I$,
    then 
    \[
      \TVD(\tilde{\dP},\dR^\dagger)
      = \frac12 \sum_{r\in I} \dR(r) \cdot | r^* - r |
      \leq \frac12 \epsr\sum_{r\in I} \dR(r) \cdot (1-r) \,.
    \]
    By its similarity with $\TVD(\dR) = \sum_{r \leq 1} \dR(r) (1-r)$,
    even if we sparsify the ratio with respect to many disjointed intervals,
    the total error is bounded by $\frac12\epsr \TVD(\dR)$.

    To ensure $|r^* - r| \leq \epsr (1-r)$ for all $r\in I$,
    the two endpoints should satisfy $(b-a) \leq \epsr (1-b)$.
    Note that this cannot be satisfied if $b=1$.

  \item If $|r^* - r| \leq \epsa$ for all $r\in I$,
    then 
    \[
      \TVD(\tilde{\dP},\dR^\dagger)
      = \frac12 \sum_{r\in I} \dR(r) \cdot | r^* - r |
      \leq \frac12 \sum_{r\in I} \dR(r) \cdot \epsa
      \leq \frac\epsa2  \,.
    \]

    To ensure $|r^* - r| \leq \epsa$ for all $r\in I$,
    the two endpoints should satisfy $b-a \leq \epsa$.
\end{itemize}

Based on the above observations, 
we choose the following partition of $[0,\infty]$,
which is a collection of disjoint intervals 
\[
  I_0,I_{1},\dots,I_{m-1},I_m,\{1\},J_m,J_{m-1},\dots,J_1,J_0,
\]
in that order.
% Let $\delta_{a}$ and  
\begin{itemize}
  \item 
    Let $I_t \deq [a_{t},a_{t+1})$ for every $t<m$, where $a_t \deq 1 - (1+\epsr)^{-t}$.
    Thus $a_{t+1} - a_{t} \leq \epsr (1-a_{t+1})$.
  \item
    Let $I_m \deq [a_m,1)$
    and let 
    \begin{align}\label{eq-def-m'}
      m \deq \left \lceil \frac{-\log(\epsa)}{\log(1 + \epsr)} \right \rceil.
    \end{align}
    Thus $1 - a_m \leq \epsa$.
\end{itemize}
Symmetrically, let $J_t \deq (\frac1{a_{t+1}}, \frac1{a_{t}}]$ for $t<m$ and let $J_m \deq (1, \frac1{a_m}]$.

Putting the above intervals into \textsf{SparsifyWrtIntervals} (Algorithm~\ref{alg-wrtintervals}) gives the \textsf{Sparsify} algorithm used in Lemma~\ref{lem:sparsify}.
\begin{equation}
\label{eq:def-sparsify}
   \textsf{Sparsify}(\dR,\epsr,\epsa) =  \textsf{SparsifyWrtIntervals}(\dR, \{ I_0,I_{1},\dots,I_{m-1},I_m,\{1\},J_m,J_{m-1},\dots,J_1,J_0\}).
\end{equation}

\subsection{Proof of the Sparsification Lemma}

This section proves the sparsification lemma (Lemma~\ref{lem:sparsify}).
The sparsified ratio $\tilde \dR = (\kappa\dR^\dagger \| \kappa\dR)$ for $\kappa$ defined in Algorithm~\ref{alg-wrtintervals}, thus $\tilde\dR \leq \dR$.
The support size of the sparsified ratio is no more than the number of intervals, which is $O(\frac{1}{\epsr}\log \frac{1}{\epsa})$.
% is $O(\frac{1}{\epsr}\log \frac{1}{\epsa})$ and they jointly cover $[0,\infty]$, 
%
% the support size of the sparsified ratio is also bounded by $O(\frac{1}{\epsr}\log \frac{1}{\epsa})$.
The running time is proportional to the number of intervals and the support size of the input ratio,
% (assuming the input ratio is represented as a sorted table),
which is $O(\frac{1}{\epsr}\log \frac{1}{\epsa} + |\supp(\dR)|)$.
The rest of the section forces on bounding the error with respect to the minimum total variation distance.

For each $i\in [m]$,
define $r^*_i \in I_i$ as 
\[
  r^*_i 
  \deq \begin{cases}
    \dfrac{\dR^\dagger(I_i)}{\dR(I_i)}, &\text{ if }\dR(I_i) \neq 0\\
    \text{any number in }I_i, &\text{ if }\dR(I_i) = 0\\
  \end{cases}
\]
Note that $\dR(I_i) = 0$ implies $\dR^\dagger(I_i) = 0$.
Symmetrically, define $r^{**}_i \in J_i$ as 
\[
  r^{**}_i \deq
  \begin{cases}
    \dfrac{\dR^\dagger(J_i)}{\dR(J_i)}, &\text{ if } \dR^\dagger(J_i) \neq 0 \\
    \text{any number in }J_i, &\text{ if } \dR^\dagger(J_i) = 0 \\
  \end{cases}
\]
The sparsified ratio $\tilde\dR$ and its alternative $\tilde\dR^\dagger$ can be written as
\[
  \tilde\dR(r) = \begin{cases}
    \dR(I_i), &\text{ if }r=r^*_i \text{ for some }i\in[m]\\
    \dR(J_i), &\text{ if }r=r^{**}_i \text{ for some }i\in[m]\\
    \dR(1), &\text{ if }r=1\\
    0, &\text{ otherwise }\\
  \end{cases}
  \qquad
  \tilde\dR^\dagger(r) = \begin{cases}
    \dR^\dagger(I_i), &\text{ if }r=r^*_i \text{ for some }i\in[m]\\
    \dR^\dagger(J_i), &\text{ if }r=r^{**}_i \text{ for some }i\in[m]\\
    \dR^\dagger(1), &\text{ if }r=1\\
    0, &\text{ otherwise }\\
  \end{cases}
\]

To bound $\MTVD(\dR,\tilde\dR)$, consider the following two distributions,
as inspired by the discussion of sparsification w.r.t.\ one interval,
\begin{equation*}
  \tilde\dQ(r) = \begin{cases}
    \dR(r), &\text{ if }  r\leq 1\\
    \dR^\dagger(r) / r^{**}_i, &\text{ if } r\in J_i \\
  \end{cases}
  \qquad
  \tilde\dP(r) = \begin{cases}
    \dR^\dagger(r), &\text{ if }  r\geq 1\\
    \dR(r) \cdot r^*, &\text{ if }  r\in I_i 
  \end{cases}
\end{equation*}

\begin{claim}
$\tilde\dP,\tilde\dQ$ are well-defined distributions and $\tilde\dR = (\tilde\dP \| \tilde\dQ)$.
\end{claim}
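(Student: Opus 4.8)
The plan is to prove the claim by direct pointwise verification, regarding $\tilde\dP$ and $\tilde\dQ$ as distributions on $[0,\infty]$ and using throughout the canonical-pair identity $\dR^\dagger(r)=r\,\dR(r)$ for finite $r$ (with $\dR^\dagger(\infty)=1-\Ex_{R\sim\dR}[R]$). First I would check that $\tilde\dP$ and $\tilde\dQ$ are genuine distributions. Non-negativity is immediate since $r^*_i\in I_i\subseteq[0,1)$ and $r^{**}_i\in J_i\subseteq(1,\infty]$, so $\dR(r)\,r^*_i\ge0$ and $\dR^\dagger(r)/r^{**}_i\ge0$. For the total mass of $\tilde\dP$ the key step is the identity $\sum_{r\in I_i}\dR(r)\,r^*_i=\dR^\dagger(I_i)$: when $\dR(I_i)\ne0$ this is exactly the definition of $r^*_i$, and when $\dR(I_i)=0$ we have $\dR^\dagger(I_i)=\sum_{r\in I_i}r\,\dR(r)=0$ too, so both sides vanish regardless of the arbitrary choice of $r^*_i$. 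Summing over $i$ (the $I_i$ cover $[0,1)$) and adding the $r\ge1$ part gives $\sum_r\tilde\dP(r)=\sum_r\dR^\dagger(r)=1$. Symmetrically, $\tilde\dQ$ is handled using $\sum_{r\in J_i}\dR^\dagger(r)/r^{**}_i=\dR(J_i)$, which holds because $r^{**}_i=\dR^\dagger(J_i)/\dR(J_i)$ whenever $\dR^\dagger(J_i)\ne0$, while if $\dR^\dagger(J_i)=0$ then $\dR(r)=0$ for every finite $r\in J_i$ (as $r>1$ there) and $\dR^\dagger(\infty)=0$ when $\infty\in J_i$, forcing $\dR(J_i)=0$; then $\sum_r\tilde\dQ(r)=\sum_r\dR(r)=1$.

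Next I would compute the likelihood ratio $\tilde\dP(x)/\tilde\dQ(x)$ at each $x\in\supp(\tilde\dQ)$: for $x\in I_i$ with $\dR(x)>0$ it equals $r^*_i\dR(x)/\dR(x)=r^*_i$; for $x=1$ it equals $\dR^\dagger(1)/\dR(1)=1$; and for $x\in J_i$ (finite or $x=\infty$) with $\tilde\dQ(x)>0$ --- which forces $r^{**}_i<\infty$ --- it equals $\dR^\dagger(x)\big/\bigl(\dR^\dagger(x)/r^{**}_i\bigr)=r^{**}_i$. Since the intervals $I_0,\dots,I_m,\{1\},J_m,\dots,J_0$ are pairwise disjoint and contain $r^*_i$, $1$, and $r^{**}_i$ respectively, these values are all distinct, so the law of $\tilde\dP(X)/\tilde\dQ(X)$ under $X\sim\tilde\dQ$ puts mass $\sum_{r\in I_i}\tilde\dQ(r)=\dR(I_i)$ at $r^*_i$, mass $\dR(1)$ at $1$, and mass $\sum_{r\in J_i}\tilde\dQ(r)=\sum_{r\in J_i}\dR^\dagger(r)/r^{**}_i=\dR(J_i)$ at $r^{**}_i$ (the last equality being the identity from the previous paragraph). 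Matching this against the displayed piecewise formula for $\tilde\dR$ gives $(\tilde\dP\|\tilde\dQ)=\tilde\dR$, and as a byproduct reconfirms that $\tilde\dR$ is a valid ratio.

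I expect the main obstacle to be careful handling of the degenerate configurations, which I would isolate at the start: (a) the atom at $\infty$, which belongs only to $J_0$ and satisfies $\dR(\infty)=0$ while $\dR^\dagger(\infty)$ may be positive --- here one checks $\tilde\dP(\infty)=\dR^\dagger(\infty)$ and $\tilde\dQ(\infty)=\dR^\dagger(\infty)/r^{**}_0$, and that this atom contributes exactly the extra term needed for $\sum_{r\in J_0}\dR^\dagger(r)/r^{**}_0=\dR(J_0)$; (b) intervals $I_i$ with $\dR(I_i)=0$ or $J_i$ with $\dR^\dagger(J_i)=0$, where $r^*_i$ or $r^{**}_i$ is chosen arbitrarily --- here one uses $\dR(I_i)=0\Rightarrow\dR^\dagger(I_i)=0$ and, for $r>1$, $\dR^\dagger(r)=0\Rightarrow\dR(r)=0$, so that $\tilde\dP$, $\tilde\dQ$, and $\tilde\dR$ all vanish on the offending points; and (c) the case $r^{**}_i=\infty$, which occurs precisely when $\dR^\dagger(J_i)\ne0$ but $\dR(J_i)=0$, where $\tilde\dQ$ assigns mass $0$ to all of $J_i$ and $\tilde\dR(\infty)=\dR(J_i)=0$, so nothing is mapped to $\infty$ and everything remains consistent. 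Once these cases are dispatched, the rest of the verification is routine.
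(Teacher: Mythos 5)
Your proof is correct and follows essentially the same route as the paper's: establish the total masses via the identities $\tilde\dP(I_i)=r^*_i\,\dR(I_i)=\dR^\dagger(I_i)$ and $\tilde\dQ(J_i)=\dR^\dagger(J_i)/r^{**}_i=\dR(J_i)$, then observe that under $X\sim\tilde\dQ$ the likelihood ratio is constantly $r^*_i$, $1$, or $r^{**}_i$ on each block, with the matching probabilities $\dR(I_i)$, $\dR(1)$, $\dR(J_i)$. The only difference is that you spell out the degenerate cases (the atom at $\infty$, zero-mass intervals, $r^{**}_0=\infty$) that the paper's proof passes over silently.
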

\begin{proof}
  For each $i\in[m]$,
  $\tilde\dQ(J_i) = \dR^\dagger(J_i) / r^{**}_i = \dR(J_i)$.
  Thus
  \[
    \tilde\dQ([0,\infty])
    = \tilde\dQ([0,1]) + \sum_{i\in[m]} \tilde\dQ(J_i)
    = \dR([0,1]) + \sum_{i\in[m]} \dR(J_i)
    = \dR([0,\infty])
    = 1 \,.
  \]
  So $\tilde\dQ$ is a distribution over $[0,\infty]$.
  Symmetrically, $\tilde\dP(I_i) = \dR(I_i) \cdot r^*_i = \dR^\dagger(I_i)$,
  and $\tilde\dP$ is a distribution.

  Say random variable $X\sim \tilde\dQ$,
  \begin{itemize}
    \item $X\in I_i \iff \frac{\tilde\dP(X)}{\tilde\dQ(X)} = r^*_i$ and  $\Pr[X\in I_i] = \tilde\dQ(I_i) = \dR(I_i)$.
    \item $X\in J_i \iff \frac{\tilde\dP(X)}{\tilde\dQ(X)} = r^{**}_i$ and $\Pr[X\in J_i] = \tilde\dQ(J_i) = \dR(J_i)$.
    \item $X= 1 \iff \frac{\tilde\dP(X)}{\tilde\dQ(X)} = 1$ and $\Pr[X = 1] = \tilde\dQ(1) = \dR(1)$.
  \end{itemize} 
  Thus $\tilde\dR = (\tilde\dP \| \tilde\dQ)$.
\end{proof}

Then $\MTVD(\tilde\dR,\dR)$ is upper bounded by
$\MTVD(\tilde\dR,\dR) \leq \max(\TVD(\tilde\dP,\dR^\dagger), \TVD(\tilde\dQ,\dR))$.
Consider the two terms on the right hand side separately.
\[
  \TVD(\tilde\dP,\dR^\dagger)
  = \frac12\sum_{r<1} \bigl|\tilde\dP(r) - \dR^\dagger(r)\bigr|
  = \frac12\sum_{i=1}^m \sum_{r\in I_i} \bigl|\tilde\dP(r) - \dR^\dagger(r)\bigr|
  = \frac12\sum_{i=1}^m \sum_{r\in I_i} \dR(r) \cdot \bigl|\tilde r^*_i - r\bigr|
\]
By how we choose the intervals, for each $i\in[m]$ and any $\tilde r^*_i,r\in I_i$,
we have
\[
  \bigl|r^*_i - r\bigr|
  \leq \begin{cases}
    \epsr (1-r), &\text{ if } i<m \\
    \epsa, &\text{ if } i = m \\
  \end{cases}
\]
Therefore,
\[
\begin{aligned}
  \TVD(\tilde\dP,\dR^\dagger)
  &\leq \frac12\sum_{i=1}^{m-1} \sum_{r\in I_i} \dR(r) \cdot \epsr (1-r)
  + \frac12 \sum_{r\in I_m} \dR(r) \cdot \epsa \\
  &\leq \frac12 \sum_{r<1} \dR(r) \cdot \epsr (1-r) + \frac12 \epsa \\
  &= \frac12 \epsr \TVD(\dR) + \frac12 \epsa \,.
\end{aligned}
\]

Symmetrically, for each $i\in[m]$ and any $r^{**}_i,r\in J_i$,
we have
\[
  \bigl|\frac1{r^{**}_i} - \frac1r \bigr|
  \leq \begin{cases}
    \epsr (1-\frac1r), &\text{ if } i<m \\
    \epsa, &\text{ if } i = m \\
  \end{cases}
\]
due to the choice of intervals.
Then $\TVD(\tilde\dQ,\dR)$ can be upper bounded by
\[
\begin{aligned}
  \TVD(\tilde\dQ,\dR)
  % &= \frac12 \sum_{r>1} \bigl| \tilde\dQ(r) - \dR(r) \bigr| 
  % = \frac12 \sum_{i\in[m]} \sum_{r\in J_i} \bigl| \tilde\dQ(r) - \dR(r) \bigr| \\
  % &= \frac12 \sum_{i=1}^m \sum_{r\in J_i} \dR^\dagger(r) \cdot \bigl| \frac1{r^{**}_i} - \frac1r \bigr| \\
  % &\leq \frac12 \sum_{i=1}^{m-1} \sum_{r\in J_i} \dR^\dagger(r) \cdot \epsr \Bigl( 1 - \frac1r \Bigr) 
  % + \frac12 \sum_{r\in J_m} \dR^\dagger(r) \cdot \epsa  \\
  &\leq \frac12 \sum_{r > 1} \dR^\dagger(r) \cdot \epsr \Bigl( 1 - \frac1r \Bigr) 
  + \frac12 \cdot \epsa  
  = \frac12 \epsr \TVD(\dR) + \frac12 \epsa \,.
\end{aligned}
\]
The last equality symbol relies on 
\[
  \TVD(\dR)
  = \TVD(\dR^\dagger,\dR)
  = \sum_{r \text{ s.t.\ } \dR^\dagger(r) > \dR(r)} (\dR^\dagger(r) - \dR(r))
  = \sum_{r>1} \dR^\dagger(r) \Bigl( 1 - \frac1r \Bigr) \,.
\]

As the conclusion, 
$\MTVD(\tilde\dR,\dR) \leq \max(\TVD(\tilde\dP,\dR^\dagger), \TVD(\tilde\dQ,\dR)) \leq \frac12(\epsr \TVD(\dR) + \epsa)$.

\section{Estimate TV Distance between Product Distributions}
\label{sec:product}

The two product distributions of interest are 
$\dP=\dP_1 \dP_2 \ldots \dP_n$ and $\dQ=\dQ_1 \dQ_2 \ldots \dQ_n$,
where $\dP_1,\allowbreak\dP_2,\ldots,\dP_n,\allowbreak\dQ_1,\dQ_2,\ldots,\dQ_n$ are $2n$ distributions over sample space $[q]$.
We present a deterministic algorithm that estimates the TV distance between $\dP$ and $\dQ$.
The intuition has be explained in the technical overview (Section~\ref{sec:overview}).

\begin{theorem}
\label{thm:product}
\label{thm-main-1}
  There exists a deterministic algorithm (Algorithm~\ref{alg:product}).
  Given two product distributions $\dP,\dQ$ over $[q]^n$ and $\varepsilon >0$, 
  it outputs an estimation $\widehat{\Delta}$ satisfying $(1-\varepsilon)\TVD(\dP, \dQ) \leq \widehat{\Delta} \leq \TVD(\dP, \dQ)$,
  in time $O\bigl(\frac{qn^2}{\varepsilon} \log q \log \frac{n}{\varepsilon \TVD(P,Q)}\bigr)$.
\end{theorem}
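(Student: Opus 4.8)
The plan is to instantiate the framework sketched in the technical overview, combining Proposition~\ref{prop-product}, the sparsification lemma (Lemma~\ref{lem:sparsify}), and the three structural properties of $\MTVD$ (Lemmas~\ref{lem-metric}, \ref{lem:error-MTVD}, \ref{lem:product-MTVD}). First I would describe the algorithm precisely: compute each single-coordinate ratio $\dR_i = (\dP_i\|\dQ_i)$ directly from the marginals (support size at most $q$); set $\tilde\dR_{1:1} = \dR_1$; and for $i$ from $1$ to $n-1$, form the product $\tilde\dR_{1:i}\indpprod\dR_{i+1}$ and then apply $\textsf{Sparsify}(\cdot, \epsr, \epsa)$ to obtain $\tilde\dR_{1:i+1}$, with parameters $\epsr, \epsa$ to be chosen as functions of $\varepsilon, n$, and a lower bound on $\TVD(\dP,\dQ)$. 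Finally output $\widehat\Delta = \TVD(\tilde\dR_{1:n})$. Because $\textsf{Sparsify}$ returns $\tilde\dR \leq \dR$ (i.e.\ weaker in the order of Definition~\ref{def:order}), each $\tilde\dR_{1:i}$ is weaker than $\dR_1\indpprod\cdots\indpprod\dR_i$ by Proposition~\ref{prop:product-order} and transitivity of the order, hence by the data processing inequality $\widehat\Delta = \TVD(\tilde\dR_{1:n}) \leq \TVD(\dR) = \TVD(\dP,\dQ)$; this gives the upper-bound half of the output guarantee for free.

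For the lower bound, I would track the accumulated $\MTVD$ error. Write $\dR_{1:i} = \dR_1\indpprod\cdots\indpprod\dR_i$. At step $i$, the merging error is $\MTVD(\tilde\dR_{1:i}\indpprod\dR_{i+1}, \tilde\dR_{1:i+1}) \leq \tfrac12(\epsr \TVD(\tilde\dR_{1:i}\indpprod\dR_{i+1}) + \epsa)$ by Lemma~\ref{lem:sparsify}. Using Lemma~\ref{lem:product-MTVD} to ``transport'' each such error through the remaining independent products, and the triangle inequality (Lemma~\ref{lem-metric}), the total error satisfies
\[
  \MTVD(\dR, \tilde\dR_{1:n}) \leq \sum_{i=1}^{n-1} \MTVD\bigl(\tilde\dR_{1:i}\indpprod\dR_{i+1},\ \tilde\dR_{1:i+1}\bigr)
  \leq \frac12\sum_{i=1}^{n-1}\bigl(\epsr \TVD(\tilde\dR_{1:i}\indpprod\dR_{i+1}) + \epsa\bigr).
\]
Now I need $\TVD(\tilde\dR_{1:i}\indpprod\dR_{i+1}) \leq \TVD(\dP,\dQ) + (\text{accumulated error so far})$: since $\tilde\dR_{1:i}\indpprod\dR_{i+1}$ is weaker than $\dR_{1:i+1}$, which is weaker than $\dR$, data processing gives $\TVD(\tilde\dR_{1:i}\indpprod\dR_{i+1}) \leq \TVD(\dR) = \TVD(\dP,\dQ)$ outright — so in fact no inductive slack is needed and each term is bounded by $\tfrac12(\epsr\TVD(\dP,\dQ) + \epsa)$. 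Thus $\MTVD(\dR,\tilde\dR_{1:n}) \leq \tfrac{n-1}{2}(\epsr \TVD(\dP,\dQ) + \epsa)$, and by Lemma~\ref{lem:error-MTVD}, $|\TVD(\dP,\dQ) - \widehat\Delta| \leq (n-1)(\epsr\TVD(\dP,\dQ) + \epsa)$.

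It remains to choose $\epsr$ and $\epsa$. Taking $\epsr = \tfrac{\varepsilon}{4n}$ makes $(n-1)\epsr\TVD(\dP,\dQ) \leq \tfrac{\varepsilon}{4}\TVD(\dP,\dQ)$; and taking $\epsa = \tfrac{\varepsilon}{4n}\cdot\Delta_0$ where $\Delta_0$ is a known lower bound on $\TVD(\dP,\dQ)$ makes the additive part at most $\tfrac{\varepsilon}{4}\TVD(\dP,\dQ)$ as well, giving $\widehat\Delta \geq (1-\varepsilon)\TVD(\dP,\dQ)$ (with room to spare). The one genuine subtlety — and the step I expect to need the most care — is obtaining such a $\Delta_0$ in deterministic polynomial time, since $\log\tfrac1{\Delta_0}$ must be polynomial in the input size, and the running time in the theorem statement explicitly features $\log\tfrac{n}{\varepsilon\TVD(\dP,\dQ)}$. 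I would argue that a crude lower bound suffices: either $\TVD(\dP,\dQ) = 0$ (detectable, e.g.\ by checking whether all $\dR_i$ are point masses at $1$), or else some coordinate contributes, and one can lower-bound $\TVD(\dP,\dQ)$ by (a function of) the smallest nonzero value of $|\dP_i(a) - \dQ_i(a)|$ over all $i,a$ together with the other marginals' overlap — this is a polynomial-bit quantity, so $\log\tfrac1{\Delta_0}$ is polynomial, and a more careful bound recovers the exact dependence claimed. Finally, the running time: the support of $\tilde\dR_{1:i}$ is $O(\tfrac1{\epsr}\log\tfrac1{\epsa}) = O(\tfrac{n}{\varepsilon}\log\tfrac{n}{\varepsilon\TVD})$, the product with $\dR_{i+1}$ (support $\leq q$) has support $O(\tfrac{qn}{\varepsilon}\log\tfrac{n}{\varepsilon\TVD})$, sorting it costs an extra $\log q$ factor, and each $\textsf{Sparsify}$ call runs in time linear in that support by Lemma~\ref{lem:sparsify}; summing over $n$ rounds yields the stated $O\bigl(\tfrac{qn^2}{\varepsilon}\log q\,\log\tfrac{n}{\varepsilon\TVD(\dP,\dQ)}\bigr)$ bound. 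I would also need to remark that arithmetic on the rational probability masses stays within polynomial bit-length, which follows because each sparsification replaces a block of masses by their sum and a weighted-average ratio, both of controlled bit-size over $n$ rounds.
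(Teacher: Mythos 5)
Your algorithm and error analysis are essentially the paper's proof: the iterate--multiply--sparsify loop, the use of the order relation plus Proposition~\ref{prop:product-order} and the data-processing inequality to get $\widehat\Delta \leq \TVD(\dP,\dQ)$ for free, the per-step bound $\frac12(\epsr\TVD(\cdot)+\epsa)$ from Lemma~\ref{lem:sparsify} with $\TVD(\tilde\dR_{1:i}\indpprod\dR_{i+1})\leq\TVD(\dP,\dQ)$ so that no inductive slack accumulates, the transport of each step's error through the remaining factors via Lemma~\ref{lem:product-MTVD}, summation by the triangle inequality (the paper packages this as a forward induction, which is the same computation), and the final factor-$2$ conversion by Lemma~\ref{lem:error-MTVD}. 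Your parameter choices $\epsr=\frac{\varepsilon}{4n}$, $\epsa=\frac{\varepsilon}{4n}\Delta_0$ and the running-time accounting also match the paper up to constants.

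The one genuine gap is the deterministic lower bound $\Delta_0$, which you correctly flag as the delicate step but do not actually supply. The crude bound you sketch (the smallest nonzero value of $|\dP_i(a)-\dQ_i(a)|$, combined with the other marginals' overlap) only guarantees $\log\frac1{\Delta_0}=\poly(\text{input size})$; that yields a polynomial-time algorithm but not the stated bound $O\bigl(\frac{qn^2}{\varepsilon}\log q\log\frac{n}{\varepsilon\TVD(\dP,\dQ)}\bigr)$, because the support size of each sparsified ratio is $\Theta(\frac1\epsr\log\frac1\epsa)$ and the smallest nonzero pointwise gap of a single marginal can be exponentially smaller than $\TVD(\dP,\dQ)$ (e.g.\ one coordinate with gap $2^{-L}$ while another coordinate already forces $\TVD(\dP,\dQ)=1$), so $\log\frac1\epsa$ need not be $O(\log\frac{n}{\varepsilon\TVD(\dP,\dQ)})$. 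The assertion that ``a more careful bound recovers the exact dependence'' is exactly what has to be proved. The paper's fix is short: set $\LBTV\deq\max_{i}\TVD(\dP_i,\dQ_i)$, computable in $O(qn)$ time; then $\LBTV\leq\TVD(\dP,\dQ)$ because marginalizing to coordinate $i$ is data processing, and $\TVD(\dP,\dQ)\leq\sum_{i}\TVD(\dP_i,\dQ_i)\leq n\LBTV$ by coupling each coordinate optimally and a union bound, so $\log\frac1{\LBTV}\leq\log\frac{n}{\TVD(\dP,\dQ)}$ and the claimed complexity follows. Substituting this $\LBTV$ for your $\Delta_0$ (together with your handling of the degenerate case $\TVD(\dP,\dQ)=0$) completes your argument.
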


\begin{algorithm}
\caption{A Deterministic FPTAS for the TV distance between product distributions} \label{alg:product}
  \KwIn{$2n$ distributions $\dP_1,\dP_2,\ldots,\dP_n, \dQ_1,\dQ_2,\ldots,\dQ_n$, and an error bound $\varepsilon > 0$;}
  % \KwOut{an $\varepsilon$-approximation to $\TVD(\dP,\dQ)$;}
  %Initialise the set $\Omega_R = \{1\}$ and the function $f_R$ as $f_R(1) = 1$\\
  Let $\LBTV \gets \max_{1 \leq i \leq n}\TVD(\dP_i,\dQ_i)$\\
  Let $\dR'_{1:1} \gets \dR_1$, where $\dR_1 \deq (\dP_1\|\dQ_1)$\\
  \For{$k$ from $1$ to $n-1$}{
    $\tilde\dR_{1:k} \gets \textsf{Sparsify}(\dR'_{1:k},\frac{\varepsilon}{2n},\frac{\varepsilon}{2n} \LBTV)$\\ %, where $\delta = \frac{\varepsilon}{2n}$\\
    $\dR'_{1:k+1} \gets \tilde\dR_{1:k} \indpprod \dR_{k+1}$, where $\dR_{k+1} \deq (\dP_{k+1}||\dQ_{k+1}) $\\
  }
  \Return $\widehat{\Delta}= \TVD(\dR'_{1:n}) $\\
\end{algorithm}

The algorithm starts by computing a relatively tight lower bound of the $\TVD(\dP,\dQ)$.
\begin{claim}
  $\frac{\TVD(\dP,\dQ)}{n} \leq  \LBTV \leq \TVD(\dP,\dQ)$. 	
\end{claim}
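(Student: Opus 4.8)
The plan is to prove the two inequalities $\LBTV \le \TVD(\dP,\dQ)$ and $\TVD(\dP,\dQ) \le n \cdot \LBTV$ separately, where $\LBTV = \max_{1\le i\le n}\TVD(\dP_i,\dQ_i)$.

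For the upper bound $\LBTV \le \TVD(\dP,\dQ)$, the idea is a data-processing / marginalization argument: for any fixed index $i$, the marginal distribution of the $i$-th coordinate under $\dP$ is exactly $\dP_i$ (since $\dP$ is a product distribution), and likewise for $\dQ$. Projection onto the $i$-th coordinate is a deterministic map, hence a Markov kernel, so by the data processing inequality for TV distance, $\TVD(\dP_i,\dQ_i) = \TVD(\dP,\dQ)$ restricted to that coordinate $\le \TVD(\dP,\dQ)$. Taking the maximum over $i$ gives $\LBTV \le \TVD(\dP,\dQ)$. Alternatively, one can phrase this purely in the ratio language: $(\dP_i\|\dQ_i)$ is obtained from $(\dP\|\dQ)$ via $\dR_i \le \dR$ (using Proposition~\ref{prop:product} together with the observation that marginalizing out the other coordinates corresponds to the order relation $\le$ on ratios), and then $\dR_i \le \dR \implies \TVD(\dR_i) \le \TVD(\dR)$.

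For the lower bound $\TVD(\dP,\dQ) \le \sum_{i=1}^n \TVD(\dP_i,\dQ_i) \le n\cdot\LBTV$, the plan is the standard hybrid (telescoping coupling) argument for product distributions. Define hybrid distributions $\dH_k \deq \dP_1\cdots\dP_k\,\dQ_{k+1}\cdots\dQ_n$ for $0\le k\le n$, so $\dH_0 = \dQ$ and $\dH_n = \dP$. By the triangle inequality for TV distance, $\TVD(\dP,\dQ) \le \sum_{k=1}^n \TVD(\dH_{k-1},\dH_k)$. Since $\dH_{k-1}$ and $\dH_k$ differ only in the $k$-th coordinate (sharing the product form $\dP_1\cdots\dP_{k-1}\,(\cdot)\,\dQ_{k+1}\cdots\dQ_n$), a short computation using the product structure gives $\TVD(\dH_{k-1},\dH_k) = \TVD(\dQ_k,\dP_k)$. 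Summing yields the claim, and each term is at most $\LBTV$.

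I do not expect a serious obstacle here; both halves are classical facts about product distributions. The only point requiring a little care is the identity $\TVD(\dH_{k-1},\dH_k) = \TVD(\dP_k,\dQ_k)$, which should be checked by writing out $\frac12\sum_{x}|\dH_{k-1}(x)-\dH_k(x)|$, factoring the common product of the other coordinates out of the sum (it sums to $1$), and recognizing the remaining sum over the $k$-th coordinate as $2\TVD(\dP_k,\dQ_k)$. If one prefers to stay within the ratio framework already developed in the paper, the upper bound follows immediately from Definition~\ref{def:order} and the data-processing remark after it, while the lower bound can still be done via hybrids on the underlying distributions.
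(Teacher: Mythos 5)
Your proposal is correct, and the upper bound $\LBTV \leq \TVD(\dP,\dQ)$ is handled the same way as in the paper (the paper simply asserts $\TVD(\dP_i,\dQ_i)\le\TVD(\dP,\dQ)$, which is exactly your data-processing/projection observation). For the lower bound, however, you take a genuinely different route: the paper couples the two product distributions coordinatewise, taking $(X_i,Y_i)$ to be an optimal coupling of $(\dP_i,\dQ_i)$ independently across $i$, and then bounds $\TVD(\dP,\dQ)\le\Pr[(X_1,\dots,X_n)\ne(Y_1,\dots,Y_n)]\le\sum_i\Pr[X_i\ne Y_i]=\sum_i\TVD(\dP_i,\dQ_i)$ via the union bound; you instead telescope through the hybrids $\dH_k=\dP_1\cdots\dP_k\,\dQ_{k+1}\cdots\dQ_n$ and compute $\TVD(\dH_{k-1},\dH_k)=\TVD(\dP_k,\dQ_k)$ by factoring the common product out of the $L_1$ sum. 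Both yield the same subadditivity bound $\TVD(\dP,\dQ)\le\sum_{i=1}^n\TVD(\dP_i,\dQ_i)\le n\LBTV$. The paper's coupling argument is slightly shorter but invokes the coupling characterization of TV distance; your hybrid argument stays entirely within the $L_1$ definition, gives an exact identity at each step, and is the same telescoping device the paper later reuses (in Lemma~\ref{lem-lb-mar}) for the Markov-chain lower bound, where the per-step distance is only bounded rather than computed exactly — so your approach generalizes in the direction the paper itself needs.
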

\begin{proof}
  $\TVD(\dP_i,\dQ_i) \leq \TVD(\dP,\dQ)$ for every $i\in[n]$, thus $\LBTV \leq \TVD(\dP,\dQ)$. 

  Let $(X_1,Y_1),\dots,(X_n,Y_n)$ be independent and $(X_i,Y_i)$ is the optimal coupling of $(\dP_i,\dQ_i)$.
  % Let $X=(X_1,\dots,X_n) \sim\dP, Y=(Y_1,\dots,Y_n) \sim\dQ$.
  Hence, 
  \[
    \TVD(\dP,\dQ) \leq  \Pr[(X_1,\dots,X_n) \neq (Y_1,\dots,Y_n) ] \leq \sum_{i=1}^n\Pr[X_i \neq Y_i] =\sum_{i=1}^n \TVD(\dP_i,\dQ_i) \leq n \LBTV \,.
    \qedhere
  \]
\end{proof}

The algorithm then iteratively computes $\dR_{1:k}'$ for $k = 1$ up to $n$.
$\dR_{1:k}'$ is an approximation of the ``actual'' ratio 
$\dR_{1:k} \deq (\dP_1\dP_2\dots\dP_k \| \dQ_1\dQ_2\dots\dQ_k)$.
% between $\dP_1\dP_2\dots\dP_k$ and $\dQ_1\dQ_2\dots\dQ_k$.
% Denoted the actual ratio by
% \[
%   \dR_{1:k} \deq (\dP_1\dP_2\dots\dP_k \| \dQ_1\dQ_2\dots\dQ_k)
% \]
\begin{claim}
  For each $k \in [n]$,
  it holds that $\dR_{1:k}' \leq \dR_{1:k}$ 
  and $\MTVD(\dR_{1:k}', \dR_{1:k}) \leq \frac{k-1}{2n}\varepsilon \TVD(\dP,\dQ)$. 
\end{claim}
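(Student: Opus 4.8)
The plan is to prove the claim by induction on $k$, using the three properties of $\MTVD$ established earlier (Lemmas~\ref{lem-metric}, \ref{lem:product-MTVD}) together with the guarantee of \textsf{Sparsify} (Lemma~\ref{lem:sparsify}). The base case $k=1$ is immediate: $\dR_{1:1}' = \dR_1 = \dR_{1:1}$, so $\dR_{1:1}' \leq \dR_{1:1}$ trivially and $\MTVD(\dR_{1:1}', \dR_{1:1}) = 0 = \frac{0}{2n}\varepsilon\TVD(\dP,\dQ)$.

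For the inductive step, assume the claim holds for $k$. First I would handle the ``$\leq$'' part: $\dR_{1:k+1}' = \tilde\dR_{1:k} \indpprod \dR_{k+1}$ where $\tilde\dR_{1:k} = \textsf{Sparsify}(\dR_{1:k}', \ldots) \leq \dR_{1:k}'$ by Lemma~\ref{lem:sparsify}. By the inductive hypothesis $\dR_{1:k}' \leq \dR_{1:k}$, and the order $\leq$ is transitive, so $\tilde\dR_{1:k} \leq \dR_{1:k}$. Then Proposition~\ref{prop:product-order} gives $\tilde\dR_{1:k} \indpprod \dR_{k+1} \leq \dR_{1:k} \indpprod \dR_{k+1}$, and by Proposition~\ref{prop:product} the right side equals $(\dP_1\cdots\dP_k\dP_{k+1} \| \dQ_1\cdots\dQ_k\dQ_{k+1}) = \dR_{1:k+1}$, so $\dR_{1:k+1}' \leq \dR_{1:k+1}$.

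Next, the distance bound. I would split $\MTVD(\dR_{1:k+1}', \dR_{1:k+1})$ via the triangle inequality through the intermediate point $\dR_{1:k}' \indpprod \dR_{k+1}$:
\[
  \MTVD(\dR_{1:k+1}', \dR_{1:k+1})
  \leq \MTVD(\tilde\dR_{1:k} \indpprod \dR_{k+1},\, \dR_{1:k}' \indpprod \dR_{k+1})
  + \MTVD(\dR_{1:k}' \indpprod \dR_{k+1},\, \dR_{1:k} \indpprod \dR_{k+1}).
\]
For the first term, apply Lemma~\ref{lem:product-MTVD} with $\dR_{k+1}$ paired against itself (contributing $0$): it is at most $\MTVD(\tilde\dR_{1:k}, \dR_{1:k}')$, which by the \textsf{Sparsify} guarantee is at most $\frac12\bigl(\frac{\varepsilon}{2n}\TVD(\dR_{1:k}') + \frac{\varepsilon}{2n}\LBTV\bigr)$. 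Here I would use $\TVD(\dR_{1:k}') \leq \TVD(\dR_{1:k}) = \TVD(\dP_1\cdots\dP_k, \dQ_1\cdots\dQ_k) \leq \TVD(\dP,\dQ)$ — the first inequality because $\dR_{1:k}' \leq \dR_{1:k}$ implies $\TVD(\dR_{1:k}') \leq \TVD(\dR_{1:k})$ by the data-processing remark after Definition~\ref{def:order}, and the last because adding coordinates only increases TV distance — and $\LBTV \leq \TVD(\dP,\dQ)$. So the first term is at most $\frac12\bigl(\frac{\varepsilon}{2n}\TVD(\dP,\dQ) + \frac{\varepsilon}{2n}\TVD(\dP,\dQ)\bigr) = \frac{\varepsilon}{2n}\TVD(\dP,\dQ)$. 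For the second term, Lemma~\ref{lem:product-MTVD} again gives $\leq \MTVD(\dR_{1:k}', \dR_{1:k})$, which by the inductive hypothesis is $\leq \frac{k-1}{2n}\varepsilon\TVD(\dP,\dQ)$. Summing, $\MTVD(\dR_{1:k+1}', \dR_{1:k+1}) \leq \frac{k-1}{2n}\varepsilon\TVD(\dP,\dQ) + \frac{1}{2n}\varepsilon\TVD(\dP,\dQ) = \frac{k}{2n}\varepsilon\TVD(\dP,\dQ)$, completing the induction.

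The only subtlety — not really an obstacle — is making sure the sparsification error is measured against $\TVD(\dR_{1:k}')$ (the input ratio to \textsf{Sparsify}) rather than $\TVD(\dP,\dQ)$, and then bounding $\TVD(\dR_{1:k}')$ upward correctly; the monotonicity $\dR_{1:k}' \leq \dR_{1:k} \Rightarrow \TVD(\dR_{1:k}') \leq \TVD(\dR_{1:k})$ is exactly what is needed, and it is already available. Everything else is a routine assembly of the stated lemmas.
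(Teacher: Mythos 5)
Your proposal is correct and follows essentially the same route as the paper: induction on $k$, monotonicity of \textsf{Sparsify} plus Proposition~\ref{prop:product-order} for the order part, and the \textsf{Sparsify} error bound combined with Lemma~\ref{lem:product-MTVD} and the triangle inequality (with $\TVD(\dR'_{1:k})\leq\TVD(\dP,\dQ)$ via data processing and $\LBTV\leq\TVD(\dP,\dQ)$) for the distance bound. The only difference is cosmetic: you apply the triangle inequality at the level of the products and then Lemma~\ref{lem:product-MTVD} to each term, whereas the paper applies Lemma~\ref{lem:product-MTVD} first and then the triangle inequality on the un-multiplied ratios, which yields the identical bound.
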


\begin{proof}
  It is proved by induction on $k$.
  The base case is satisfied as $\dR_{1:1}' = (\dP_1\|\dQ_1) = \dR_{1:1}$.

  For larger $k$, the inductive assumption is
  $\dR_{1:k-1}' \leq \dR_{1:k-1}$ and $\MTVD(\dR_{1:k-1}', \dR_{1:k-1}) \leq \frac{k-2}{2n}\varepsilon \TVD(\dP,\dQ)$.
  By the data processing inequality for TV distance,
  % $\dR_{1:k-1}' \leq \dR_{1:k-1}$ also implies 
  $\TVD(\dR_{1:k-1}') \leq \TVD(\dR_{1:k-1}) \leq \TVD(\dP,\dQ)$.
  By the nature of the sparsification process (Lemma~\ref{lem:sparsify}), 
  the sparsified ratio $\tilde\dR_{1:k-1} \leq \dR_{1:k-1}'$
  and
  \begin{align}\label{eq-bound-1}
    \MTVD(\tilde\dR_{1:k-1}, \dR_{1:k-1}')
    \leq \frac12\bigl( \frac{\varepsilon}{2n} \TVD(\dR_{1:k-1}') + \frac{\varepsilon}{2n} \LBTV \bigr)
    \leq \frac{\varepsilon}{2n} \TVD(\dP,\dQ) \,.
  \end{align}
  Then the inductive proof is concluded by
  % by Proposition~\ref{prop:product-order}
  % and Lemma~\ref{lem:product-MTVD}
  \[
    \dR_{1:k}'
    = \tilde\dR_{1:k-1} \indpprod \dR_k
    \underset{\text{Proposition~\ref{prop:product-order}}}\leq \dR_{1:k-1} \indpprod \dR_k
    = \dR_{1:k} \,,
  \]
  \[
  \begin{aligned}[b]
    &\MTVD(\dR_{1:k}', \dR_{1:k})
    \underset{\text{Lemma~\ref{lem:product-MTVD}}}\leq \MTVD(\tilde\dR_{1:k-1}, \dR_{1:k-1}) \\
    \leq \,&\MTVD(\tilde\dR_{1:k-1}, \dR_{1:k-1}') + \MTVD(\dR_{1:k-1}, \dR_{1:k-1}')
    \underset{\text{I.H. and }\eqref{eq-bound-1}}\leq \frac{1}{2n} \varepsilon \TVD(\dP,\dQ) + \frac{k-2}{2n}\varepsilon \TVD(\dP,\dQ) \,.
  \end{aligned}\qedhere
  \]
\end{proof}

% \subsection{The Analysis (Proof of \Cref{thm-main-1})}

In the end, the algorithm outputs $\widehat{\Delta}= \TVD(\dR'_{1:n})$ as an estimation of the TV distance.
Since $\MTVD(\dR'_{1:n}, \dR_{1:n}) \leq \frac12\varepsilon \TVD(\dP,\dQ)$,
we have
\[
  |\widehat{\Delta} - \TVD(\dP,\dQ)|
  = |\TVD(\dR'_{1:n}) - \TVD(\dR_{1:n})|
  \leq  \varepsilon \TVD(\dP,\dQ) \,.
\]
In the meanwhile, $\dR'_{1:n} \leq \dR_{1:n}$ implies
\[
  \widehat{\Delta} = \TVD(\dR'_{1:n}) \leq \TVD(\dR_{1:n}) = \TVD(\dP,\dQ) \,.
\]

The running time of our algorithm is dominated by the main loop.
% In each iteration,
The support size of $\tilde\dR_{1:k}$ is bounded by 
$O\bigl(\frac{n}{\varepsilon} \ln \frac{n}{\varepsilon \LBTV}\bigr) \leq O \bigl(\frac{n}{\varepsilon} \ln \frac{n}{\varepsilon \TVD(P,Q)}\bigr)$,
by Lemma~\ref{lem:sparsify}.
The support size of $\dR_{1:k+1}'$ is at most $q$ times larger.
The time complexity of computing $\dR_{1:k+1}'$ is $O\bigl(\frac{qn}{\varepsilon} \log q \log \frac{n}{\varepsilon \TVD(P,Q)}\bigr)$,
which is the complexity of merging $q$ sorted list, each list of length $O\bigl(\frac{n}{\varepsilon} \log \frac{n}{\varepsilon \TVD(P,Q)}\bigr)$.
The time complexity of computing $\tilde\dR_{1:k+1}$ from $\dR_{1:k+1}'$ is smaller.
Thus the time complexity of each iteration is dominated by computing $\dR_{1:k+1}'$,
and the total time complexity is $O\bigl(\frac{qn^2}{\varepsilon} \log q \log \frac{n}{\varepsilon \TVD(P,Q)}\bigr)$.

% Algorithm \ref{alg:product} has $n$ iterations in total.
% By~Algorithm \ref{alg:product} and~\eqref{eq-def-m'}, the support size of each sparsified distribution is at most
% \begin{align*}
% 	2m + 3 = O\tp{\frac{ \ln\frac{1}{\delta \LBTV } }{\ln(1+\delta)}} = O\tp{\frac{1}{\delta} \cdot \ln \frac{1}{\delta \LBTV}} = O\tp{\frac{n}{\varepsilon} \ln \frac{n}{\varepsilon \LBTV}} =  O\tp{\frac{n}{\varepsilon} \ln \frac{n}{\varepsilon \TVD(P,Q) }}, 
% \end{align*}
% where the last equation follows from \Cref{prop-lb}. By~\Cref{thm:spar}, the running time of each iteration can be bounded by $O\tp{\frac{qn}{\varepsilon} \log \frac{qn}{\varepsilon \TVD(P,Q) }}$.
% The total running time is $O\tp{\frac{qn^2}{\varepsilon} \log \frac{qn}{\varepsilon \TVD(P,Q) }}$.

% \tianren{New complexity  $O\bigl(\frac{qn^2}{\varepsilon} {\color{red}\log q} \log \frac{n}{\varepsilon \TVD(P,Q)}\bigr)$ 
% Old complexity $O\tp{\frac{qn^2}{\varepsilon} \log \frac{{\color{red}q}n}{\varepsilon \TVD(P,Q) }}$}

% \input{markov}
\section{Estimate TV Distance between Markov Chains}\label{sec-markov-setting-new}\label{sec:markov}

% \liqiang{In order to approximate the TV-distance between two Markov Chain trajectories (as formally defined in the next paragraph), one can still follow our methodology: taking the $n$ random variables into account one-by-one and use \emph{Sparsification} to give an approximation of the current TV-distance, at the same time keeping the support of distributions relatively small (recap Algorithm~\ref{alg:product}). The major obstacle is that, in the case of Markov chain, one cannot easily take a new random variable into the ratio by simply doing independent product, as shown in line (3) of Algorithm~\ref{alg:product}. Thus, as put in this section, instead of \emph{ratio}, we focus on \emph{conditional ratio}, of which the probability distributions are both "conditional distribution", and one can observe a linear relation as the counterpart of line (3) in Algorithm~\ref{alg:product} (we call it \emph{Concatenate} in this section).}

A Markov chain $\dP$ over $[q]^n$ can be represented by 
its initial distribution $\dP_1$ and $n-1$ Markov kernels $\kappa_2,\dots,\kappa_n$
such that 
$\dP(x_1,x_2,\dots,x_n) = \dP_1(x_1) \kappa_2(x_2|x_1) \dots \kappa_n(x_n|x_{n-1})$.
In the Markov chain setting, we use the conventional notation $\dP\MK{k}{k-1}$ to denote the Markov kernel $\kappa_k$.
As the notation suggested,
if $(X_1,\dots,X_n) \sim \dP$, 
\[
  \dP\MK{k}{k-1}(x_{k} | x_{k-1}) = \Pr[X_k = x_{k} \mid X_{k-1} = x_{k-1}] 
\]
for all $x_{k-1},x_k \in [q]$ such that $\Pr[X_{k-1} = x_{k-1}] > 0$.

Similarly, another Markov chain $\dQ$ is represented by its initial distribution $\dQ_1$ and Markov kernels $\dQ\MK21, \dots, \dQ\MK{n}{n - 1}$.
% The goal is to approximate the TV distance between $\dP$ and $\dQ$.
We give a deterministic FPTAS for the TV distance between $\dP$ and $\dQ$.

\begin{theorem}\label{thm-main-2}\label{thm:markov}
  There is a deterministic algorithm (Algorithm~\ref{alg:markov}) such that
  given two Markov chains $\dP,\dQ$ over $[q]^n$ and an error bound $\varepsilon > 0$, 
  it outputs $\widehat{\Delta}$ satisfying $(1-\varepsilon)\TVD(\dP, \dQ) \leq \widehat{\Delta} \leq \TVD(\dP, \dQ)$ 
  in time $O(\frac{n^2q^2}{\varepsilon} \log q  \log (\frac{n}{\varepsilon \TVD(\dP,\dQ)}))$.
\end{theorem}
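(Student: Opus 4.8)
The plan is to mimic the proof of \Cref{thm-main-1} almost line for line, replacing the ratio $\dR_{1:k}$ by a richer object that also remembers the current state --- this is forced on us because the step-$k$ transition kernel depends on that state. For $(X_1,\dots,X_n)\sim\dQ$ and $k\in[n]$, let $R_{1:k}\deq\dP_{1:k}(X_{1:k})/\dQ_{1:k}(X_{1:k})=\frac{\dP_1(X_1)}{\dQ_1(X_1)}\prod_{i=2}^k\frac{\dP\MK{i}{i-1}(X_i\mid X_{i-1})}{\dQ\MK{i}{i-1}(X_i\mid X_{i-1})}$ be the partial likelihood ratio; I would track $\mu_k$, the joint distribution of $(X_k,R_{1:k})$ --- a \emph{state-augmented ratio}, i.e.\ a distribution over $[q]\times[0,\infty)$ --- together with its alternative $\mu_k^\dagger$, the analogous joint under $\dP$ over $[q]\times(0,\infty]$. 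Carrying $\mu_k^\dagger$ explicitly is necessary: unlike for plain ratios, $\mu_k$ does \emph{not} determine $\mu_k^\dagger$, since the leftover mass $1-\Ex_{\mu_k}[R]$ may be spread over the states arbitrarily (symmetrically, the $r=0$ fibres of $\mu_k$ are invisible to $\mu_k^\dagger$); so a state-augmented ratio is really the pair $(\mu_k,\mu_k^\dagger)$, to which I extend the notions of \Cref{sec:ratio} in the obvious way. The forward recursion reads $\mu_1(j,r)=\dQ_1(j)\,\1[r=\dP_1(j)/\dQ_1(j)]$ and
\[
  \mu_{k+1}(j',r')=\sum_{j,r}\mu_k(j,r)\,\dQ\MK{k+1}{k}(j'\mid j)\,\1\!\bigl[r'=r\cdot\tfrac{\dP\MK{k+1}{k}(j'\mid j)}{\dQ\MK{k+1}{k}(j'\mid j)}\bigr],
\]
abbreviated $\mu_{k+1}=\mu_k\star\theta_{k+1}$, a ``compose with the step-$(k+1)$ kernel pair $\theta_{k+1}$'' operation generalizing $\indpprod$; after each such composition the algorithm sparsifies, exactly as \Cref{alg:product} does. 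Since the second marginal of $\mu_n$ is $(\dP\|\dQ)$, the output $\TVD(\mu_n)\deq\Ex_{(S,R)\sim\mu_n}[\max(1-R,0)]$ equals $\TVD(\dP,\dQ)$, and more generally $\TVD(\mu_k)=\TVD(\dP_{1:k},\dQ_{1:k})$.

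To transfer the correctness proof I need state-augmented analogues of the ingredients used in \Cref{thm-main-1}. Extend $\MTVD$ and the order $\le$ to state-augmented ratios by restricting to realizations $(\dP_i,\dQ_i)$ on a common sample space $[q]\times\Omega$ (first coordinate the state, the ratio giving the second) and to \emph{state-preserving} Markov kernels. Then (a) the proofs that $\MTVD$ is a metric (cf.\ \Cref{lem-metric}), that $|\TVD(\mu)-\TVD(\mu')|\le2\MTVD(\mu,\mu')$ (cf.\ \Cref{lem:error-MTVD}), and that $\mu\le\mu'$ implies $\TVD(\mu)\le\TVD(\mu')$ (data processing), go through unchanged; (b) the composition bound $\MTVD(\mu\star\theta,\mu'\star\theta)\le\MTVD(\mu,\mu')$ and its monotone partner $\mu\le\mu'\Rightarrow\mu\star\theta\le\mu'\star\theta$ are straightforward --- extending a common realization $(\dP_i,\dQ_i)$ by the \emph{same} kernel pair $\theta$ leaves $\TVD(\dP_1,\dP_2)$ and $\TVD(\dQ_1,\dQ_2)$ unchanged (each conditional kernel sums to $1$), so this step is even easier than \Cref{lem:product-MTVD}; and (c) sparsification is done \emph{fibre by fibre}: run $\textsf{Sparsify}(\cdot,\epsr,\epsa)$ of \Cref{lem:sparsify} on each of the $\le q$ state-fibres with the same interval partition. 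By the single-ratio analysis the error contributed by fibre $j$ is at most $\tfrac12\bigl(\epsr\sum_{r<1}\mu(j,r)(1-r)+\epsa\,\mu(\{j\}\times I_m)\bigr)$, and summing over $j$ (using $\sum_{j}\sum_{r<1}\mu(j,r)(1-r)=\TVD(\mu)$ and $\sum_j\mu(\{j\}\times I_m)\le1$, \emph{not} $q$) yields a sparsified $\tilde\mu\le\mu$ with $|\supp(\tilde\mu)|=O(\tfrac{q}{\epsr}\log\tfrac1{\epsa})$ and $\MTVD(\mu,\tilde\mu)\le\tfrac12(\epsr\TVD(\mu)+\epsa)$, the exact analogue of \Cref{lem:sparsify}.

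With this in place the algorithm and its analysis copy \Cref{thm-main-1}. First compute a lower bound $\LBTV\le\TVD(\dP,\dQ)$ that is polynomially bounded and within a $\poly(n,q)$ factor of $1/\TVD(\dP,\dQ)$: a suitable constant multiple of $\max\bigl(\TVD(\dP_1,\dQ_1),\ \max_{k\ge2,\,j}\dQ_{k-1}(j)\,\TVD(\dP\MK{k}{k-1}(\cdot\mid j),\dQ\MK{k}{k-1}(\cdot\mid j))\bigr)$ works, since a step-by-step coupling shows this maximum is between $\Omega(\TVD(\dP,\dQ)/(nq))$ and $O(\TVD(\dP,\dQ))$, and $\TVD(\dP,\dQ)=0$ is detected directly. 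Run the recursion with $\textsf{Sparsify}$ parameters $\epsr=\tfrac\varepsilon{2n}$ and $\epsa=\tfrac\varepsilon{2n}\LBTV$; by the monotonicity, composition and triangle inequalities the iterates $\mu'_k$ satisfy $\mu'_k\le\mu_k$ and $\MTVD(\mu'_k,\mu_k)\le\tfrac{k-1}{2n}\varepsilon\TVD(\dP,\dQ)$ (induction on $k$, exactly as in the inductive claim of \Cref{thm-main-1}), so $\MTVD(\mu'_n,\mu_n)\le\tfrac\varepsilon2\TVD(\dP,\dQ)$, whence $\widehat\Delta=\TVD(\mu'_n)$ obeys $(1-\varepsilon)\TVD(\dP,\dQ)\le\widehat\Delta\le\TVD(\dP,\dQ)$. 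For the running time, each $\tilde\mu_k$ has support $O(\tfrac{qn}{\varepsilon}\log\tfrac{n}{\varepsilon\LBTV})$; composing with $\theta_{k+1}$ multiplies the number of entries by at most $q$ and, keeping each fibre sorted by $r$, costs only $O(\log q)$ amortized per entry (merge $q$ already-sorted lists into each of the $q$ target fibres), so one iteration runs in $O(\tfrac{q^2n}{\varepsilon}\log q\log\tfrac{n}{\varepsilon\LBTV})$ and all $n$ of them in $O(\tfrac{q^2n^2}{\varepsilon}\log q\log\tfrac{n}{\varepsilon\TVD(\dP,\dQ)})$ by the choice of $\LBTV$. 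The main obstacle I anticipate is not any single lemma but the bookkeeping around state-augmented ratios: defining their canonical pairs consistently (carrying $\mu^\dagger$ alongside $\mu$), treating the $r\in\{0,\infty\}$ fibres correctly in both the composition and the sparsification steps, and verifying that the fibre-wise merging errors telescope to $\tfrac12(\epsr\TVD(\mu)+\epsa)$ rather than picking up an extra factor of $q$.
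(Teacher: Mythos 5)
Your route is correct in substance, but it is organized quite differently from the paper's. The paper runs a \emph{backward} recursion: it maintains, for every state $x\in[q]$, the conditional ratio $\dR\MK[x]{k:n}{k-1}$, combines these with the one-step marginals through the \textsf{Concatenate} subroutine, and sparsifies each conditional ratio separately; since a conditional distance $\TVD(\dR\MK[x]{k+1:n}{k})$ can individually exceed $\TVD(\dP,\dQ)$, its error analysis must average these quantities against the marginals of $\dP$ and $\dQ$ (hybrid worlds in \Cref{sec:markov}, the region calculus in the appendix), which is why it uses the sparsification parameter $\frac{\varepsilon}{4n}$ and a lower bound within a factor $2n$. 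You instead run a \emph{forward} recursion on the joint law $\mu_k$ of (current state, accumulated likelihood ratio) under $\dQ$, carrying $\mu_k^\dagger$ for the per-state mass at $\infty$, and sparsify fibre by fibre. Because $\TVD(\mu_k)=\TVD(\dP_{1:k},\dQ_{1:k})\le\TVD(\dP,\dQ)$ by data processing, the product-distribution analysis of \Cref{thm-main-1} transfers with the same parameters $\epsr=\frac{\varepsilon}{2n}$, $\epsa=\frac{\varepsilon}{2n}\LBTV$, and your aggregation of the fibre-wise merging errors to $\frac12(\epsr\TVD(\mu)+\epsa)$ \emph{without} an extra factor of $q$ is right. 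What your approach buys is a symmetric, plain-$\MTVD$ error analysis (no regions, no hybrid averaging); what it costs is that the whole $\MTVD$/order machinery must be re-established for state-augmented ratios with state-preserving kernels, which is exactly the bookkeeping you flag.

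Two caveats. First, ``goes through unchanged'' is an overstatement at one point: the triangle inequality for your extended $\MTVD$ (needed to chain the per-iteration errors) requires redoing the coupling construction in the proof of \Cref{lem-metric} with $(\text{state},\text{ratio})$ classes in place of ratio classes and checking that the constructed kernels are state-preserving so that the combined realizations again live on a space with a consistent state coordinate; this does work, but it is genuine additional work (alternatively, a hybrid-world argument as in the paper sidesteps it). Second, your $\LBTV$, which maximizes $\dQ_{k-1}(j)\,\TVD\bigl(\dP\MK{k}{k-1}(\cdot\mid j),\dQ\MK{k}{k-1}(\cdot\mid j)\bigr)$ over both $k$ and $j$, is only guaranteed to be within a factor $O(nq)$ of $\TVD(\dP,\dQ)$, so as written your running time carries $\log\frac{nq}{\varepsilon\TVD(\dP,\dQ)}$ rather than the stated $\log\frac{n}{\varepsilon\TVD(\dP,\dQ)}$; replacing the maximum over $j$ by the sum, i.e.\ taking $\frac12\max_{k}\TVD(\dQ_{k-1}\dP\MK{k}{k-1},\,\dQ_{k-1}\dQ\MK{k}{k-1})$ as in \Cref{lem-lb-mar}, is computable in $O(q^2n)$ time, restores the factor-$2n$ guarantee, and yields exactly the claimed complexity.
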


To present our algorithm,
we introduce a few more notations, in particular, the \emph{conditional ratio between Markov kernels},
which plays the central role in our algorithm.

The conventional notation $\dP_{k:n|k-1}$ denotes a Markov kernel from $[q]$ to $[q]^{n-k+1}$,
is defined as
\[
  \dP\MK{k:n}{k-1}(x_k,x_{k+1},\dots,x_n|x_{k-1})
  \deq
  \dP\MK{n}{n-1}(x_n|x_{n-1})
  \dP\MK{n-1}{n-2}(x_{n-1}|x_{n-2})
  \dots
  \dP\MK{k}{k-1}(x_k|x_{k-1}) \,.
\]
As the notation suggested,
if $(X_1,\dots,X_n) \sim \dP$, 
\[
  \dP\MK{k:n}{k-1}(x_k,x_{k+1},\dots,x_n|x_{k-1}) = \Pr[\, (X_k,\dots,X_n) = (x_{k},\dots,x_n) \mid X_{k-1} = x_{k-1}] 
\]
for all $x_{k-1},x_k,\dots,x_n \in [q]$ such that $\Pr[X_{k-1} = x_{k-1}] > 0$.
We use $\dP\MK[x]{k}{k-1}$ and $\dP\MK[x]{k:n}{k-1}$ to denote the derived distributions 
$\dP\MK{k}{k-1}( \dots | x )$ and $\dP\MK{k:n}{k-1}( \dots | x )$ respectively.
That is
\[
  (\dP\MK[x_{k-1}]{k}{k-1})(x_k) \deq \dP\MK{k}{k-1}( x_k | x_{k-1} ) \,,\qquad
  (\dP\MK[x_{k-1}]{k:n}{k-1})(x_{k:n}) \deq \dP\MK{k:n}{k-1}( x_{k:n} | x_{k-1}) \,.
\]
They can be viewed as the conditional distributions of $X_k$ and $X_{k:n}$.

Then we can consider the ratio between $\dP\MK[x]{k:n}{k-1}$ and $\dQ\MK[x]{k:n}{k-1}$, defined as
\[
  \dR\MK[x]{k:n}{k-1} \deq
  \bigl( \dP\MK[x]{k:n}{k-1} \bigm\| \dQ\MK[x]{k:n}{k-1} \bigr) \,.
\]
This implicitly defines a Markov kernel $\dR\MK{k:n}{k-1}$ from $[q]$ to $[0,\infty)$,
as formalized below.

\begin{definition}[Conditional Ratio between Markov kernels]
  For two Markov kernels $\dP\MK YX,\dQ\MK YX$ from $\Omega_X$ to $\Omega_Y$,
  the \emph{conditional ratio} between them, denoted as $(\dP\MK YX \| \dQ\MK YX)$, is a Markov kernel from $\Omega_X$ to $[0,\infty)$,
  such that, if letting $\dR\MK YX \deq (\dP\MK YX \| \dQ\MK YX)$, for every $x\in\Omega_X$
  \[
    \dR\MK[x]YX = \bigl(\dP\MK[x]YX \bigm\| \dQ\MK[x]YX\bigr) \,.
  \]
  A \emph{valid conditional ratio} over $\Omega_X$ is Markov kernel $\kappa$ from $\Omega_X$ to $[0,\infty)$,
  such that $\kappa(\cdot|x)$ is a valid ratio for every $x\in\Omega_X$,
\end{definition}

Conditional ratio plays the central role in our algorithm, as presented in Algorithm~\ref{alg:markov}.
The core of the algorithm is an iterative process computing an approximation of the conditional ratio $\dR\MK{k:n}{k-1}$ for each $k$.
In each iteration step,
to compute an approximation of $\dR\MK[x]{k:n}{k-1}$,
the algorithm need an approximation of $\dR\MK{k+1:n}{k}$ from last iteration,
and distributions $\dP\MK[x]{k}{k-1}, \dQ\MK[x]{k}{k-1}$ from the inputs.
% We name the process of computing (approximate) 
We call this subroutine \textsf{Concatenate}, and it is defined and analyzed in Section~\ref{sec:concat}.

\begin{algorithm}
  \caption{A Deterministic FPTAS for the TV distance between two Markov chains} \label{alg-markov-new}\label{alg:markov}
  \KwIn{two Markov chains $\dP = \dP_1\dP_{2|1}\dots\dP_{n|n-1}$, $\dQ = \dQ_1\dQ_{2|1}\dots\dQ_{n|n-1}$, \newline
        an error bound $\varepsilon > 0$}
  Compute $\LBTV$ such that $\frac{\TVD(\dP,\dQ)}{2n} \leq \LBTV \leq \TVD(\dP,\dQ)$ (Section~\ref{sec:markov-lb})\\
  Compute conditional ratio $\dR'\MK{n:n}{n-1} \gets \bigl( \dP\MK{n}{n-1} \bigm\| \dQ\MK{n}{n-1} \bigr)$ \newline
  That is, $\dR'\MK[x]{n:n}{n-1} \gets \bigl( \dP\MK[x]{n}{n-1} \bigm\| \dQ\MK[x]{n}{n-1} \bigr)$ for each $x\in[q]$ \\
  \For{$k$ from $n-1$ to $1$}{
    Compute conditional ratio $\tilde\dR\MK{k+1:n}{k}$ as follows: \For{$x$ in $[q]$}{
      $\tilde\dR\MK[x]{k+1:n}{k} \gets \textsf{Sparsify}(\dR'\MK[x]{k+1:n}{k},~ \frac{\varepsilon}{4n},~ \frac{\varepsilon}{2n} \LBTV)$
    }
    \eIf{$k>1$}{
      Compute conditional ratio $\dR'\MK{k:n}{k-1}$ as follows: \For{$x$ in $[q]$}{
        $\dR'\MK[x]{k:n}{k-1} \gets \textsf{Concatenate}(\dP\MK[x]{k}{k-1},~ \dQ\MK[x]{k}{k-1},~ \tilde\dR\MK{k+1:n}{k})$  
      }
    }{
      Compute ratio $\dR'_{1:n} \gets \textsf{Concatenate}(\dP_1,~ \dQ_1,~ \tilde\dR\MK{2:n}{1})$
    }
  }
  \KwOut{$\widehat{\Delta}= \TVD(\dR'_{1:n})$}
\end{algorithm}

\subsection{The Concatenation of Conditional Ratios}
\label{sec:concat}

Given $\dP_X,\dP\MK YX$ and $\dQ_X,\dQ\MK YX$, 
to compute the ratio $\dR_{XY} \deq ( \dP_X\dP\MK YX \| \dQ_X\dQ\MK YX )$,
the na\"ive solution is to first compute the two joint distributions $\dP_X\dP\MK YX, \dQ_X\dQ\MK YX$.
This section shows that there is an alternative approach:
first compute $\dR\MK YX \deq (\dP\MK YX \| \dQ\MK YX)$,
then
\begin{center}
  $\dR_{XY}$ can be computed from $\dP_X$, $\dQ_X$ and $\dR\MK YX$.
\end{center}

% To define random variable $R$ such that $R \sim ( \dP_X\dP\MK YX \| \dQ_X\dQ\MK YX )$,
Let $(X,Y) \sim \dQ_X\dQ\MK YX$ and set
\[
  R = \frac{(\dP_X\dP\MK YX)(X,Y)}{(\dQ_X\dQ\MK YX)(X,Y)} 
  = \frac{\dP_X(X)}{\dQ_X(X)}  \frac{\dP\MK YX(Y|X)}{\dQ\MK YX(Y|X)} \,.
\]
then $R \sim \dR_{XY} = ( \dP_X\dP\MK YX \| \dQ_X\dQ\MK YX )$.
What is the distribution of $R$ conditioning on $X=x$?
Let $\dR_x$ denote this conditional distribution.
Then $\dR_x$ is the distribution of 
\begin{equation}
\label{eq:concat-x}
  \frac{\dP_X(x)}{\dQ_X(x)}
  \frac{\dP\MK YX(Y_x|x)}{\dQ\MK YX(Y_x|x)}
\end{equation}
where $Y_x$ is sampled from the distribution of $Y$ conditioning on $X = x$.
In other words, $Y_x \sim \dQ\MK[x]YX$.
In such case, the distribution of the second fraction in \eqref{eq:concat-x} is $\dR\MK[x]YX$. 
Therefore, $\dR_x$ can be simply computed from $\dR\MK[x]YX$,
and the wanted ratio $\dR_{XY}$ is a weighted average of $\dR_x$.

If we let $\Dist_{X\sim\dQ}[f(X)]$ denote the distribution of $f(X)$ when $X\sim\dQ$, 
the above analysis can be written as one formula.
% \[
% \begin{aligned}
%   \dR_{XY} 
%   &= ( \dP_X\dP\MK YX \| \dQ_X\dQ\MK YX ) \\
%   &= \Dist_{(X,Y) \sim \dQ_X\dQ\MK YX} \Bigl[ \frac{\dP_X(X)}{\dQ_X(X)}  \frac{\dP\MK YX(Y|X)}{\dQ\MK YX(Y|X)} \Bigr] \\
%   % &= \Dist_{(X,Y) \sim \dQ_X\dQ\MK YX} \Bigl[ \frac{\dP_X(X)}{\dQ_X(X)}  \frac{\dP\MK YX(Y|X)}{\dQ\MK YX(Y|X)} \Bigr] \\
%   &= \sum_{x} \dQ(x)  \Dist_{Y \sim \dQ\MK[x] YX} \Bigl[ \frac{\dP_X(x)}{\dQ_X(x)}  \frac{\dP\MK YX(Y|x)}{\dQ\MK YX(Y|x)} \Bigr] \\
%   &= \sum_{x} \dQ(x)  \Dist_{R \sim \dR\MK[x] YX} \Bigl[ \frac{\dP_X(x)}{\dQ_X(x)} \cdot R \Bigr] \\
% \end{aligned}
% \]
\begin{multline}
\label{eq:concat}
  ( \dP_X\dP\MK YX \| \dQ_X\dQ\MK YX ) 
  = \Dist_{(X,Y) \sim \dQ_X\dQ\MK YX} \Bigl[ \frac{\dP_X(X)}{\dQ_X(X)}  \frac{\dP\MK YX(Y|X)}{\dQ\MK YX(Y|X)} \Bigr] \\
  = \sum_{x} \dQ_X(x)  \Dist_{Y \sim \dQ\MK[x] YX} \Bigl[ \frac{\dP_X(x)}{\dQ_X(x)}  \frac{\dP\MK YX(Y|x)}{\dQ\MK YX(Y|x)} \Bigr] 
  = \sum_{x} \dQ_X(x)  \Dist_{R \sim \dR\MK[x] YX} \Bigl[ \frac{\dP_X(x)}{\dQ_X(x)} \cdot R \Bigr] 
\end{multline}

\begin{lemma}[Concatenation]\label{lem:concat}
  There is a deterministic algorithm \textsf{Concatenate} inspired by \eqref{eq:concat}.
  Given distributions $\dP_X,\dQ_X$ over $\Omega_X$ and a conditional ratio $\dR\MK YX$ over $\Omega_X$,
  the algorithm outputs $\dR_{XY} = \textnormal{\textsf{Concatenate}} (\dP_X,\dQ_X,\dR\MK YX)$
  such that
  \[
    \dR_{XY} = ( \dP_X\dP\MK YX \| \dQ_X\dQ\MK YX )
  \]
  for any sample space $\Omega_Y$ and
  any Markov kernels $\Omega_X$ to $\Omega_Y$
  satisfying $(\dP\MK YX \| \dQ\MK YX) = \dR\MK YX$.

  If the support size of $\dR\MK[x] YX$ is less than $N$ for every $x \in \Omega_X$,
  the support size of $\dR_{XY}$ is no more than $N |\Omega_X|$,
  the time complexity of \textsf{Concatenate} is bounded by $N |\Omega_X| \log (|\Omega_X|)$.
\end{lemma}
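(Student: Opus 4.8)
The plan is to realize the identity~\eqref{eq:concat} directly. For every $x\in\Omega_X$ with $\dQ_X(x)>0$, let $c_x\deq \dP_X(x)/\dQ_X(x)\in[0,\infty)$; this is well-defined and finite because $\dQ_X(x)>0$. Given the stored sorted table of $\dR\MK[x]YX$, produce the table of $\dR_x\deq\Dist_{R\sim\dR\MK[x]YX}[c_xR]$ by replacing each entry $(r,p)$ with $(c_xr,p)$ (and collapsing to the single entry $(0,1)$ when $c_x=0$). Then \textsf{Concatenate} outputs $\dR_{XY}\deq\sum_{x:\dQ_X(x)>0}\dQ_X(x)\,\dR_x$, the $\dQ_X$-weighted mixture, with entries of equal ratio value merged by summing their probabilities. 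Since $\sum_{x:\dQ_X(x)>0}\dQ_X(x)=1$, this is a genuine distribution over $[0,\infty)$.

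\textbf{Correctness.} First I would argue that the output equals $(\dP_X\dP\MK YX\|\dQ_X\dQ\MK YX)$ for \emph{every} sample space $\Omega_Y$ and every pair of Markov kernels realizing $\dR\MK YX$; such a realization exists because each $\dR\MK[x]YX$ is a valid ratio (e.g.\ take the canonical pair of Definition~\ref{def-can-pair} pointwise in $x$). This is precisely~\eqref{eq:concat}: for $(X,Y)\sim\dQ_X\dQ\MK YX$ the likelihood ratio $R=\frac{\dP_X(X)}{\dQ_X(X)}\frac{\dP\MK YX(Y|X)}{\dQ\MK YX(Y|X)}$ has law $(\dP_X\dP\MK YX\|\dQ_X\dQ\MK YX)$, and conditioning on $X=x$ (law of total probability for the distribution of a function of $(X,Y)$) shows this law is $\sum_x\dQ_X(x)$ times the conditional law of $R$ given $X=x$; the latter is the law of $c_x\cdot\frac{\dP\MK YX(Y|x)}{\dQ\MK YX(Y|x)}$ with $Y\sim\dQ\MK[x]YX$, which is exactly $\dR_x$. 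Since the right side of~\eqref{eq:concat} mentions only $\dP_X,\dQ_X$ and the ratios $\dR\MK[x]YX$, the output does not depend on the chosen realization, as the lemma demands.

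\textbf{Complexity.} The support of $\dR_{XY}$ lies in $\bigcup_{x:\dQ_X(x)>0}\{c_xr:r\in\supp(\dR\MK[x]YX)\}$, a union of at most $|\Omega_X|$ sets of size $<N$, so $|\supp(\dR_{XY})|\le N|\Omega_X|$. Scaling a sorted table by the nonnegative constant $c_x$ keeps it sorted (the case $c_x=0$ is trivial), so I would merge the $|\Omega_X|$ sorted tables $\dR_x$, of total length $\le N|\Omega_X|$, into one sorted table while combining equal keys on the fly, via an $|\Omega_X|$-way merge with a min-heap; this runs in $O(N|\Omega_X|\log|\Omega_X|)$, which dominates the $O(N|\Omega_X|)$ spent on forming the scaled tables.

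\textbf{Main obstacle.} There is no deep difficulty here—the lemma is essentially the already-derived formula~\eqref{eq:concat}. The two points needing a little care are: (i) stating cleanly why the law of $R$ decomposes as the $\dQ_X$-mixture of its conditional laws, which is just the law of total probability applied to the distribution of a function of $(X,Y)$; and (ii) the degenerate cases, namely $\dQ_X(x)=0$ (the $x$-term vanishes, and since $c_x$ is formed only when $\dQ_X(x)>0$ there is no division by zero) and $\dP_X(x)=0$ (then $c_x=0$ and $\dR_x$ is the point mass at $0\in[0,\infty)$, a legitimate contribution). Everything else is bookkeeping on sorted tables.
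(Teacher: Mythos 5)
Your proposal is correct and follows essentially the same route as the paper: the algorithm is the direct implementation of \eqref{eq:concat} (scale each $\dR\MK[x]YX$ by $\dP_X(x)/\dQ_X(x)$ and take the $\dQ_X$-weighted mixture), with correctness read off from that identity and the complexity dominated by an $|\Omega_X|$-way merge of sorted lists of length $<N$. Your added care about realization-independence and the degenerate cases $\dQ_X(x)=0$, $\dP_X(x)=0$ is consistent with, and slightly more explicit than, the paper's one-line proof.
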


\begin{proof}
  The correctness is ensured by \eqref{eq:concat}.

  The complexity bottleneck is to merge $|\Omega_X|$ sorted lists, 
  each list has up to $N$ elements.
\end{proof}

The concatenation process has a few nice properties.
In particular, concatenation preserves the order relation (Definition~\ref{def:order})
and does not increase minimum total variation distance.

\begin{lemma}\label{lem:concat-order}
  For any conditional ratios $\dR\MK YX, \dR'\MK YX$ over $\Omega_X$ such that $\dR\MK[x]YX \geq \dR'\MK[x]YX$ for every $x\in\Omega_X$,
  for any distributions $\dP_X,\dQ_X$ over $\Omega_X$,
  it holds that
  \[
    \textnormal{\textsf{Concatenate}} (\dP_X,\dQ_X,\dR\MK YX) \geq \textnormal{\textsf{Concatenate}} (\dP_X,\dQ_X,\dR'\MK YX) \,.
  \]
\end{lemma}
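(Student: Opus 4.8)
The plan is to derive the claim directly from the definition of the order relation between ratios (Definition~\ref{def:order}) together with the Concatenation Lemma (Lemma~\ref{lem:concat}), which lets us evaluate either side of \textsf{Concatenate} on a realization of our choosing. Concretely, I will construct one global Markov kernel $\Lambda$ that performs the promised data processing fiberwise in the $\Omega_X$-coordinate, turning a realization of $\textsf{Concatenate}(\dP_X,\dQ_X,\dR\MK YX)$ into a realization of $\textsf{Concatenate}(\dP_X,\dQ_X,\dR'\MK YX)$.

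First I would unwind the hypothesis: for each $x\in\Omega_X$, since $\dR\MK[x]YX \geq \dR'\MK[x]YX$, Definition~\ref{def:order} supplies distributions $\dP_x^\circ,\dQ_x^\circ$ over a space $\Omega_x$ and a Markov kernel $\lambda_x$ from $\Omega_x$ to some space $\Omega_x'$ with $\dR\MK[x]YX=(\dP_x^\circ\|\dQ_x^\circ)$ and $\dR'\MK[x]YX=(\lambda_x\dP_x^\circ\|\lambda_x\dQ_x^\circ)$. Since the spaces $\Omega_x$ may vary with $x$, I would set $\Omega_Y\deq\{(x,y):x\in\Omega_X,\ y\in\Omega_x\}$ and $\Omega_{Y'}\deq\{(x,y'):x\in\Omega_X,\ y'\in\Omega_x'\}$, and define Markov kernels from $\Omega_X$ by placing $\dP_x^\circ,\dQ_x^\circ$ on the $x$-fiber of $\Omega_Y$ and $\lambda_x\dP_x^\circ,\lambda_x\dQ_x^\circ$ on the $x$-fiber of $\Omega_{Y'}$; call these $\dP\MK YX,\dQ\MK YX$ and $\dP'\MK YX,\dQ'\MK YX$. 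The likelihood ratio of a pair of distributions is unchanged when both are transported along a common injection, so $(\dP\MK YX\|\dQ\MK YX)=\dR\MK YX$ and $(\dP'\MK YX\|\dQ'\MK YX)=\dR'\MK YX$. (One could instead dodge the disjoint union by taking the canonical pair of each conditional ratio as the realization, using that pairs with equal ratios are equivalent decision problems in the sense of Section~\ref{sec:ratio}; I expect this gluing to be the only fiddly point of the proof.)

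Next I would define $\Lambda$ from $\Omega_X\times\Omega_Y$ to $\Omega_X\times\Omega_{Y'}$ by $\Lambda\bigl((x_1,(x_2,y'))\mid(x,(x,y))\bigr)\deq\1[x_1=x]\,\1[x_2=x]\,\lambda_x(y'\mid y)$, completed arbitrarily on the (zero-mass) inputs not of the form $(x,(x,y))$. A one-line computation shows $\Lambda(\dP_X\dP\MK YX)=\dP_X\dP'\MK YX$ and $\Lambda(\dQ_X\dQ\MK YX)=\dQ_X\dQ'\MK YX$, because on the $x$-fiber $\Lambda$ pushes $\dP_x^\circ\mapsto\lambda_x\dP_x^\circ$ and $\dQ_x^\circ\mapsto\lambda_x\dQ_x^\circ$ while preserving the weights $\dP_X(x),\dQ_X(x)$.

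Finally I would combine everything: by Lemma~\ref{lem:concat}, $\textsf{Concatenate}(\dP_X,\dQ_X,\dR\MK YX)=(\dP_X\dP\MK YX\|\dQ_X\dQ\MK YX)$ and $\textsf{Concatenate}(\dP_X,\dQ_X,\dR'\MK YX)=(\dP_X\dP'\MK YX\|\dQ_X\dQ'\MK YX)$. The pair $(\dP_X\dP\MK YX,\dQ_X\dQ\MK YX)$ realizes the former, and applying $\Lambda$ to it yields $(\dP_X\dP'\MK YX,\dQ_X\dQ'\MK YX)$, which realizes the latter; hence, by Definition~\ref{def:order}, $\textsf{Concatenate}(\dP_X,\dQ_X,\dR\MK YX)\geq\textsf{Concatenate}(\dP_X,\dQ_X,\dR'\MK YX)$, as desired. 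The main obstacle is purely bookkeeping — assembling the per-fiber witnesses $\lambda_x$ into honest Markov kernels on a common sample space — while the rest is a direct unwinding of the definitions plus two applications of the Concatenation Lemma.
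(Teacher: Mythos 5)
Your proposal is correct and follows essentially the same route as the paper's proof: realize each fiberwise inequality $\dR\MK[x]YX \geq \dR'\MK[x]YX$ via Definition~\ref{def:order}, glue the per-$x$ witnesses into conditional kernels on a common sample space, identify both sides via Lemma~\ref{lem:concat}, and exhibit the global kernel $\1[x'=x]\,\lambda_x(y'|y)$ as the required data processing. Your only deviations — tagging fibers to form an honest disjoint union and allowing $\lambda_x$ to map into a different target space — are minor bookkeeping refinements of the same argument.
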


\begin{proof}
  Since $\dR\MK[x]YX \geq \dR'\MK[x]YX$ for every $x\in\Omega_X$,
  there exist a sample space $\Omega^{(x)}_Y$,
  distributions $\dP^{(x)},\dQ^{(x)}$ over $\Omega^{(x)}_Y$,
  and a Markov kernel $\kappa_x$ from $\Omega^{(x)}_Y$ to $\Omega^{(x)}_Y$
  such that 
  \[
    \dR\MK[x]YX = (\dP^{(x)} \| \dQ^{(x)}) \,,\qquad
    \dR'\MK[x]YX = (\kappa_x\dP^{(x)} \| \kappa_x\dQ^{(x)}) \,.
  \]
  Define sample space $\Omega_Y = \bigcup_{x} \Omega^{(x)}_Y$,
  Markov kernels $\dP\MK YX,\dP'\MK YX,\dQ\MK YX,\dQ'\MK YX$ from $\Omega_X$ to $\Omega_Y$
  % and Markov kernels $\kappa_x$ from $\Omega_Y$ to $\Omega_Y$ for each $x\in\Omega_X$
  as
  \[
    \dP\MK[x]YX \deq \dP^{(x)} ,\quad
    \dQ\MK[x]YX \deq \dQ^{(x)} ,\quad
    \dP'\MK[x]YX \deq \kappa_x\dP^{(x)} ,\quad
    \dQ'\MK[x]YX \deq \kappa_x\dQ^{(x)} .
  \]
  Then $\dR\MK YX = (\dP\MK YX \| \dQ\MK YX)$ and $\dR'\MK YX = (\dP'\MK YX \| \dQ'\MK YX)$.
  So the two sides of the wanted inequality equal
  $(\dP_X\dP\MK YX  \| \dQ_X\dQ\MK YX )$
  and $(\dP_X\dP'\MK YX \| \dQ_X\dQ'\MK YX)$ respectively.
  % so
  % \[
  % \begin{aligned}
  %   \bigl(\dP_X\dP\MK YX  \bigm\| \dQ_X\dQ\MK YX \bigr) &= \textnormal{\textsf{Concatenate}} (\dP_X,\dQ_X,\dR\MK YX) \,, \\
  %   \bigl(\dP_X\dP'\MK YX \bigm\| \dQ_X\dQ'\MK YX\bigr) &= \textnormal{\textsf{Concatenate}} (\dP_X,\dQ_X,\dR'\MK YX) \,.
  % \end{aligned}
  % \]

  Define a Markov kernel $\kappa$ from $\Omega_X\times\Omega_Y$ to itself as
  \[
    \kappa(x',y' | x,y)
    \deq \1[x'=x] \cdot \kappa_x(y'|y) .
  \]
  Intuitively, the distribution of $\kappa$ conditioning on $(x,y)$
  is to set $X' = x$ and sample $Y'$ according to $\kappa_x(\cdot|y)$.
  It is easy to verify that
  $\kappa (\dP_X\dP\MK YX) = \dP_X\dP'\MK YX$
  and 
  $\kappa (\dQ_X\dQ\MK YX) = \dQ_X\dQ'\MK YX$.
  Therefore,
  \begin{multline*}
    \textnormal{\textsf{Concatenate}} (\dP_X,\dQ_X,\dR\MK YX) = \bigl(\dP_X\dP\MK YX  \bigm\| \dQ_X\dQ\MK YX \bigr)  \\
    \geq \bigl(\dP_X\dP'\MK YX \bigm\| \dQ_X\dQ'\MK YX\bigr) = \textnormal{\textsf{Concatenate}} (\dP_X,\dQ_X,\dR'\MK YX) \,.
    \qedhere
  \end{multline*}
\end{proof}

\subsection{A Lower Bound for TV Distance}
\label{sec:markov-lb}

Similar to the case of product distributions (Section~\ref{sec:product}),
our algorithm approximating the TV distance between Markov chains 
needs a relative tight lower bound to start with.
% Similar to the algorithm for the TV distance between two product distributions, we need a lower bound for the TV distance between two Markov chains.
\begin{lemma}\label{lem-lb-mar}
  There exists an deterministic algorithm such that 
  given two Markov chains $\dP$ and $\dQ$ over $[q]^n$, 
  it returns $\LBTV$ satisfying $\frac{\TVD(\dP,\dQ)}{2n} \leq \LBTV \leq \TVD(\dP,\dQ)$ in time $O(q^2n)$.
\end{lemma}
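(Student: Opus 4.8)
The plan is to mimic the product-distribution case. On the upper-bound side, I would exhibit an index-by-index witness. For each $k\in[n]$, the marginal distributions of $X_k$ under $\dP$ and of $Y_k$ under $\dQ$ can be computed exactly by $k-1$ matrix--vector multiplications (each of cost $O(q^2)$), starting from $\dP_1$ (resp.~$\dQ_1$) and applying $\dP\MK{2}{1},\dots,\dP\MK{k}{k-1}$ (resp.~$\dQ\MK{2}{1},\dots,\dQ\MK{k}{k-1}$). Call these marginals $\dP^{(k)}$ and $\dQ^{(k)}$, and set $\LBTV \deq \max_{k\in[n]} \TVD(\dP^{(k)},\dQ^{(k)})$. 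Since the map $(x_1,\dots,x_n)\mapsto x_k$ is a (deterministic) Markov kernel, the data processing inequality gives $\TVD(\dP^{(k)},\dQ^{(k)}) \leq \TVD(\dP,\dQ)$ for every $k$, hence $\LBTV \leq \TVD(\dP,\dQ)$.

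For the lower bound $\TVD(\dP,\dQ) \leq 2n\,\LBTV$, the natural tool is a coupling/telescoping argument. I would build a coupling of $\dP$ and $\dQ$ step by step: first couple $X_1$ with $Y_1$ optimally, and then at each step $k\geq 2$ couple $X_k$ with $Y_k$ as well as possible given the already-coupled past. The probability of disagreement satisfies $\TVD(\dP,\dQ) \leq \Pr[(X_1,\dots,X_n)\neq(Y_1,\dots,Y_n)] \leq \sum_{k=1}^n \Pr[X_k\neq Y_k \mid X_{1:k-1}=Y_{1:k-1}]$. Here the per-step disagreement, conditioned on agreement so far at a common state $x_{k-1}$, is exactly $\TVD(\dP\MK[x_{k-1}]{k}{k-1},\dQ\MK[x_{k-1}]{k}{k-1})$; a slightly cleaner route is to bound $\TVD(\dP,\dQ)$ by $\TVD(\dP_1,\dQ_1)+\sum_{k=2}^n \Ex_{x_{k-1}\sim\dP^{(k-1)}}\big[\TVD(\dP\MK[x_{k-1}]{k}{k-1},\dQ\MK[x_{k-1}]{k}{k-1})\big]$ via a standard hybrid argument over the Markov kernels. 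The remaining task is to relate each such term to $\LBTV$: since $\dP^{(k)} = \dP\MK{k}{k-1}\dP^{(k-1)}$ and $\dQ^{(k)}=\dQ\MK{k}{k-1}\dQ^{(k-1)}$, one has $\Ex_{x\sim\dP^{(k-1)}}\TVD(\dP\MK[x]{k}{k-1},\dQ\MK[x]{k}{k-1}) \leq \TVD(\dP^{(k-1)}\dP\MK{k}{k-1},\dP^{(k-1)}\dQ\MK{k}{k-1}) \leq \TVD(\dP^{(k-1)},\dQ^{(k-1)}) + \TVD(\dP^{(k)},\dQ^{(k)}) \leq 2\LBTV$ (the middle inequality by the triangle inequality through $\dP^{(k-1)}\dQ\MK{k}{k-1}$ and data processing). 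Summing over $k$ and also bounding $\TVD(\dP_1,\dQ_1)\leq \LBTV$ yields $\TVD(\dP,\dQ) \leq 2n\,\LBTV$ up to a harmless constant adjustment; if the bookkeeping gives $(2n-1)\LBTV$ that only helps.

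For the running time: computing all $\dP^{(k)},\dQ^{(k)}$ for $k=1,\dots,n$ takes $O(q^2 n)$ total arithmetic operations (one $q\times q$ matrix times a length-$q$ vector per step, $2n$ steps), and extracting the $n$ values $\TVD(\dP^{(k)},\dQ^{(k)})$ and their maximum costs $O(qn)$ more, giving the claimed $O(q^2 n)$ bound. The main obstacle I anticipate is purely the constant in the lower bound — making sure the telescoping over kernels is set up so that each per-step error is genuinely controlled by \emph{two} marginal TV distances at adjacent times (not by a growing prefix), which is exactly what the factor $2n$ (rather than something larger) is encoding; the inequality $\Ex_{x\sim\dP^{(k-1)}}\TVD(\dP\MK[x]{k}{k-1},\dQ\MK[x]{k}{k-1}) \leq \TVD(\dP^{(k-1)},\dQ^{(k-1)}) + \TVD(\dP^{(k)},\dQ^{(k)})$ is the crux and deserves a careful one-line proof via the intermediate distribution $\dP^{(k-1)}\dQ\MK{k}{k-1}$.
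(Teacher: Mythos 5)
Your proposed statistic does not work: $\LBTV \deq \max_k \TVD(\dP^{(k)},\dQ^{(k)})$, the maximum over single-time marginal distances, is in general \emph{not} within a factor $2n$ of $\TVD(\dP,\dQ)$, and the inequality you yourself flag as the crux is false. Take $q=2$, $n=2$, let $\dP_1=\dQ_1$ be uniform on $\{1,2\}$, let $\dP\MK{2}{1}$ be the identity kernel ($X_2=X_1$) and $\dQ\MK{2}{1}$ the flip kernel ($Y_2=3-Y_1$). Then all marginals $\dP^{(1)},\dQ^{(1)},\dP^{(2)},\dQ^{(2)}$ are uniform, so your $\LBTV=0$, yet $\dP$ and $\dQ$ have disjoint supports in $[2]^2$, so $\TVD(\dP,\dQ)=1$. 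In the same example $\Ex_{x\sim\dP^{(1)}}\bigl[\TVD(\dP\MK[x]{2}{1},\dQ\MK[x]{2}{1})\bigr]=1$ while $\TVD(\dP^{(1)},\dQ^{(1)})+\TVD(\dP^{(2)},\dQ^{(2)})=0$, so your middle inequality fails outright; note also that the intermediate distribution you suggest, $\dP^{(k-1)}\dQ\MK{k}{k-1}$, \emph{is} the second argument of the TV distance you are bounding, so the triangle inequality through it says nothing. The conceptual point is that the per-step terms $\Ex_x[\TVD(\dP\MK[x]{k}{k-1},\dQ\MK[x]{k}{k-1})]$ see the transition structure, which single-time marginals cannot detect (this is exactly the paper's remark that comparing the laws of $X_n$ and $Y_n$ is trivial while comparing the joint laws is not), so no bound of the form you want can hold. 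Your telescoping/hybrid upper bound on $\TVD(\dP,\dQ)$ and your $O(q^2n)$ running-time analysis are fine; the defect is solely in what $\LBTV$ is and how the hybrid terms are controlled.

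The repair, which is what the paper does, is to build $\LBTV$ from the hybrid differences themselves rather than from single-time marginals. Define $\dD_k \deq \dQ_1\dQ\MK{2}{1}\cdots\dQ\MK{k}{k-1}\dP\MK{k+1}{k}\cdots\dP\MK{n}{n-1}$, so $\dD_0=\dP$ and $\dD_n=\dQ$, and set $\LBTV \deq \frac12\max_{1\leq k\leq n}\TVD(\dD_{k-1},\dD_k)$. The telescoping gives $\TVD(\dP,\dQ)\leq n\max_k\TVD(\dD_{k-1},\dD_k)=2n\,\LBTV$, while the upper bound follows because each hybrid differs from $\dP$ only in a prefix: $\TVD(\dD_{k-1},\dD_k)\leq\TVD(\dD_{k-1},\dP)+\TVD(\dD_k,\dP)=\TVD(\dQ_{1:k-1},\dP_{1:k-1})+\TVD(\dQ_{1:k},\dP_{1:k})\leq 2\TVD(\dP,\dQ)$, hence $\LBTV\leq\TVD(\dP,\dQ)$. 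Crucially each hybrid difference is computable in $O(q^2)$ time, since $\TVD(\dD_{k-1},\dD_k)=\TVD(\dQ_{k-1}\dP\MK{k}{k-1},\,\dQ_{k-1}\dQ\MK{k}{k-1})$ is a TV distance between two distributions on $[q]^2$ requiring only the $\dQ$-marginal at time $k-1$ (obtained by the same matrix--vector recursion you describe), giving $O(q^2n)$ total.
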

\begin{proof}
  Define $\dD_0 \deq \dP$. 
  For each $k\in[n]$, define distribution $\dD_k \deq \dQ_1\cdots \dQ_{k|k-1} \dP_{k+1|k} \cdots \dP_{n|n-1}$.
  % which is the distribution of $X_1,\dots,X_n$ such that $(X_1,\ldots,X_{i}) \sim \dQ_{1:i}$ and $(X_{i+1}\ldots X_n)$ is obtained by $X_{i}$ and $\dP_{j|j-1}$ for $ i \leq j \leq n$. 
  \[
    % \dD_k(x_{1:n})
    % = \dQ_1(x_1),\dQ_{2|1}(x_2|x_1) \cdots \dQ_{k|k-1}(x_k|x_{k-1}) \dP_{k+1|k}(x_{k+1}|x_k) \cdots \dP_{n|n-1}(x_n|x_{n-1})
    % = \dQ_{1:k} (x_{1:k}) \dP_{k+1:n|k} (x_{k+1:n}|x_k)
    \dD_k(x_1,\dots,x_n)
    = \dQ_{1:k} (x_1,\dots,x_k) \dP_{k+1:n|k} (x_{k+1},\dots,x_n|x_k)
  \]
  Thus $\dD_n = \dQ$.
  By the triangle inequality, 
  \[
    \TVD(\dP,\dQ) 
    \leq \sum_{k=1}^n\TVD(\dD_{k-1},\dD_k) 
    \leq n \max_{1\leq k\leq n} \TVD(\dD_{k-1},\dD_k) \,.
  \]
  For any $1 \leq k \leq n$, it holds that 
  \[
  \begin{aligned}
    \TVD(\dD_{k-1},\dD_k) \leq \TVD(\dD_{k-1},\dP) &+ \TVD(\dD_k, \dP) \\
    \qquad = \TVD(\dQ_{1:k-1},\dP_{1:k-1}) &+ \TVD(\dQ_{1:k},\dP_{1:k})
    \leq 2\TVD(\dP,\dQ).
  \end{aligned}
  \]
  Therefore, $\LBTV$  satisfies the requirement if it is defined as
  \[
    \LBTV \deq \frac12 \max_{1 \leq k \leq n} \TVD(\dD_{k-1},\dD_k).
  \]

  Note that, $\LBTV$ can be computed in time $O(q^2n)$ because
  \begin{align*}
    \TVD(\dD_{k-1},\dD_k) 
    = \TVD(\dQ_{k-1} \dP_{k|k-1},\dQ_{k-1} \dQ_{k|k-1})
  \end{align*}
  and we can compute all  $\dQ_{k-1} \dP_{k|k-1},\dQ_{k-1} \dQ_{k|k-1}$
  in $O(q^2n)$ time.
  (Here $\dQ_k$ denote the $k$-th marginal distribution of $\dQ$,
  which can be recursively computed by $\dQ_k \deq \dQ_{k|k-1} \dQ_{k-1}$.)
\end{proof}

\subsection{Analysis of our Algorithm}

In this section,
we analyze Algorithm~\ref{alg:markov} and prove Theorem~\ref{thm:markov}.

First analyze the time complexity.
Let $N = O(\frac{n}{\varepsilon} \log (\frac{n}{\varepsilon \TVD(\dP,\dQ)}))$ denote an upper bound
of the support size of every sparsified ratios generated by \textsf{Sparsify}.
% Let $N = O(\frac{n}{\varepsilon} \log (\frac{n}{\varepsilon \LBTV}))$ bound the support size of the sparsified ratios generated by \textsf{Sparsify}.
% As we will later show $\LBTV = \Omega(\frac{\TVD(\dP,\dQ)}{n})$,
% every sparsified ratio's support size is bounded by $N = O(\frac{n}{\varepsilon} \log (\frac{n}{\varepsilon \TVD(\dP,\dQ)}))$.
The support size of each concatenation ratio $\dR'\MK[x]{k:n}{k-1}$ is bounded by $Nq$.
The complexity is dominated by $nq$ calls to \textsf{Sparsify} and $nq$ calls to \textsf{Concatenate}.
Each \textsf{Sparsify} call takes at most $O(Nq)$ time,
each \textsf{Concatenate} call takes at most $O(Nq \log q)$ time.
Thus the total time complexity is bound by $O(Nnq^2\log q)$.

Let $\dR_{1:n}$, $\dR\MK{k:n}{k-1}$ denote the actual ratio and conditional ratio
\[
  \dR_{1:n} \deq ( \dP_{1:n} \| \dQ_{1:n} ),
  \qquad
  \dR\MK{k:n}{k-1} \deq
  \bigl( \dP\MK{k:n}{k-1} \bigm\| \dQ\MK{k:n}{k-1} \bigr)
   \,.
\]
% The correctness of our algorithm follow from the fact that
Our algorithm recursively computes a good approximation of $\dR\MK{k:n}{k-1}$ for smaller and smaller $k$.
In the end, the our algorithm computes $\dR_{1:n}'$, which is an approximation of $\dR_{1:n}$,
satisfying $\dR_{1:n}' \leq \dR_{1:n}$ and $\MTVD(\dR_{1:n}', \dR_{1:n}) \leq \frac12\varepsilon\TVD(\dP,\dQ)$.
Thus $\widehat\Delta = \TVD(\dR_{1:n}')$ is a sufficiently good approximation of $\TVD(\dP,\dQ)$.

In the proof, we use the conventional notation $\dP_k$ to denote the marginal distribution of the $k$-th coordinate of $\dP$.
It can be recursively defined by $\dP_k = \dP\MK{k}{k-1}\dP_{k-1}$.
The conventional notation $\dP_{t|k}$ is a Markov kernel specifying the distribution of the $t$-th coordinate conditioning on the $k$-th coordinate.

\begin{claim}
  For every $1<k\leq n$,
  the conditional ratio $\dR'\MK{k:n}{k-1}$ satisfies
  $\dR'\MK[x]{k:n}{k-1} \leq \dR\MK[x]{k:n}{k-1}$ for all $x\in[q]$.
  The ratio $\dR'_{1:n}$  satisfies $\dR'_{1:n} \leq \dR_{1:n}$.
\end{claim}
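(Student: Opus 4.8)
The statement is an order relation that the algorithm maintains as an invariant, so the natural approach is induction on $k$, running \emph{downward} from $k=n$ to $k=1$, mirroring the structure of the main loop in Algorithm~\ref{alg:markov}. The base case is $k=n$: here $\dR'\MK[x]{n:n}{n-1}$ is set to be exactly $\bigl(\dP\MK[x]{n}{n-1} \bigm\| \dQ\MK[x]{n}{n-1}\bigr) = \dR\MK[x]{n:n}{n-1}$, so the inequality holds trivially (indeed with equality).

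For the inductive step, suppose $\dR'\MK[x]{k+1:n}{k} \leq \dR\MK[x]{k+1:n}{k}$ for all $x\in[q]$. First I would pass through the \textsf{Sparsify} step: by Lemma~\ref{lem:sparsify}, each sparsified conditional ratio satisfies $\tilde\dR\MK[x]{k+1:n}{k} \leq \dR'\MK[x]{k+1:n}{k}$, and composing with the inductive hypothesis and transitivity of $\leq$ (which holds since $\leq$ is an order relation on ratios, as discussed after Definition~\ref{def:order}) gives $\tilde\dR\MK[x]{k+1:n}{k} \leq \dR\MK[x]{k+1:n}{k}$ for every $x$. Next I would apply the \textsf{Concatenate} step. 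For $k>1$ the algorithm sets $\dR'\MK[x]{k:n}{k-1} = \textsf{Concatenate}(\dP\MK[x]{k}{k-1}, \dQ\MK[x]{k}{k-1}, \tilde\dR\MK{k+1:n}{k})$, and by Lemma~\ref{lem:concat-order} (monotonicity of \textsf{Concatenate} under the order relation), together with the correctness statement of Lemma~\ref{lem:concat} applied to the actual Markov kernels (for which $\bigl(\dP\MK{k+1:n}{k} \bigm\| \dQ\MK{k+1:n}{k}\bigr) = \dR\MK{k+1:n}{k}$), we get
\[
  \dR'\MK[x]{k:n}{k-1}
  = \textsf{Concatenate}(\dP\MK[x]{k}{k-1}, \dQ\MK[x]{k}{k-1}, \tilde\dR\MK{k+1:n}{k})
  \leq \textsf{Concatenate}(\dP\MK[x]{k}{k-1}, \dQ\MK[x]{k}{k-1}, \dR\MK{k+1:n}{k})
  = \dR\MK[x]{k:n}{k-1}\,.
\]
The final case $k=1$ is identical except that $\dP_1,\dQ_1$ are genuine distributions rather than kernel slices, and the output is the plain ratio $\dR'_{1:n}$; the same two lemmas give $\dR'_{1:n} \leq \dR_{1:n}$.

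The only subtlety — and the step I would be most careful about — is making sure Lemma~\ref{lem:concat-order} is applied to a \emph{single fixed} pair of witnessing Markov kernels $\dP\MK{k+1:n}{k}, \dQ\MK{k+1:n}{k}$ realizing $\dR\MK{k+1:n}{k}$, so that the same concatenation ``target'' $(\dP\MK[x]{k}{k-1}\dP\MK{k+1:n}{k} \| \dQ\MK[x]{k}{k-1}\dQ\MK{k+1:n}{k})$ is being compared on both sides; this is exactly what Lemma~\ref{lem:concat} guarantees (its output depends only on the conditional ratio, not on the choice of kernels), so there is no real gap, but it should be spelled out. There is also a bookkeeping point that the kernels $\dP\MK{k:n}{k-1}$ telescope correctly, i.e. $\dP\MK[x]{k}{k-1}\dP\MK{k+1:n}{k}$ applied appropriately reproduces $\dP\MK[x]{k:n}{k-1}$ — this is immediate from the definition of $\dP\MK{k:n}{k-1}$ as a product of the one-step kernels. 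Everything else is a routine invocation of the lemmas already established.
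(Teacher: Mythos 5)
Your proposal is correct and follows essentially the same route as the paper's own proof: downward induction on $k$ with the trivial base case $k=n$, then order preservation under \textsf{Sparsify} (Lemma~\ref{lem:sparsify}) chained with the inductive hypothesis, followed by monotonicity of \textsf{Concatenate} (Lemma~\ref{lem:concat-order}) and its correctness (Lemma~\ref{lem:concat}) to identify the right-hand side with $\dR\MK[x]{k:n}{k-1}$, with the $k=1$ case handled identically. The subtleties you flag (transitivity of $\leq$ and the fact that \textsf{Concatenate} depends only on the conditional ratio, not on the witnessing kernels) are left implicit in the paper but do not change the argument.
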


\begin{proof}
  There is an inductive proof from larger $k$ to smaller $k$.
  The base case is when $k = n$.
  The base case follows trivially from the fact that $\dR'\MK{n:n}{n-1} = \dR\MK{n:n}{n-1}$.

  For smaller $k$,
  by the inductive hypothesis, $\dR'\MK[x]{k+1:n}{k} \leq \dR\MK[x]{k+1:n}{k}$ for each $x\in[q]$.
  Order is preserved by sparsification (Lemma~\ref{lem:sparsify}), so 
  $
    \tilde\dR\MK[x]{k+1:n}{k} 
    \leq \dR'\MK[x]{k+1:n}{k}
    \leq \dR\MK[x]{k+1:n}{k} 
  $.
  Order is also preserved by concatenation (Lemma~\ref{lem:concat-order}), so
  \begin{multline*}
    \dR'\MK[x]{k:n}{k-1} 
    = \textsf{Concatenate}(\dP\MK[x]{k}{k-1},~ \dQ\MK[x]{k}{k-1},~ \tilde\dR\MK{k+1:n}{k}) \\
    \leq \textsf{Concatenate}(\dP\MK[x]{k}{k-1},~ \dQ\MK[x]{k}{k-1},~ \dR\MK{k+1:n}{k})
    =\dR\MK[x]{k:n}{k-1}  \,.
  \end{multline*}

  In the special case when $k=1$,
  by the same argument (Lemma~\ref{lem:sparsify}),
  $\tilde\dR\MK[x]{2:n}{1} \leq \dR\MK[x]{2:n}{1}$.
  Then by a similar argument about concatenation (Lemma~\ref{lem:concat-order}),
  \[
    \dR'_{1:n} 
    = \textsf{Concatenate}(\dP_1,\, \dQ_1,\, \tilde\dR\MK{2:n}{1}) 
    \leq \textsf{Concatenate}(\dP_1,\, \dQ_1,\, \dR\MK{2:n}{1})
    =\dR_{1:n}  \,.
    \qedhere
  \]
\end{proof}

\begin{claim}
  The ratio $\dR'_{1:n}$ in Algorithm~\ref{alg:markov} satisfies 
  $\MTVD(\dR'_{1:n},\dR_{1:n}) \leq \frac{n-1}{2n}\varepsilon \TVD(\dP,\dQ)$. 
\end{claim}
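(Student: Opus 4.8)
The plan is to write the total error as a sum over the $n-1$ calls to \textsf{Sparsify} via a hybrid argument, to transport each individual sparsification error down through the subsequent \emph{true}-kernel concatenations by means of a weighted form of the ``concatenation does not increase $\MTVD$'' principle, and to check that these transported errors telescope into $\TVD(\dP,\dQ)$.

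First I would set up hybrids. For $\ell\in\{1,\dots,n\}$, let $\dH^{(\ell)}$ be the ratio obtained by running Algorithm~\ref{alg:markov} but performing the \textsf{Sparsify} step only in the iterations $k=n-1,\dots,\ell$ (none when $\ell=n$) and replacing it by the identity ($\tilde\dR\gets\dR'$) in iterations $k=\ell-1,\dots,1$. Running the recursion with no sparsification reproduces the exact ratio by~\eqref{eq:concat}, so $\dH^{(n)}=\dR_{1:n}$, while $\dH^{(1)}=\dR'_{1:n}$; since $\MTVD$ is a metric (Lemma~\ref{lem-metric}),
\[
  \MTVD(\dR'_{1:n},\dR_{1:n})\ \le\ \sum_{\ell=1}^{n-1}\MTVD\bigl(\dH^{(\ell)},\dH^{(\ell+1)}\bigr).
\]
The hybrids $\dH^{(\ell)}$ and $\dH^{(\ell+1)}$ agree on everything computed in iterations $k\ge\ell+1$, so the same conditional ratio $\dR'\MK{\ell+1:n}{\ell}$ enters iteration $k=\ell$ in both; they differ only in that $\dH^{(\ell)}$ feeds $\tilde\dR\MK{\ell+1:n}{\ell}=\textsf{Sparsify}(\dR'\MK{\ell+1:n}{\ell},\tfrac{\varepsilon}{4n},\tfrac{\varepsilon}{2n}\LBTV)$ into the remaining chain of concatenations (with the true kernels $\dP\MK{\ell}{\ell-1},\dots,\dP_1$ and $\dQ\MK{\ell}{\ell-1},\dots,\dQ_1$), whereas $\dH^{(\ell+1)}$ feeds $\dR'\MK{\ell+1:n}{\ell}$ unchanged.

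Next I would establish the weighted concatenation bound. Writing $G$ for the composition of true-kernel concatenations that turns a conditional ratio indexed by the step-$\ell$ state into a level-$1{:}n$ ratio, I claim that for conditional ratios $\kappa_1,\kappa_2$,
\[
  \MTVD\bigl(G(\kappa_1),G(\kappa_2)\bigr)\ \le\ \max\Bigl(\textstyle\sum_x\dP_\ell(x)\,m_x,\ \sum_x\dQ_\ell(x)\,m_x\Bigr),\qquad m_x:=\MTVD\bigl(\kappa_1(\cdot|x),\kappa_2(\cdot|x)\bigr),
\]
where $\dP_\ell,\dQ_\ell$ are the $\ell$-th marginals. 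This follows by picking, for each state $x$, realizations of $\kappa_1(\cdot|x)$ and $\kappa_2(\cdot|x)$ attaining $m_x$ (possible since all conditional ratios here have finite support, Lemma~\ref{lem:inf}) and iterating Lemma~\ref{lem:concat} / \eqref{eq:concat}: the $\dP$-realization of $G(\kappa_i)$ is the semidirect product of $\dP_{1:\ell}$ with the $x$-indexed chosen kernels, so the two $\dP$-realizations have TV distance $\sum_x\dP_\ell(x)\,\TVD$ of the corresponding $x$-wise distributions $\le\sum_x\dP_\ell(x)m_x$, and symmetrically for $\dQ$. Applying this with $\kappa_1=\tilde\dR\MK{\ell+1:n}{\ell}$, $\kappa_2=\dR'\MK{\ell+1:n}{\ell}$, the Sparsification lemma (Lemma~\ref{lem:sparsify}, which gives $m_x\le\tfrac{\varepsilon}{8n}\TVD(\dR'\MK[x]{\ell+1:n}{\ell})+\tfrac{\varepsilon}{4n}\LBTV$), and $\sum_x\dP_\ell(x)=\sum_x\dQ_\ell(x)=1$, yields
\[
  \MTVD\bigl(\dH^{(\ell)},\dH^{(\ell+1)}\bigr)\ \le\ \frac{\varepsilon}{8n}\max\Bigl(\textstyle\sum_x\dP_\ell(x)\,\TVD(\dR'\MK[x]{\ell+1:n}{\ell}),\ \sum_x\dQ_\ell(x)\,\TVD(\dR'\MK[x]{\ell+1:n}{\ell})\Bigr)+\frac{\varepsilon}{4n}\LBTV.
\]

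Finally I would show each gap is at most $\tfrac{\varepsilon}{2n}\TVD(\dP,\dQ)$. By the previous claim $\dR'\MK[x]{\ell+1:n}{\ell}\le\dR\MK[x]{\ell+1:n}{\ell}$, so data processing gives $\TVD(\dR'\MK[x]{\ell+1:n}{\ell})\le\TVD(\dR\MK[x]{\ell+1:n}{\ell})=\TVD(\dP\MK[x]{\ell+1:n}{\ell},\dQ\MK[x]{\ell+1:n}{\ell})$, and the shared-suffix cancellation from the proof of Lemma~\ref{lem-lb-mar} gives
\[
  \textstyle\sum_x\dQ_\ell(x)\,\TVD(\dR\MK[x]{\ell+1:n}{\ell})=\TVD(\dD_\ell,\dQ)\le\TVD(\dD_\ell,\dP)+\TVD(\dP,\dQ)=\TVD(\dQ_{1:\ell},\dP_{1:\ell})+\TVD(\dP,\dQ)\le 2\,\TVD(\dP,\dQ),
\]
and symmetrically $\sum_x\dP_\ell(x)\,\TVD(\dR\MK[x]{\ell+1:n}{\ell})=\TVD(\dP,\dP_{1:\ell}\dQ\MK{\ell+1:n}{\ell})\le 2\,\TVD(\dP,\dQ)$. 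With $\LBTV\le\TVD(\dP,\dQ)$ this gives $\MTVD(\dH^{(\ell)},\dH^{(\ell+1)})\le\tfrac{\varepsilon}{4n}\TVD(\dP,\dQ)+\tfrac{\varepsilon}{4n}\TVD(\dP,\dQ)=\tfrac{\varepsilon}{2n}\TVD(\dP,\dQ)$, and summing the $n-1$ terms proves the claim. The main obstacle is the weighted concatenation bound: the crude estimate $\MTVD(G(\kappa_1),G(\kappa_2))\le\max_x m_x$ would be far too lossy, because the conditional distances $\TVD(\dR\MK[x]{\ell+1:n}{\ell})$ are \emph{not} individually bounded by $\TVD(\dP,\dQ)$ (a rarely visited state $x$ can have conditional distance near $1$ while $\TVD(\dP,\dQ)$ is tiny); one has to carry realizations that are simultaneously consistent across all contexts $x$ so that the transported error is controlled by the marginals $\dP_\ell,\dQ_\ell$, which is precisely what makes the telescoping $\sum_\ell\TVD(\dD_\ell,\dQ)\le 2(n-1)\TVD(\dP,\dQ)$ kick in.
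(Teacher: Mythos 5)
Your proposal is correct and takes essentially the same route as the paper's own argument: the same hybrid worlds indexed by which sparsification steps are skipped, per-state optimal realizations of the sparsification error (justified via Lemma~\ref{lem:inf}) assembled into kernels over a common space and transported through the remaining exact concatenations, with the transported error expressed as a $\dP_\ell$/$\dQ_\ell$-weighted average of conditional TV distances. Your ``weighted concatenation bound'' is in substance the paper's Lemma~\ref{lem:concat-MTVD}, and your final telescoping bound $\sum_x \dQ_\ell(x)\TVD(\dR\MK[x]{\ell+1:n}{\ell}) \leq 2\TVD(\dP,\dQ)$ together with $\LBTV \leq \TVD(\dP,\dQ)$ matches the paper's per-hybrid estimate of $\frac{\varepsilon}{2n}\TVD(\dP,\dQ)$.
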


% The proof above is quite dense.
% We also provide a more intuitive proof using the hybrid argument.
\noindent
We present a rather intuitive proof here.
A more formal proof is given in \Cref{sec:markov-alter}.

\newcommand\hb{i}
\begin{proof}
  Image that there are $n$ hybrid worlds, numbered by $\hb=1,\dots,n$.
  % The 1st hybrid world is the same as the real word.
  In the $\hb$-th hybrid world, we modify how Algorithm~\ref{alg:markov} works.
  % the algorithm will skip the sparsification step in the main loop for every index $k<\hb$.
  \begin{description}
    \item[The $\hb$-th hybrid world] 
      consider a modified algorithm that skips the sparsification step in the main loop for every index $k<\hb$.
      In other words, let $\tilde\dR\MK{k+1:n}{k} = \dR'\MK{k+1:n}{k}$ for every $k<\hb$.
      To avoid confusion, 
      we use $\dR^{(\hb)}\MK{k+1:n}{k}$ to denote this value for every $k<\hb$,
      we use $\dR^{(\hb)}_{1:n}$ to denote the value of $\dR'_{1:n}$ in the $\hb$-th hybrid world.
      % The modified algorithm can be highly inefficient.
    % \item[The 1st hybrid world] is the same as the real world.
  \end{description}

  The 1st hybrid world is identical to the real world, thus $\dR^{(1)}_{1:n} = \dR'_{1:n}$.
  The last hybrid world is ``ideal'' in the sense that there is no error introduced by sparsification,
  thus $\dR^{(n)}_{1:n} = \dR_{1:n}$.
  To bound the MTV distance between $\dR'_{1:n}$ and $\dR_{1:n}$,
  it suffice to prove  for every $1\leq \hb < n$,
  \[
    \MTVD(\dR^{(\hb)}_{1:n},\dR^{(\hb+1)}_{1:n})
    \leq \frac{1}{2n}\varepsilon \TVD(\dP,\dQ) \,.
  \]

  Comparing the $\hb$-th and $(\hb+1)$-th hybrid worlds,
  the only difference is whether the sparsification step is skipped when $k=\hb$.
  % \begin{itemize}
  %   \item In the $(\hb+1)$-th hybrid world,
  %     $\dR^{(\hb+1)}\MK{\hb+1:n}{\hb} = \dR'\MK{\hb+1:n}{\hb}$.
  %   \item In the $\hb$-th hybrid world,
  %     $\tilde\dR\MK[x]{\hb+1:n}{\hb}$ is the sparsification of $\dR'\MK[x]{\hb+1:n}{\hb}$. 
  %     By Lemma~\ref{lem:sparsify},
  %     \[
  %       \MTVD(\tilde\dR\MK[x]{\hb+1:n}{\hb}, \dR^{(\hb+1)}\MK[x]{\hb+1:n}{\hb})
  %       \leq \frac{\varepsilon}{4n} \LBTV + \frac{\varepsilon}{8n} \TVD(\dR^{(\hb+1)}\MK[x]{\hb+1:n}{\hb})
  %       \leq \frac{\varepsilon}{4n} \LBTV + \frac{\varepsilon}{8n} \TVD(\dR\MK[x]{\hb+1:n}{\hb}) \,.
  %     \]
  % \end{itemize}
  In the $(\hb+1)$-th hybrid world,
  $\dR^{(\hb+1)}\MK{\hb+1:n}{\hb} = \dR'\MK{\hb+1:n}{\hb}$.
  In the $\hb$-th hybrid world,
  $\tilde\dR\MK[x]{\hb+1:n}{\hb}$ is the sparsification of $\dR'\MK[x]{\hb+1:n}{\hb}$. 
  By Lemma~\ref{lem:sparsify},
  \[
    \MTVD(\tilde\dR\MK[x]{\hb+1:n}{\hb}, \dR^{(\hb+1)}\MK[x]{\hb+1:n}{\hb})
    \leq \frac{\varepsilon}{4n} \LBTV + \frac{\varepsilon}{8n} \TVD(\dR^{(\hb+1)}\MK[x]{\hb+1:n}{\hb})
    \leq \frac{\varepsilon}{4n} \LBTV + \frac{\varepsilon}{8n} \TVD(\dR\MK[x]{\hb+1:n}{\hb}) \,.
  \]
  So there exist Markov kernels $\dP^{(\hb)*}\MK{Y}{\hb}$, $\dQ^{(\hb)*}\MK{Y}{\hb}$, 
  $\dP^{(\hb+1)**}\MK{Y}{\hb}$, $\dQ^{(\hb+1)**}\MK{Y}{\hb}$
  from $[q]$ to a sample space $\Omega^{(\hb)}_Y$ satisfying
  \begin{gather}
    \tilde\dR\MK{\hb+1:n}{\hb}     = \bigl(\dP^{(\hb)*}\MK{Y}{\hb} \bigm\| \dQ^{(\hb)*}\MK{Y}{\hb} \bigr) \,, \qquad
    \dR^{(\hb+1)}\MK{\hb+1:n}{\hb} = \bigl(\dP^{(\hb+1)**}\MK{Y}{\hb} \bigm\| \dQ^{(\hb+1)**}\MK{Y}{\hb} \bigr) \,, 
    \label{eq:artificial}
    \\
    \max\Bigl( 
      \TVD\Bigl(\dP^{(\hb)*}\MK[x]{Y}{\hb}, \dP^{(\hb+1)**}\MK[x]{Y}{\hb}\Bigr) ,\;
      \TVD\Bigl(\dQ^{(\hb)*}\MK[x]{Y}{\hb}, \dQ^{(\hb+1)**}\MK[x]{Y}{\hb}\Bigr)
    \Bigr) \leq \frac{\varepsilon}{4n} \LBTV + \frac{\varepsilon}{8n} \TVD(\dR\MK[x]{\hb+1:n}{\hb}) \,. 
    \label{eq:artificial-MTVD}
  \end{gather}
  
  Inspired by these distributions,
  we define some extra artificial hybrid worlds: 
  \begin{description}
    \item[The $\hb$* hybrid world] 
      is a truncated version of the $\hb$-th hybrid word.
      In this hybrid worlds,
      the algorithm sets $\tilde\dR\MK{\hb+1:n}{\hb}$ according to \eqref{eq:artificial}
      and the rest of the computation is the same as the $\hb$-th hybrid word.
      In this hybrid world, the algorithm will also compute $\dR^{(\hb)}_{1:n}$.

      By its definition, the algorithm computes the exact ratio between
      \[
        \dP_1\dP\MK21\dots\dP\MK{\hb}{\hb-1}\dP^{(\hb)*}\MK{Y}{\hb} \text{ and } 
        \dQ_1\dQ\MK21\dots\dQ\MK{\hb}{\hb-1}\dQ^{(\hb)*}\MK{Y}{\hb} \,.
      \]

    \item[The $(\hb+1)$** hybrid world] 
      is a truncated version of the $(\hb+1)$-th hybrid word.
      In this hybrid worlds,
      the algorithm sets $\dR^{(\hb+1)}\MK{\hb+1:n}{\hb}$ according to \eqref{eq:artificial}
      and the rest of the computation is the same as the $\hb$-th hybrid word.
      In this hybrid world, the algorithm will also compute $\dR^{(\hb+1)}_{1:n}$.

      By its definition, the algorithm computes the exact ratio between
      \[
        \dP_1\dP\MK21\dots\dP\MK{\hb}{\hb-1}\dP^{(\hb+1)**}\MK{Y}{\hb} \text{ and } 
        \dQ_1\dQ\MK21\dots\dQ\MK{\hb}{\hb-1}\dQ^{(\hb+1)**}\MK{Y}{\hb} \,.
      \]
  \end{description}
  Therefore, 
  \begin{multline*}    
    \MTVD(\dR^{(\hb)}_{1:n}, \dR^{(\hb+1)}_{1:n})
    \leq \max\Bigl(
    \TVD\Bigl( \dP_1\dP\MK21\dots\dP\MK{\hb}{\hb-1}\dP^{(\hb)*}\MK{Y}{\hb},\;
        \dP_1\dP\MK21\dots\dP\MK{\hb}{\hb-1}\dP^{(\hb+1)**}\MK{Y}{\hb} \Bigr),  \\
    \TVD\Bigl( \dQ_1\dQ\MK21\dots\dQ\MK{\hb}{\hb-1}\dQ^{(\hb)*}\MK{Y}{\hb},\;
        \dQ_1\dQ\MK21\dots\dQ\MK{\hb}{\hb-1}\dQ^{(\hb+1)**}\MK{Y}{\hb} \Bigr) \Bigr) \,.
  \end{multline*}
  The right-hand side of the inequality is bounded by 
  % (the other term is bounded symmetrically)
  \[
  \begin{aligned}[b]
    & \TVD\Bigl( \dP_1\dP\MK21\dots\dP\MK{\hb}{\hb-1}\dP^{(\hb)*}\MK{Y}{\hb},\;
        \dP_1\dP\MK21\dots\dP\MK{\hb}{\hb-1}\dP^{(\hb+1)**}\MK{Y}{\hb} \Bigr) \\
    &= \TVD\Bigl( \dP_i\dP^{(\hb)*}\MK{Y}{\hb},\; \dP_i\dP^{(\hb+1)**}\MK{Y}{\hb} \Bigr) \\
    &= \Ex_{x\sim\dP_i} \Bigl[ \TVD\Bigl( \dP^{(\hb)*}\MK[x]{Y}{\hb},\; \dP^{(\hb+1)**}\MK[x]{Y}{\hb} \Bigr) \Bigr] \\
    &\leq \Ex_{x\sim\dP_i} \Bigl[ \frac{\varepsilon}{4n} \LBTV + \frac{\varepsilon}{8n} \TVD(\dR\MK[x]{\hb+1:n}{\hb}) \Bigr] \\
    &= \frac{\varepsilon}{4n} \LBTV + \frac{\varepsilon}{8n} \Ex_{x\sim\dP_\hb} \Bigl[  \TVD(\dR\MK[x]{\hb+1:n}{\hb}) \Bigr] \\
    &= \frac{\varepsilon}{4n} \LBTV + \frac{\varepsilon}{8n} \Ex_{x\sim\dP_\hb} \Bigl[  \TVD(\dP\MK[x]{\hb+1:n}{\hb}, \dQ\MK[x]{\hb+1:n}{\hb}) \Bigr] \\
    &= \frac{\varepsilon}{4n} \LBTV + \frac{\varepsilon}{8n} \TVD(\dP_\hb\dP\MK{\hb+1:n}{\hb}, \dP_\hb\dQ\MK{\hb+1:n}{\hb}) \\
    &\leq \frac{\varepsilon}{2n} \TVD(\dP,\dQ) \,.
  \end{aligned}
  \]
  The first inequality symbol follows from formula \eqref{eq:artificial-MTVD}.
  The second inequality symbol relies on
  \begin{multline*}
    \TVD (\dP_i \dP\MK{i+1:n}{i}, \dP_i \dQ\MK{i+1:n}{i}) 
    \leq{} \TVD (\dP_i \dP\MK{i+1:n}{i}, \dQ_i \dQ\MK{i+1:n}{i}) + \TVD (\dQ_i \dQ\MK{i+1:n}{i}, \dP_i \dQ\MK{i+1:n}{i}) \\
    ={} \TVD (\dP_{i:n} , \dQ_{i:n} ) + \TVD (\dQ_{i}, \dP_{i} ) 
    \leq{} 2 \TVD(\dP,\dQ) \,.
    \qedhere
  \end{multline*}
  % Symmetrically, $\TVD\Bigl( \dQ_1\dQ\MK21\dots\dQ\MK{\hb}{\hb-1}\dQ^{(\hb)*}\MK{Y}{\hb},\;
  %       \dQ_1\dQ\MK21\dots\dQ\MK{\hb}{\hb-1}\dQ^{(\hb+1)**}\MK{Y}{\hb} \Bigr)$ is also bounded by $\frac{\varepsilon}{2n} \TVD(\dP,\dQ)$.
\end{proof}

\section*{Acknowledgement}
%We thank all 
Weiming Feng would like to thank Dr. Heng Guo for the helpful discussions.
Tianren Liu would like to thank Prof.\ Yury Polyanskiy for introducing us to the concepts of deficiency and Le Cam's distance.

\bibliographystyle{alpha}
\bibliography{refs.bib}

\appendix
\section{Another Analysis of our Algorithm in the Markov Chain Setting}
\label{sec:markov-alter}
% \subsection{Region of TV Distance Pair}

% \tianren{ongoing; will be moved to appendix}

This section presents an alternative analysis for bounding the error of Algorithm~\ref{alg:markov}.
It starts by introducing a finer characterization of the ``difference'' between two ratios.
% To analyze our algorithm in the Markov chain setting, 
% we need to introduce a finer characterization of the ``difference'' between two ratios.

\begin{definition}[Region]
  For any two ratios $\dR_1,\dR_2$,
  define the feasible region of TV distance pair (or ``\emph{region}'', in short) between $\dR_1,\dR_2$,
  denoted by $\RTVD(\dR_1,\dR_2)$, as
  \[
    \RTVD(\dR_1,\dR_2)
    \deq \left\{ \Big(\TVD(\dP_1,\dP_2), \TVD(\dQ_1,\dQ_2)\Bigr)  \middle|\; \begin{aligned} &\text{distributions } \dP_1,\dQ_1,\dP_2,\dQ_2 \\&\text{such that } (\dP_1\|\dQ_1) = \dR_1,(\dP_2\|\dQ_2) = \dR_2 \end{aligned} \right\} \,.
  \]
\end{definition}

Region has many nice properties.
It is more expressive than the minimum total variation distance.
The later can be defined as the $L_1$-distance between $(0,0)$ and the region.
\[
  \MTVD(\dR_1,\dR_2) = \inf_{(\delta_1,\delta_2) \in \RTVD(\dR_1,\dR_2)} \max(\delta_1,\delta_2) \,.
\]
We remark that, the current definition of $\MTVD$ is not special.
For example, consider an alternative distance
\[
  \MTVD'(\dR_1,\dR_2) = \inf_{(\delta_1,\delta_2) \in \RTVD(\dR_1,\dR_2)} (\delta_1 + \delta_2) \,.
\]
This alternative distance, which is the $L_\infty$-distance from $(0,0)$, also works well with the paper.
% We remark that, there are alternative distances that also work well with this paper.
% For example, consider
% \[
%   \MTVD'(\dR_1,\dR_2) = \inf_{(\delta_1,\delta_2) \in \RTVD(\dR_1,\dR_2)} (\delta_1 + \delta_2) \,,
% \]
% which measures the distance between $(0,0)$ and $\RTVD(\dR_1,\dR_2)$ in $L_1$-norm.
An early draft of this paper uses $\MTVD'$ to analyze our algorithms.

% Region has many nice properties:
% containing $(1,1)$; convex; closed, if the support size is finite.
For this paper, it is sufficient to have the following property.
The proof is deferred to \Cref{sec:proof}.
% which has been implicitly proved in the proof of \Cref{lem-metric}.
%
\begin{lemma}[Triangle Inequality]\label{lem:region-triangle}
  For any ratios $\dR_1,\dR_2,\dR$
  such that $\RTVD(\dR_1,\dR) \leq_\exists (\alpha_1,\beta_1)$ 
  and $\RTVD(\dR_2,\dR) \leq_\exists (\alpha_2,\beta_2)$,
  it holds that  
  $\RTVD(\dR_1,\dR_2) \leq_\exists (\alpha_1+ \alpha_2,\beta_1+\beta_2)$.
\end{lemma}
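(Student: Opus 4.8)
The plan is to exhibit, from the given data, distributions $\hat\dP_1,\hat\dQ_1,\hat\dP_2,\hat\dQ_2$ over one common sample space that realize $\dR_1$ and $\dR_2$ respectively and satisfy $\TVD(\hat\dP_1,\hat\dP_2)\le\alpha_1+\alpha_2$ and $\TVD(\hat\dQ_1,\hat\dQ_2)\le\beta_1+\beta_2$; this immediately gives $\RTVD(\dR_1,\dR_2)\leq_\exists(\alpha_1+\alpha_2,\beta_1+\beta_2)$. Unpacking the hypotheses, there are distributions $\dP_1,\dQ_1,\dP',\dQ'$ over some space $\Omega$ with $(\dP_1\|\dQ_1)=\dR_1$, $(\dP'\|\dQ')=\dR$, $\TVD(\dP_1,\dP')\le\alpha_1$, $\TVD(\dQ_1,\dQ')\le\beta_1$, and distributions $\dP_2,\dQ_2,\dP'',\dQ''$ over some space $\Omega'$ with $(\dP_2\|\dQ_2)=\dR_2$, $(\dP''\|\dQ'')=\dR$, $\TVD(\dP_2,\dP'')\le\alpha_2$, $\TVD(\dQ_2,\dQ'')\le\beta_2$. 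The difficulty is that $(\dP',\dQ')$ and $(\dP'',\dQ'')$ realize the same ratio $\dR$ but live on different spaces, and any Markov-kernel transport between them would spoil $\dR_1$ or $\dR_2$ (by the data processing inequality a kernel only weakens a ratio), so the two realizations of $\dR$ must be reconciled more carefully.

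The heart of the argument is to build a common ``sufficient coupling'' of $(\dP',\dQ')$ and $(\dP'',\dQ'')$ on $\Omega\times\Omega'$. Let $\kappa'(x)=\dP'(x)/\dQ'(x)$ and $\kappa''(x')=\dP''(x')/\dQ''(x')$ be the likelihood-ratio maps (valued in $[0,\infty]$), and for each ratio value $r$ let $\dC'_r$ (resp.\ $\dC''_r$) be the conditional law on $\Omega$ (resp.\ $\Omega'$) given $\kappa'=r$ (resp.\ $\kappa''=r$). By sufficiency of the likelihood ratio, $\dC'_r$ does not depend on whether one conditions the law $\dP'$ or the law $\dQ'$ (similarly for $\dC''_r$), and $\dP'=\sum_r\dR^\dagger(r)\dC'_r$, $\dQ'=\sum_r\dR(r)\dC'_r$, and likewise for $\dP'',\dQ''$. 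Define Markov kernels $\mu(x'|x):=\dC''_{\kappa'(x)}(x')$ from $\Omega$ to $\Omega'$ and $\nu(x|x'):=\dC'_{\kappa''(x')}(x)$ from $\Omega'$ to $\Omega$. I claim that the law on $\Omega\times\Omega'$ obtained by drawing $x\sim\dP'$ then $x'\sim\mu(\cdot|x)$ equals the one obtained by drawing $x'\sim\dP''$ then $x\sim\nu(\cdot|x')$, and similarly with $\dQ',\dQ''$ in place of $\dP',\dP''$. Indeed, in the first process set $r=\kappa'(x)$: then $(r,x')$ has law ``$r\sim\dR^\dagger$, $x'\sim\dC''_r$'', so the $x'$-marginal is $\dP''$; and since $\dC''_r$ is supported on $\{\kappa''=r\}$, conditioning on $x'$ forces $r=\kappa''(x')$ and leaves $x$ an independent draw from $\dC'_{\kappa''(x')}=\nu(\cdot|x')$ --- which is exactly the second process. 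The $\dQ$-version is identical, using that $\dC'_r,\dC''_r$ are the same whether obtained from the $\dP$- or the $\dQ$-law.

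With this in hand I would set $\hat\Omega=\Omega\times\Omega'$, define $\hat\dP_1,\hat\dQ_1$ by drawing the $\Omega$-coordinate from $\dP_1$ (resp.\ $\dQ_1$) and then the $\Omega'$-coordinate from $\mu$, and $\hat\dP_2,\hat\dQ_2$ by drawing the $\Omega'$-coordinate from $\dP_2$ (resp.\ $\dQ_2$) and then the $\Omega$-coordinate from $\nu$. Since each pair shares a Markov kernel, the induced ratios are unchanged: $(\hat\dP_1\|\hat\dQ_1)=(\dP_1\|\dQ_1)=\dR_1$ and $(\hat\dP_2\|\hat\dQ_2)=(\dP_2\|\dQ_2)=\dR_2$, so the pair $(\TVD(\hat\dP_1,\hat\dP_2),\TVD(\hat\dQ_1,\hat\dQ_2))$ lies in $\RTVD(\dR_1,\dR_2)$. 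Because appending a fixed kernel to both arguments leaves TV distance unchanged ($\TVD(\dP\times\kappa,\dP'\times\kappa)=\TVD(\dP,\dP')$), the distance from $\hat\dP_1$ to ``$\dP'$ then $\mu$'' equals $\TVD(\dP_1,\dP')\le\alpha_1$ and the distance from ``$\dP''$ then $\nu$'' to $\hat\dP_2$ equals $\TVD(\dP'',\dP_2)\le\alpha_2$; by the claim those two middle laws coincide, so the triangle inequality for TV distance gives $\TVD(\hat\dP_1,\hat\dP_2)\le\alpha_1+\alpha_2$, and the identical computation with the $\dQ$'s gives $\TVD(\hat\dQ_1,\hat\dQ_2)\le\beta_1+\beta_2$, completing the proof.

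I expect the reconciliation step --- proving that the two processes define the same joint law, in both the $\dP$- and $\dQ$-versions --- to be the crux, as it is the one place where sufficiency of the likelihood-ratio statistic is genuinely used; everything after it is bookkeeping with the triangle inequality and the identity $\TVD(\dP\times\kappa,\dP'\times\kappa)=\TVD(\dP,\dP')$. A minor technical point to handle is the atom $r=\infty$ (where $\dQ'(x)=0$): there $\dC'_\infty$ and $\dC''_\infty$ must be defined as conditionals of $\dP'$ and $\dP''$ only, but this atom never occurs on the $\dQ$-side, so no inconsistency results.
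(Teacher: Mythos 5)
Your proposal is correct and takes essentially the same route as the paper: your ``sufficient coupling'' of the two realizations of $\dR$ (conditionally independent given the likelihood-ratio value, via the conditionals $\dC'_r,\dC''_r$) together with the kernels $\mu,\nu$ and the shared middle laws is exactly the construction $\dP_{3,4},\dQ_{3,4}$ with $\dP_{4|3}=\dQ_{4|3}$ and $\dP_{3|4}=\dQ_{3|4}$ from the proof of Lemma~\ref{lem-metric}, which the paper's proof of Lemma~\ref{lem:region-triangle} simply invokes. The only differences are presentational (you re-derive the coupling instead of citing it, and you should note the $r=0$ atom alongside $r=\infty$, handled symmetrically via the $\dQ$-side conditionals), so there is no gap.
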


% This proposition is implicitly proved in the proof of \Cref{lem-metric}.

The description of this lemma uses a convenient notation $\leq_\exists$,
defined as follows.
% The proposition is presented using some convenient notations that will simplify our analysis.
For any two points $(v_1,v_2)$, $(w_1,w_2)$ in the Euclidean plane,
we say $(v_1,v_2) \leq (w_1,w_2)$ if $v_1\leq w_1$ and $v_2\leq w_2$.
Apparently, this defines an order relation.
Let $\cR$ be a point set in the Euclidean plane,
we say  $\cR \leq_\exists (v_1,v_2)$ if there exists $(u_1,u_2) \in\cR$ such that $(u_1,u_2) \leq (v_1,v_2)$.

Using the new concept of region, we gives a fine characterization of how the error passes through the concatenation process.

\begin{lemma}\label{lem:concat-MTVD}
  Given conditional ratios $\dR\MK YX, \dR'\MK YX$ over $\Omega_X$,
  for any mappings $\alpha, \beta: \Omega_X\to[0,\infty)$ such that $\RTVD(\dR\MK[x]YX, \dR'\MK[x]YX) \leq_\exists (\alpha(x),\beta(x))$ for every $x\in\Omega_X$,
  for any distributions $\dP_X,\dQ_X$ over $\Omega_X$,
  it holds that
  \[
    \RTVD(\textnormal{\textsf{Concatenate}} (\dP_X,\dQ_X,\dR\MK YX), \textnormal{\textsf{Concatenate}} (\dP_X,\dQ_X,\dR'\MK YX) )
    \leq_\exists
    \Bigl(\Ex_{X\sim\dP_X} [ \alpha(X) ], \Ex_{X\sim\dQ_X} [ \beta(X) ]\Bigr)  \,.
  \]
\end{lemma}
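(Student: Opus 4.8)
The plan is to lift the given per-coordinate region witnesses to witnesses for the two concatenated ratios, and then to use the fact that TV distance is linear along a semidirect product whose first marginal is held fixed. First I would unpack the hypothesis: for each $x\in\Omega_X$, since $\RTVD(\dR\MK[x]YX,\dR'\MK[x]YX)\leq_\exists(\alpha(x),\beta(x))$, there exist discrete distributions $\dP^{(x)},\dQ^{(x)},\hat\dP^{(x)},\hat\dQ^{(x)}$ over some space $\Omega_Y^{(x)}$ with $(\dP^{(x)}\|\dQ^{(x)})=\dR\MK[x]YX$, $(\hat\dP^{(x)}\|\hat\dQ^{(x)})=\dR'\MK[x]YX$, and $\TVD(\dP^{(x)},\hat\dP^{(x)})\leq\alpha(x)$, $\TVD(\dQ^{(x)},\hat\dQ^{(x)})\leq\beta(x)$. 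Because the paper works with discrete distributions, $\Omega_X$ is countable and this is merely a countable choice, with no measurability concern. I would then bundle these into Markov kernels over the single space $\Omega_Y=\bigcup_{x}\Omega_Y^{(x)}$ (taken to be a disjoint union, exactly as in the proof of Lemma~\ref{lem:concat-order}), extending each distribution by $0$ outside its own $\Omega_Y^{(x)}$ and setting $\dP\MK[x]YX\deq\dP^{(x)}$, $\dQ\MK[x]YX\deq\dQ^{(x)}$, $\hat\dP\MK[x]YX\deq\hat\dP^{(x)}$, $\hat\dQ\MK[x]YX\deq\hat\dQ^{(x)}$. Extension by $0$ changes neither the conditional ratios (each sums only over $\supp(\dQ^{(x)})$, resp.\ $\supp(\hat\dQ^{(x)})$) nor the conditional TV distances, so $(\dP\MK YX\|\dQ\MK YX)=\dR\MK YX$ and $(\hat\dP\MK YX\|\hat\dQ\MK YX)=\dR'\MK YX$.

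Next I would apply the Concatenation lemma (Lemma~\ref{lem:concat}) to the kernels just built, which gives
\[
  \textnormal{\textsf{Concatenate}}(\dP_X,\dQ_X,\dR\MK YX)=\bigl(\dP_X\dP\MK YX\bigm\|\dQ_X\dQ\MK YX\bigr),\qquad
  \textnormal{\textsf{Concatenate}}(\dP_X,\dQ_X,\dR'\MK YX)=\bigl(\dP_X\hat\dP\MK YX\bigm\|\dQ_X\hat\dQ\MK YX\bigr).
\]
Hence the quadruple $\bigl(\dP_X\dP\MK YX,\ \dQ_X\dQ\MK YX,\ \dP_X\hat\dP\MK YX,\ \dQ_X\hat\dQ\MK YX\bigr)$ is an admissible choice in the definition of $\RTVD$ for these two ratios, so the point $\bigl(\TVD(\dP_X\dP\MK YX,\dP_X\hat\dP\MK YX),\ \TVD(\dQ_X\dQ\MK YX,\dQ_X\hat\dQ\MK YX)\bigr)$ lies in $\RTVD(\textnormal{\textsf{Concatenate}}(\dP_X,\dQ_X,\dR\MK YX),\,\textnormal{\textsf{Concatenate}}(\dP_X,\dQ_X,\dR'\MK YX))$.

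Finally I would invoke the elementary identity $\TVD(\mu\kappa,\mu\kappa')=\Ex_{x\sim\mu}\bigl[\TVD(\kappa(\cdot|x),\kappa'(\cdot|x))\bigr]$, which holds because $\mu\kappa$ and $\mu\kappa'$ share the first marginal $\mu$; this is the same computation already used in the Markov-chain analysis. It gives $\TVD(\dP_X\dP\MK YX,\dP_X\hat\dP\MK YX)=\Ex_{x\sim\dP_X}[\TVD(\dP^{(x)},\hat\dP^{(x)})]\leq\Ex_{x\sim\dP_X}[\alpha(x)]$ and, symmetrically, $\TVD(\dQ_X\dQ\MK YX,\dQ_X\hat\dQ\MK YX)\leq\Ex_{x\sim\dQ_X}[\beta(x)]$. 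So the point exhibited above is coordinatewise at most $\bigl(\Ex_{X\sim\dP_X}[\alpha(X)],\ \Ex_{X\sim\dQ_X}[\beta(X)]\bigr)$, which is precisely the claimed $\leq_\exists$ bound. I do not anticipate a genuine obstacle; the only points needing care are the assembly of a common $\Omega_Y$ (already handled inside the proof of Lemma~\ref{lem:concat-order}) and seeing why the statement is phrased with $\RTVD$ and $\leq_\exists$ rather than a single $\MTVD$ inequality: the two error coordinates get averaged against \emph{different} distributions ($\dP_X$ for the first, $\dQ_X$ for the second), so they propagate through distinct weightings and cannot be collapsed into one $\max$.
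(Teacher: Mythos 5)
Your proposal is correct and follows essentially the same route as the paper's proof: extract per-$x$ witnesses from the hypothesis, assemble them into kernels over a common $\Omega_Y$, identify the two concatenated ratios with $(\dP_X\dP\MK YX \| \dQ_X\dQ\MK YX)$ and $(\dP_X\hat\dP\MK YX \| \dQ_X\hat\dQ\MK YX)$, and bound the two TV distances via $\TVD(\mu\kappa,\mu\kappa')=\Ex_{x\sim\mu}\bigl[\TVD(\kappa(\cdot|x),\kappa'(\cdot|x))\bigr]$. Your explicit disjoint-union construction of $\Omega_Y$ just spells out a step the paper leaves implicit.
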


\begin{proof}
  Since $\RTVD(\dR\MK[x]YX, \dR'\MK[x]YX) \leq_\exists  (\alpha(x),\beta(x))$ for every $x\in\Omega_X$,
  % there exist a sample space $\Omega^{(x)}_Y$,
  % distributions $\dP^{(x)},\dQ^{(x)},\dP'^{(x)},\dQ'^{(x)}$ over $\Omega^{(x)}_Y$
  % such that 
  % \[
  %   \dR\MK[x]YX = (\dP^{(x)} \| \dQ^{(x)}) \,,\qquad
  %   \dR'\MK[x]YX = (\dP'^{(x)} \| \dQ'^{(x)}) \,.
  % \]
  there exist a sample space $\Omega_Y$,
  Markov kernels $\dP\MK YX,\dP'\MK YX,\dQ\MK YX,\dQ'\MK YX$ from $\Omega_X$ to $\Omega_Y$
  % and Markov kernels $\kappa_x$ from $\Omega_Y$ to $\Omega_Y$ for each $x\in\Omega_X$
  such that 
  \[
  \begin{gathered}
    \dR\MK YX = (\dP\MK YX \| \dQ\MK YX) \,, \qquad \dR'\MK YX = (\dP'\MK YX \| \dQ'\MK YX) \,, \\
    \TVD(\dP\MK[x]YX, \dP'\MK[x]YX) \leq \alpha(x), \quad \TVD(\dQ\MK[x]YX, \dQ'\MK[x]YX) \leq \beta(x) ~\text{ for every }x\in\Omega_X \,.
  \end{gathered}
  \]
  The two ratios in the wanted inequality equal $(\dP_X\dP\MK YX  \| \dQ_X\dQ\MK YX )$ and $(\dP_X\dP'\MK YX \| \dQ_X\dQ'\MK YX)$ respectively. 
  Therefore\footnote{The expectation in the 3rd line of the following formula is written using an inconsistent but intuitive notation $x\sim\dP_X$.  
  If using the same notation as the rest of the paper, it should be written as $\Ex_{X\sim\dP_X} \bigl[\TVD(\dP\MK[X]YX, \dP'\MK[X]YX)\bigr]$.}
  \[
  \begin{aligned}[b]
    &\RTVD(\textnormal{\textsf{Concatenate}} (\dP_X,\dQ_X,\dR\MK YX), \textnormal{\textsf{Concatenate}} (\dP_X,\dQ_X,\dR'\MK YX) ) \\
    &\leq_\exists 
    \bigl(\TVD(\dP_X\dP\MK YX,\dP_X\dP'\MK YX), ~ \TVD( \dQ_X\dQ\MK YX , \dQ_X\dQ'\MK YX)\bigr) \\
    % &=
    % \max\Bigl(\sum_{x\in\Omega_X} \dP_X(x) \TVD(\dP\MK[x]YX, \dP'\MK[x]YX), \sum_{x\in\Omega_X}  \dQ_X(x) \TVD( \dQ\MK[x]YX , \dQ'\MK[x]YX) \Bigr) \\
    &= 
    \Bigl(\Ex_{x\sim\dP_X} \bigl[\TVD(\dP\MK[x]YX, \dP'\MK[x]YX)\bigr], \Ex_{x\sim\dQ_X} \bigl[\TVD( \dQ\MK[x]YX , \dQ'\MK[x]YX)\bigr]\Bigr) \\
    &\leq
    \Bigl(\Ex_{X\sim\dP_X} [ \alpha(X) ], \Ex_{X\sim\dQ_X} [ \beta(X) ]\Bigr)  \,.
  \end{aligned}
  \qedhere
  \]
\end{proof}

Now we are ready to present an alternative error analysis of Algorithm~\ref{alg:markov}.

\begin{claim}
  For every $1<k\leq n$,
  the intermediate conditional ratio $\dR'\MK{k:n}{k-1}$ satisfies
  \begin{multline*}
    \RTVD (\dR'\MK[x]{k:n}{k-1}, \dR\MK[x]{k:n}{k-1})
    \leq_\exists
    \biggl(\frac{n-k}{4n}\cdot \varepsilon \LBTV + \frac{\varepsilon}{8n} \sum_{t=k}^{n-1}\Ex_{z\sim \dP\MK[x]t{k-1}}\bigl[ \TVD(\dR\MK[z]{t+1:n}{t}) \bigr], \\
    \frac{n-k}{4n}\cdot \varepsilon \LBTV + \frac{\varepsilon}{8n} \sum_{t=k}^{n-1}\Ex_{z\sim \dQ\MK[x]t{k-1}}\bigl[ \TVD(\dR\MK[z]{t+1:n}{t}) \bigr]
    \biggr) \,,
  \end{multline*}
  for all $x\in[q]$.
  The ratio $\dR'_{1:n}$ satisfies 
  $\MTVD(\dR'_{1:n},\dR_{1:n}) \leq \frac{n-1}{2n}\varepsilon \TVD(\dP,\dQ)$. 
\end{claim}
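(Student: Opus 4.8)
The plan is to establish the region bound for $\dR'\MK[x]{k:n}{k-1}$ by \emph{downward} induction on $k$ (from $k=n$ to $k=2$), and then to read off the statement about $\dR'_{1:n}$ from one further application of the same argument at ``$k=1$'' together with the crude estimate used in \Cref{lem-lb-mar}. Before starting I would record a region-valued form of the sparsification bound: the explicit distributions $\tilde\dP,\tilde\dQ$ produced in the proof of \Cref{lem:sparsify} satisfy $\tilde\dR=(\tilde\dP\|\tilde\dQ)$ with $(\dR^\dagger,\dR)$ the canonical pair of $\dR$, and $\TVD(\tilde\dP,\dR^\dagger),\,\TVD(\tilde\dQ,\dR)\le\tfrac12\epsr\TVD(\dR)+\tfrac12\epsa$; hence
\[
  \RTVD(\dR,\textnormal{\textsf{Sparsify}}(\dR,\epsr,\epsa))\le_\exists\Bigl(\tfrac12\epsr\TVD(\dR)+\tfrac12\epsa,\ \tfrac12\epsr\TVD(\dR)+\tfrac12\epsa\Bigr).
\]
Instantiated with the parameters of \Cref{alg:markov}, and using the order claim $\dR'\MK[z]{k+1:n}{k}\le\dR\MK[z]{k+1:n}{k}$ (so $\TVD(\dR'\MK[z]{k+1:n}{k})\le\TVD(\dR\MK[z]{k+1:n}{k})$ by data processing), this gives $\RTVD(\dR'\MK[z]{k+1:n}{k},\tilde\dR\MK[z]{k+1:n}{k})\le_\exists(\alpha_1(z),\alpha_1(z))$ with $\alpha_1(z)=\tfrac{\varepsilon}{8n}\TVD(\dR\MK[z]{k+1:n}{k})+\tfrac{\varepsilon}{4n}\LBTV$.

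The base case $k=n$ is immediate: $\dR'\MK{n:n}{n-1}=\dR\MK{n:n}{n-1}$, so the region contains $(0,0)$, which matches the asserted bound with its empty sum and vanishing $\LBTV$-coefficient. For the inductive step I assume the bound at index $k+1$ and fix $x\in[q]$. For each $z\in[q]$ the induction hypothesis gives $\RTVD(\dR'\MK[z]{k+1:n}{k},\dR\MK[z]{k+1:n}{k})\le_\exists(\alpha_2(z),\beta_2(z))$ with $\alpha_2(z)=\tfrac{n-k-1}{4n}\varepsilon\LBTV+\tfrac{\varepsilon}{8n}\sum_{t=k+1}^{n-1}\Ex_{w\sim\dP\MK[z]t{k}}[\TVD(\dR\MK[w]{t+1:n}{t})]$ and $\beta_2(z)$ the $\dQ$-analogue. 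Combining this with the sparsification bound above via the triangle inequality for regions (\Cref{lem:region-triangle}; note $\RTVD$ is symmetric since $\TVD$ is) yields $\RTVD(\tilde\dR\MK[z]{k+1:n}{k},\dR\MK[z]{k+1:n}{k})\le_\exists(\alpha_1(z)+\alpha_2(z),\,\alpha_1(z)+\beta_2(z))$, where the $\LBTV$-coefficients add up to $\tfrac{n-k}{4n}\varepsilon$ and the extra term $\tfrac{\varepsilon}{8n}\TVD(\dR\MK[z]{k+1:n}{k})$ is exactly the $t=k$ summand of $\sum_{t=k}^{n-1}$. Next, unfolding $\dP\MK[x]{k:n}{k-1}=\dP\MK[x]k{k-1}\,\dP\MK{k+1:n}{k}$ (and likewise for $\dQ$) and using \Cref{lem:concat} gives $\dR\MK[x]{k:n}{k-1}=\textnormal{\textsf{Concatenate}}(\dP\MK[x]k{k-1},\dQ\MK[x]k{k-1},\dR\MK{k+1:n}{k})$, while by construction $\dR'\MK[x]{k:n}{k-1}=\textnormal{\textsf{Concatenate}}(\dP\MK[x]k{k-1},\dQ\MK[x]k{k-1},\tilde\dR\MK{k+1:n}{k})$. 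Applying \Cref{lem:concat-MTVD} with $\dP_X=\dP\MK[x]k{k-1}$, $\dQ_X=\dQ\MK[x]k{k-1}$ pushes the per-$z$ region bounds through, producing $\RTVD(\dR'\MK[x]{k:n}{k-1},\dR\MK[x]{k:n}{k-1})\le_\exists\bigl(\Ex_{z\sim\dP\MK[x]k{k-1}}[\alpha_1(z)+\alpha_2(z)],\ \Ex_{z\sim\dQ\MK[x]k{k-1}}[\alpha_1(z)+\beta_2(z)]\bigr)$. Finally I collapse the iterated expectation: the chain-consistency identity $\Ex_{z\sim\dP\MK[x]k{k-1}}\Ex_{w\sim\dP\MK[z]t{k}}[f(w)]=\Ex_{w\sim\dP\MK[x]t{k-1}}[f(w)]$ (law of total probability along $x=X_{k-1}\to X_k\to X_t$) rewrites $\sum_{t=k+1}^{n-1}$ as a sum over $w\sim\dP\MK[x]t{k-1}$, which together with the $t=k$ term reproduces precisely the claimed bound at index $k$; the $\dQ$-component is identical.

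For $\dR'_{1:n}$ I run the same step once more with $\dP_X=\dP_1$, $\dQ_X=\dQ_1$, using the index-$2$ region bound for $\dR'\MK{2:n}{1}$ and the sparsification bound for $\tilde\dR\MK{2:n}{1}$; after the same collapse ($\Ex_{z\sim\dP_1}\Ex_{w\sim\dP\MK[z]t1}=\Ex_{w\sim\dP_t}$, with $\dP_t$ the $t$-th marginal) this gives $\RTVD(\dR'_{1:n},\dR_{1:n})\le_\exists\bigl(\tfrac{n-1}{4n}\varepsilon\LBTV+\tfrac{\varepsilon}{8n}\sum_{t=1}^{n-1}\Ex_{w\sim\dP_t}[\TVD(\dR\MK[w]{t+1:n}{t})],\ \cdot\bigr)$. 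I then bound each summand by $\Ex_{w\sim\dP_t}[\TVD(\dR\MK[w]{t+1:n}{t})]=\TVD(\dP_t\dP\MK{t+1:n}{t},\dP_t\dQ\MK{t+1:n}{t})\le\TVD(\dP_{t:n},\dQ_{t:n})+\TVD(\dQ_t,\dP_t)\le2\TVD(\dP,\dQ)$ — the same triangle-inequality computation as in \Cref{lem-lb-mar} — so each coordinate of the region bound is at most $\tfrac{n-1}{4n}\varepsilon\LBTV+\tfrac{n-1}{4n}\varepsilon\TVD(\dP,\dQ)\le\tfrac{n-1}{2n}\varepsilon\TVD(\dP,\dQ)$ using $\LBTV\le\TVD(\dP,\dQ)$. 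Since $\MTVD(\dR'_{1:n},\dR_{1:n})=\inf_{(\delta_1,\delta_2)\in\RTVD(\dR'_{1:n},\dR_{1:n})}\max(\delta_1,\delta_2)$, the stated MTV-distance bound follows.

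The step I expect to be the main obstacle is bookkeeping rather than conceptual: keeping the $\dP$- and $\dQ$-expectations correctly paired as they pass through \textsf{Concatenate}, checking the chain-consistency collapse so that the sparsification term introduced at stage $k$ lands exactly as the $t=k$ term of the telescoping sum, and the minor point that the region-valued strengthening of \Cref{lem:sparsify} must be extracted from its proof rather than from the stated inequality.
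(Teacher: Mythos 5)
Your proposal is correct and follows essentially the same route as the paper's own proof in \Cref{sec:markov-alter}: downward induction on $k$, the region-valued sparsification bound, the triangle inequality for regions (\Cref{lem:region-triangle}), pushing the per-$z$ bounds through \textsf{Concatenate} via \Cref{lem:concat-MTVD}, collapsing the iterated expectations by the Markov property, and finishing at $k=1$ with the $2\TVD(\dP,\dQ)$ estimate and $\LBTV\le\TVD(\dP,\dQ)$. The only (immaterial) difference is that you extract the two-coordinate region bound directly from the construction of $\tilde\dP,\tilde\dQ$ inside the proof of \Cref{lem:sparsify}, while the paper converts the stated $\MTVD$ bound into a region bound, which is equally valid here since all supports are finite (\Cref{lem:inf}).
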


\begin{proof}
  There is an inductive proof from larger $k$ to smaller $k$.
  The base case is when $k = n$.
  The base case follows trivially from the fact that $\dR'\MK{n:n}{n-1} = \dR\MK{n:n}{n-1}$.

  For every smaller $k$, by Lemma~\ref{lem:sparsify},
  the error introduced by sparsifying $\dR'\MK[x]{k+1:n}{k}$ is bounded by
  \[
    \MTVD (\tilde\dR\MK[x]{k+1:n}{k}, \dR'\MK[x]{k+1:n}{k})
    \leq \frac{\varepsilon}{4n} \LBTV + \frac{\varepsilon}{8n} \TVD(\dR'\MK[x]{k+1:n}{k})
    \leq \frac{\varepsilon}{4n} \LBTV + \frac{\varepsilon}{8n} \TVD(\dR\MK[x]{k+1:n}{k}) \,,
  \]
  which means
  \[
    \RTVD (\tilde\dR\MK[x]{k+1:n}{k}, \dR'\MK[x]{k+1:n}{k})
    \leq_\exists \Bigl(\frac{\varepsilon}{4n} \LBTV + \frac{\varepsilon}{8n} \TVD(\dR\MK[x]{k+1:n}{k}),\frac{\varepsilon}{4n} \LBTV + \frac{\varepsilon}{8n} \TVD(\dR\MK[x]{k+1:n}{k})\Bigr)  \,.
  \]
  Combining with the inductive hypothesis that bounds $\RTVD (\dR'\MK[x]{k+1:n}{k}, \dR\MK[x]{k+1:n}{k})$ by
  \begin{multline*}
    \RTVD (\tilde\dR\MK[x]{k+1:n}{k}, \dR\MK[x]{k+1:n}{k})
    \leq_\exists 
    \biggl(\frac{n-k - 1}{4n}\cdot \varepsilon \LBTV + \frac{\varepsilon}{8n} \sum_{t=k+1}^{n-1}\Ex_{z\sim \dP\MK[x]t{k}}\bigl[ \TVD(\dR\MK[z]{t+1:n}{t}) \bigr], \\
    \frac{n-k - 1}{4n}\cdot \varepsilon \LBTV + \frac{\varepsilon}{8n} \sum_{t=k+1}^{n-1}\Ex_{z\sim \dQ\MK[x]t{k}}\bigl[ \TVD(\dR\MK[z]{t+1:n}{t}) \bigr]
    \biggr) \,,
  \end{multline*}
  using the triangle inequality (\Cref{lem:region-triangle})
  \begin{multline}
  \label{eq:markov-analysis-triangle}
    \RTVD (\dR'\MK[x]{k+1:n}{k}, \dR\MK[x]{k+1:n}{k})
    \leq_\exists
    \biggl(\frac{n-k}{4n}\cdot \varepsilon \LBTV + \frac{\varepsilon}{8n} \sum_{t=k}^{n-1}\Ex_{z\sim \dP\MK[x]t{k}}\bigl[ \TVD(\dR\MK[z]{t+1:n}{t}) \bigr], \\
    \frac{n-k}{4n}\cdot \varepsilon \LBTV + \frac{\varepsilon}{8n} \sum_{t=k}^{n-1}\Ex_{z\sim \dQ\MK[x]t{k}}\bigl[ \TVD(\dR\MK[z]{t+1:n}{t}) \bigr]
    \biggr) \,.
  \end{multline}
  Note that, in the above formula, $\dP\MK[x]kk$ is the degenerated distribution that only samples $x$.
  Then the inductive step is concluded by Lemma~\ref{lem:concat-MTVD}
  \begin{align}
    \RTVD (\dR'\MK[x]{k:n}{k-1}, \dR\MK[x]{k:n}{k-1})
    &
    \begin{aligned}[t]
    {}\leq_\exists
      \biggl(
      \Ex_{y\sim\dP\MK[x]{k}{k-1}} \Bigl[ \frac{n-k}{4n}\cdot \varepsilon \LBTV + \frac{\varepsilon}{8n} \sum_{t=k}^{n-1}\Ex_{z\sim \dP\MK[y]t{k}}\bigl[ \TVD(\dR\MK[z]{t+1:n}{t}) \bigr] \Bigr], \qquad \\
      \Ex_{y\sim\dQ\MK[x]{k}{k-1}} \Bigl[ \frac{n-k}{4n}\cdot \varepsilon \LBTV + \frac{\varepsilon}{8n} \sum_{t=k}^{n-1}\Ex_{z\sim \dQ\MK[y]t{k}}\bigl[ \TVD(\dR\MK[z]{t+1:n}{t}) \bigr] \Bigr]
      \biggr) 
    \end{aligned} \nonumber\\
    &
    \begin{aligned}
    {}=
      \biggl(
       \frac{n-k}{4n}\cdot \varepsilon \LBTV + \frac{\varepsilon}{8n} \sum_{t=k}^{n-1}\Ex_{y\sim\dP\MK[x]{k}{k-1}} \Bigl[ \Ex_{z\sim \dP\MK[y]t{k}}\bigl[ \TVD(\dR\MK[z]{t+1:n}{t}) \bigr] \Bigr], \qquad \\
      \frac{n-k}{4n}\cdot \varepsilon \LBTV + \frac{\varepsilon}{8n} \sum_{t=k}^{n-1}\Ex_{y\sim\dQ\MK[x]{k}{k-1}} \Bigl[ \Ex_{z\sim \dQ\MK[y]t{k}}\bigl[ \TVD(\dR\MK[z]{t+1:n}{t}) \bigr] \Bigr]
      \biggr) 
    \end{aligned} \nonumber\\
    &
    \begin{aligned}[b]
    {}=
      \biggl(
       \frac{n-k}{4n}\cdot \varepsilon \LBTV + \frac{\varepsilon}{8n} \sum_{t=k}^{n-1} \Ex_{z\sim \dP\MK[x]t{k-1}}\bigl[ \TVD(\dR\MK[z]{t+1:n}{t}) \bigr] , \qquad \\
      \frac{n-k}{4n}\cdot \varepsilon \LBTV + \frac{\varepsilon}{8n} \sum_{t=k}^{n-1} \Ex_{z\sim \dQ\MK[x]t{k-1}}\bigl[ \TVD(\dR\MK[z]{t+1:n}{t}) \bigr] 
      \biggr) \,.
    \end{aligned}
    \label{eq:markov-analysis-concat}
  \end{align}

  In the special case when $k=1$,
  by the same analysis as the first half of the inductive step,
  formula~\eqref{eq:markov-analysis-triangle} holds.
  Then by an argument mostly similar\footnote{The argument will be exactly the same, if we define degenerated distributions $\dP_0,\dQ_0$ over a size-1 sample space and define Markov kernels $\dP\MK10,\dQ\MK10$ such that as $\dP\MK10 \dP_0 = \dP_1$ and $\dQ\MK10 \dQ_0 = \dQ_1$.} to \eqref{eq:markov-analysis-concat},
  \[
    \RTVD (\dR'_{1:n}, \dR_{1:n})
    \leq_\exists
    \begin{aligned}[t]
      \biggl(
       \frac{n-1}{4n}\cdot \varepsilon \LBTV + \frac{\varepsilon}{8n} \sum_{t=1}^{n-1} \Ex_{z\sim \dP_t}\bigl[ \TVD(\dR\MK[z]{t+1:n}{t}) \bigr] , \qquad \\
      \frac{n-1}{4n}\cdot \varepsilon \LBTV + \frac{\varepsilon}{8n} \sum_{t=1}^{n-1} \Ex_{z\sim \dQ_t}\bigl[ \TVD(\dR\MK[z]{t+1:n}{t}) \bigr] 
      \biggr) \,.
    \end{aligned}
  \]

  % \[
  % \begin{aligned}
  %   & \Ex_{z\sim \dQ_t}\bigl[ \TVD(\dR\MK[z]{t+1:n}{t}) \bigr]  \\
  %   ={}& \Ex_{z\sim \dQ_t}\bigl[ \TVD (\dP\MK[z]{t+1:n}{t}, \dQ\MK[z]{t+1:n}{t}) \bigr]  \\
  %   ={}& \TVD (\dQ_t \dP\MK{t+1:n}{t}, \dQ_t \dQ\MK{t+1:n}{t}) \\
  %   \leq{}& \TVD (\dQ_t \dP\MK{t+1:n}{t}, \dP_t \dP\MK{t+1:n}{t}) + \TVD (\dP_t \dP\MK{t+1:n}{t}, \dQ_t \dQ\MK{t+1:n}{t}) \\
  %   ={}& \TVD (\dQ_t , \dP_t ) + \TVD (\dP_{t:n}, \dQ_{t:n} ) \\
  %   \leq{}& 2 \TVD(\dP,\dQ)
  % \end{aligned}
  % \]
  For each $1\leq t<n$,
  we have
  \begin{multline*}
    \Ex_{z\sim \dQ_t}\bigl[ \TVD(\dR\MK[z]{t+1:n}{t}) \bigr]  
    = \Ex_{z\sim \dQ_t}\bigl[ \TVD (\dP\MK[z]{t+1:n}{t}, \dQ\MK[z]{t+1:n}{t}) \bigr]  \\
    = \TVD (\dQ_t \dP\MK{t+1:n}{t}, \dQ_t \dQ\MK{t+1:n}{t}) 
    \leq{} \TVD (\dQ_t \dP\MK{t+1:n}{t}, \dP_t \dP\MK{t+1:n}{t}) + \TVD (\dP_t \dP\MK{t+1:n}{t}, \dQ_t \dQ\MK{t+1:n}{t}) \\
    ={} \TVD (\dQ_t , \dP_t ) + \TVD (\dP_{t:n}, \dQ_{t:n} ) 
    \leq{} 2 \TVD(\dP,\dQ) \,.
  \end{multline*}
  Symmetrically, $\Ex_{z\sim \dP_t}\bigl[ \TVD(\dR\MK[z]{t+1:n}{t}) \bigr] \leq 2 \TVD(\dP,\dQ)$.
  Therefore, 
  \begin{align*}  
    \RTVD (\dR'_{1:n}, \dR_{1:n})
    &\leq_\exists
    \begin{aligned}[t]
      \biggl(
       \frac{n-1}{4n}\cdot \varepsilon \LBTV + \frac{\varepsilon}{8n} (n-1) \cdot 2\TVD(\dP,\dQ) , ~\text{the same}
      \biggr) \,.
    \end{aligned}  \\
    \MTVD (\dR'_{1:n}, \dR_{1:n})
    &\leq \frac{n-1}{4n}\cdot \varepsilon \LBTV + \frac{\varepsilon}{8n} (n-1) \cdot 2\TVD(\dP,\dQ) \\
    &\leq \frac{n-1}{2n}\cdot \varepsilon \TVD(\dP,\dQ) \,.
    \qedhere
  \end{align*}
\end{proof}

\section{Deferred Proofs}
% \section{Deferred Proofs for the Minimum TV Distance}
\label{sec:proof}
\label{sec-proof-mtvd}

\begin{proof}[Proof of \Cref{lem:inf}]
  The key step of the proof is 
  to enforce the search space of $(\dP_1,\dQ_1,\dP_2,\dQ_2)$ to be finite-dimensional.
  Concretely, we construct a \emph{finite} set $\Omega$ such that we can assume w.l.o.g.~that $\Omega$ is the sample space of distributions $\dP_1,\dQ_1,\dP_2,\dQ_2$.
  That is,
  % Assume there is a \emph{finite-size} sample space $\Omega$ 
  % such that
  \begin{equation}\label{eq:finiteomega}
    \MTVD(\dR_1,\dR_2)
    =
    \inf_{\substack{\textit{discrete}~\dP_1,\dQ_1,\dP_2,\dQ_2\\\textit{in sample space }\Omega\\(\dP_1\|\dQ_1) = \dR_1 \\(\dP_2\|\dQ_2) = \dR_2}} \max\bigl( \TVD(\dP_1,\dP_2), \TVD(\dQ_1,\dQ_2) \bigr) \,.
  \end{equation}

  Assume that we have found such a \emph{finite} set $\Omega$ satisfying equation~\eqref{eq:finiteomega},
  then the infimum symbol can be replaced by minimum,
  because the search space is a compact set
  and the TV distance is a continuous function.
  Thus to finish the proof, it suffices to (explicitly) construct $\Omega$ and show \eqref{eq:finiteomega}.

  % % Let $\supp(R_1)$ denote the support of $R_1$.
  % Define $\cS_1 \deq \supp(\dR_1) \cup \{\infty\}$ and $\cS_2 \deq \supp(\dR_2) \cup \{\infty\}$.
  % The finite sample space in \eqref{eq:finiteomega} is defined as $\Omega = \cS_1 \times \cS_2$.

  The sample space $\Omega$ is defined as
  \[
    \Omega = \Bigl( \supp(\dR_1) \cup \{\infty\} \Bigr) \times \Bigl( \supp(\dR_2) \cup \{\infty\} \Bigr).
  \]
  The desired~\eqref{eq:finiteomega} is implied by the following statement:
  for any discrete distributions $\dP'_1,\dQ'_1,\dP'_2,\dQ'_2$ over any sample space $\Omega'$ such that $(\dP'_1\|\dQ'_1) = \dR_1$ and $(\dP'_2\|\dQ'_2) = \dR_2$,
  there exists distributions $\dP_1,\dQ_1,\dP_2,\dQ_2$ over $\Omega$ such that $(\dP_1\|\dQ_1) = \dR_1$, $(\dP_2\|\dQ_2) = \dR_2$ and
  % \[
  %   \max(\TVD(\dP_1,\dP_2), \TVD(\dQ_1,\dQ_2))
  %   \leq \max(\TVD(\dP'_1,\dP'_2), \TVD(\dQ'_1,\dQ'_2)).
  % \]
  \begin{equation}\label{eq:prop-data-processing}
    \TVD(\dP_1,\dP_2) \leq \TVD(\dP'_1,\dP'_2) 
    \quad\text{ and }\quad
    \TVD(\dQ_1,\dQ_2)  \leq \TVD(\dQ'_1,\dQ'_2).
  \end{equation}
  The distributions $\dP_1,\dQ_1,\dP_2,\dQ_2$ are constructed as follows.
  % Define mappings $\pi_1:\Omega'\to\cS_1$, $\pi_2:\Omega'\to\cS_2$, 
  Define mapping $\pi:\Omega'\to\Omega$ as
  \[
    \pi(z) = \Bigl(\frac{\dP'_1(z)}{\dQ'_1(z)}, \frac{\dP'_2(z)}{\dQ'_2(z)}\Bigr).
  \]
  % (Here we let a fraction equal $\infty$ if its denominator is zero.)
  Let $\dP_1,\allowbreak\dQ_1,\allowbreak\dP_2,\allowbreak\dQ_2$ be the distributions of $\pi(X_1),\allowbreak\pi(Y_1),\allowbreak\pi(X_2),\allowbreak\pi(Y_2)$
  where $X_1\sim\dP'_1,\allowbreak Y_1\sim\dQ'_1,\allowbreak X_2\sim\dP'_2,\allowbreak Y_2\sim\dQ'_2$.
  By the data-processing principle, \eqref{eq:prop-data-processing} holds.
  % \[
  %   \TVD(\dP_1,\dP_2) \leq \TVD(\dP'_1,\dP'_2) 
  %   \quad\text{ and }\quad
  %   \TVD(\dQ_1,\dQ_2)  \leq \TVD(\dQ'_1,\dQ'_2).
  % \]
  Now the only remaining task is to show $(\dP_1\|\dQ_1) = \dR_1$, $(\dP_2\|\dQ_2) = \dR_2$.
  By symmetry, it suffices to prove one of them.

  For each $(r_1,r_2) \in \Omega$
  such that $\dQ_1(r_1,r_2) > 0$ 
  (which implies $r_1\neq \infty$),
  \[
    \frac{\dP_1(r_1,r_2)}{\dQ_1(r_1,r_2)}
    = \frac{\sum_{{z\in\Omega' \text{~s.t.~} \pi(z)=(r_1,r_2)}} \dP'_1(z)}{\sum_{{z\in\Omega' \text{~s.t.~}  \pi(z)=(r_1,r_2)}} \dQ'_1(z)}
    = \frac{\sum_{{z\in\Omega' \text{~s.t.~} \pi(z)=(r_1,r_2)}} r_1 \cdot \dQ'_1(z)}{\sum_{{z\in\Omega' \text{~s.t.~}  \pi(z)=(r_1,r_2)}} \dQ'_1(z)}
    = r_1.
  \]
  For each $r\in[0,\infty)$
  \begin{multline*}
    (\dP_1\|\dQ_1)(r)
    = \Pr_{Y\gets \dQ_1}\Bigl[ \frac{\dP_1(Y)}{\dQ_1(Y)} = r \Bigr]
    = \Pr_{Y\gets \dQ_1}\Bigl[ \text{first entry of }Y = r \Bigr] \\
    = \Pr_{Y'\gets \dQ'_1}\Bigl[ \text{first entry of }\pi(Y') = r \Bigr] 
    = \Pr_{Y'\gets \dQ'_1}\Bigl[ \frac{\dP'_1(Y')}{\dQ'_1(Y')} = r \Bigr]
    = (\dP'_1\|\dQ'_1)(r).
  \end{multline*}
  Thus $(\dP_1\|\dQ_1) = (\dP'_1\|\dQ'_1) = \dR_1$.
% \tianren{unfinished proof}
\end{proof}

\begin{proof}[Proof of \Cref{lem-metric}]
  It is easy to verify that $\MTVD(\dR,\dR) = 0$ and $\MTVD(\dR_1,\dR_2) = \MTVD(\dR_2,\dR_1)$.

\medskip\noindent\textbf{Positivity:}
  Next, we verify the positivity property: $\MTVD(\dR_1,\dR_2) > 0$ if $\dR_1 \neq \dR_2$.
  If $\dR_1,\dR_2$ have finite-size supports, the positivity property is very easy to prove.
  Here we briefly present a proof without assuming finite-size supports.

  The proof requires some basic knowledge of decision theory.
  In the decision problem between two distribution $\dP$ and $\dQ$,
  a distinguisher is given a sample $x$ that is sampled from either $\dP$ or $\dQ$,
  and is supposed to guess which distribution $x$ is sampled from.
  The distinguisher can be formalized as a randomized algorithm from the sample space to $\bit$.
  It is well-known that TV distance can be defined as 
  \[
    \TVD(\dP,\dQ) = \max_{D} \Bigl| \Pr_{X\sim\dP}\bigl[D(X) \to 0\bigr] - \Pr_{X\sim\dQ}\bigl[D(X) \to 0\bigr] \Bigr|\,.
  \]
  The \emph{Neyman-Pearson region} of the decision problem, denoted by $\NPR(\dP,\dQ)$, is defined as
  \[
    \NPR(\dP,\dQ)
    \deq \Bigl\{ \bigl(\Pr_{X\sim\dP}[D(X) \to 0], \Pr_{X\sim\dQ}[D(X)\to 0]\bigr) \Bigm| D \Bigr\}
    \subseteq [0,1]\times[0,1]\,.
  \]
  % We requires the following properties of the 
  Every Neyman-Pearson region satisfies the following properties.
  \begin{itemize}
    \item The region is close and convex.
    \item The region contains $(0,0)$ and $(1,1)$.
    \item The region is centrally symmetric: $(\alpha,\beta) \in \NPR(\dP,\dQ)$ if and only if $(1-\alpha,1-\beta) \in \NPR(\dP,\dQ)$.
    \item The region is determined by the ratio, and vice verse.
      Thus we can define the Neyman-Pearson region of ratio $\dR$, denoted by $\NPR(\dR)$,
      so that $\NPR(\dR) = \NPR(\dP,\dQ)$ as long as $\dR = (\dP\|\dQ)$.
  \end{itemize}

  Given any two distinct ratios $\dR_1 \neq \dR_2$,
  we have $\NPR(\dR_1) \neq \NPR(\dR_2)$.
  Assume w.l.o.g.\ that $(\alpha,\beta) \in \NPR(\dR_1) \setminus \NPR(\dR_2)$.
  Since $\NPR(\dR_2)$ is close,
  there exists a constant $\delta>0$ such that $\max(|\alpha-\alpha'|, |\beta-\beta'|) \geq \delta$ for any $(\alpha',\beta') \in  \NPR(\dR_2)$.

  % For any $\dP_1,\dQ_1,\dP_2,\dQ_2$ such that $(\dP_1\|\dQ_1) = \dR_1$, $(\dP_2\|\dQ_2) = \dR_2$,

  For any $\dP_1,\dQ_1$ satisfying $(\dP_1\|\dQ_1) = \dR_1$,
  since $(\alpha,\beta)\in\NPR(\dP_1,\dQ_1)$,
  there exists a distinguisher $D$ such that 
  \[
    \Pr_{X\sim\dP_1}\bigl[D(X) \to 0\bigr] = \alpha, 
    \qquad
    \Pr_{X\sim\dQ_2}\bigl[D(X)\to 0\bigr] = \beta.
  \]
  For any $\dP_2,\dQ_2$ satisfying $(\dP_2\|\dQ_2) = \dR_2$, let
  \[
    (\alpha',\beta') \deq \Bigl(\Pr_{X\sim\dP_2}\bigl[D(X) \to 0\bigr],\Pr_{X\sim\dQ_2}\bigl[D(X) \to 0\bigr]\Bigr) \in \NPR(\dR_2) \,.
  \]
  Then
  \[
  \begin{aligned}
    &\max(\TVD(\dP_1,\dP_2), \TVD(\dQ_1,\dQ_2)) \\
    &\geq \max\Bigl( \Bigl| \Pr_{X\sim\dP_1}\bigl[D(X) \to 0\bigr] - \Pr_{X\sim\dP_2}\bigl[D(X) \to 0\bigr] \Bigr|
    ,~ \Bigl| \Pr_{X\sim\dQ_1}\bigl[D(X) \to 0\bigr] - \Pr_{X\sim\dQ_2}\bigl[D(X) \to 0\bigr] \Bigr| \Bigr)\\
    &= \max( |\alpha-\alpha'|, |\beta-\beta'|) \geq \delta \,.
  \end{aligned}
  \]
  Therefore $\MTVD(\dR_1,\dR_2) \geq \delta > 0$.

  % \tianren{This positivity proof assume finite-size support (so that infimum equals minimum).  But in such case there is a trivial proof.}
  % We will show that for all $P_1,Q_1,P_2,Q_2$ with $R_1 = (P_1 \| Q_1)$ and $R_2 = (P_2 \|Q_2)$, it holds that $\TVD(P_1,P_2)+\TVD(Q_1,Q_2) > 0$.
  % Fix such $P_1,Q_1,P_2,Q_2$. Suppose $P_1,P_2,Q_1,Q_2$ are defined over $\Omega$. 
  % Since $R_1 \neq  R_2$, there exists $r \in [0,\infty)$ such that $R_1(r) \neq R_2(r)$.
  % Define
  % \begin{align*}
  % 	\Omega^r_1 = \left\{ x \in \Omega \mid Q_1(x) > 0 \wedge \frac{P_1(x)}{Q_1(x)} = r \right\},
  % 	\Omega^r_2 = \left\{ x \in \Omega \mid Q_2(x) > 0 \wedge \frac{P_2(x)}{Q_2(x)} = r \right\}.
  % \end{align*}
  % If $\Omega^r_1 = \Omega^r_2$, then $\Pr_{x \gets Q_1}[x \in \Omega^r_1] = R_1(r) \neq R_2(r) = \Pr_{x \gets Q_2}[x \in \Omega^r_2]$, which implies $\TVD(Q_1,Q_2) > 0$. We now assume $\Omega^r_1 = \Omega^r_2$. W.l.o.g., we assume that there is $y \in \Omega^r_1$ such that $y \notin \Omega^r_2$ (the other case follows by symmetry). It holds that
  % \begin{align*}
  % 	Q_1(y) > 0 \wedge \frac{P_1(y)}{Q_1(y)} = r \wedge \tp{  Q_2(y) = 0 \vee \frac{P_2(y)}{Q_2(y)} \neq r }.
  % \end{align*}
  % If $Q_2(y) \neq Q_1(y)$, then $\TVD(Q_1,Q_2) \geq Q_1(y) - Q_2(y) > 0$.
  % If $Q_2(y) = Q_1(y)$, then it must hold that $Q_2(y) > 0$ and $\frac{P_2(y)}{Q_2(y)} \neq r$.
  % In this case, we have $P_2(y) \neq r Q_2(y) = r Q_1(y) = P_1(y)$, which implies $\TVD(P_1,P_2)>0$. 
  % This proves the positivity property.
  
\medskip\noindent\textbf{Triangle inequality:}
  %
  % We are going to prove the triangle inequality.  
  We will prove that for any ratios $\dR,\dR_1,\dR_2$,
  \[
    \MTVD(\dR_1,\dR_2)
    \leq \MTVD (\dR,\dR_1) + \MTVD(\dR,\dR_2).
  \]

  By the definition of MTV distance, there exist%
  \footnote{For this paper, we can assume $\dR_1,\dR$ have finite-size supports, then such distributions $\dP_1,\dQ_1,\dP_3,\dQ_3$ are guaranteed to exist (Lemma~\ref{lem:inf}). 
  % Otherwise, $\MTVD(R,R_1)$ is defined as an infimum instead of a minimum.  In such case the proof need a small modification.
  Otherwise, the proof need a small modification.} 
  distributions $\dP_1,\dQ_1,\dP_3,\dQ_3$ over sample space $\Omega_1$ 
  such that
  \[
    (\dP_1\|\dQ_1) = \dR_1, \quad
    (\dP_3\|\dQ_3) = \dR, \quad
    \max(\TVD(\dP_1,\dP_3), \TVD(\dQ_1,\dQ_3)) = \MTVD(\dR,\dR_1).
  \]
  Similarly, there exist distributions $\dP_2,\dQ_2,\dP_4,\dQ_4$ over sample space $\Omega_2$
  such that
  \[
    (\dP_2\|\dQ_2) = \dR_2, \quad
    (\dP_4\|\dQ_4) = \dR, \quad
    \max(\TVD(\dP_2,\dP_4), \TVD(\dQ_2,\dQ_4)) = \MTVD(\dR,\dR_2).
  \]
  We are going to construct distributions $\dP',\dQ',\dP'_1,\dQ'_1,\dP'_2,\dQ'_2$ 
  such that
  \begin{equation}\label{eq:metric-prop}
  \begin{gathered}
    \dR_1 = (\dP'_1\| \dQ'_1), \quad
    \dR_2 = (\dP'_2\| \dQ'_2), \quad
    \dR = (\dP'\|\dQ'), \\
    \TVD(\dP_1',\dP') = \TVD(\dP_1,\dP_3), \quad \TVD(\dQ_1',\dQ') = \TVD(\dQ_1,\dQ_3), \\
    \TVD(\dP_2',\dP') = \TVD(\dP_2,\dP_4), \quad \TVD(\dQ_2',\dQ') = \TVD(\dQ_2,\dQ_4).
    % \MTVD(R,R_1) = \TVD(P_1',P') + \TVD(Q_1',Q'), \quad
    % \MTVD(R,R_2) = \TVD(P_2',P') + \TVD(Q_2',Q').
  \end{gathered}
  \end{equation}
  Then the proof will be concluded by
  \[
  \begin{aligned}
    \MTVD(\dR_1,\dR_2)
    &\leq \max(\TVD(\dP_1',\dP_2'), \TVD(\dQ_1',\dQ_2')) \\
    &\leq \max(\TVD(\dP_1',\dP') + \TVD(\dP_2',\dP'),~ \TVD(\dQ_1',\dQ') + \TVD(\dQ_2',\dQ')) \\
    &\leq \max(\TVD(\dP_1,\dP_3), \TVD(\dQ_1,\dQ_3)) + \max(\TVD(\dP_2,\dP_4), \TVD(\dQ_2,\dQ_4)) \\
    &= \MTVD(\dR,\dR_1) + \MTVD(\dR,\dR_2) .
  \end{aligned}
  \]

  Intuitively, the distribution $\dQ'$ samples a pair of dependent random variables $(X,Y)\sim\dQ'$,
  such that $X\sim \dQ_3$ and the distribution $Y$ conditional on $X=x$
  is the same as the distribution of $Y$ conditional on $\frac{\dP_3(x)}{\dQ_3(x)} = \frac{\dP_4(Y)}{\dQ_4(Y)}$.
  % So does distribution $\dP'$.
  To formalize the intuition, we define a joint distribution $\dQ_{3,4}$.
  \[
    \dQ_{3,4} (x,y) = 
    \begin{cases}
      \dQ_3(x)  \dfrac{\dQ_4(y)}{\dQ_4(\Omega_{2,r})}, &\text{ if } \exists r \text{ s.t.~} x\in \Omega_{1,r} \wedge y\in \Omega_{2,r} \wedge \dQ_4(\Omega_{2,r})>0\\
      0, &\text{ otherwise, }
    \end{cases} 
  \]
  where $\Omega_{1,r},\Omega_{2,r}$ are defined as 
  \[
  \begin{aligned}
    \Omega_{1,r}~ &\deq \Bigl\{ x\in\Omega_1 \Bigm| \frac{\dP_3(x)}{\dQ_3(x)} = r \Bigr\}, \quad
    &\Omega_{2,r}~ &\deq \Bigl\{ y\in\Omega_2 \Bigm| \frac{\dP_4(y)}{\dQ_4(y)} = r \Bigr\},\\
    \Omega_{1,\infty} &\deq \Bigl\{ x\in\Omega_1 \Bigm| {\dQ_3(x)} = 0 \Bigr\},  \quad 
    &\Omega_{2,\infty} &\deq \Bigl\{ y\in\Omega_2 \Bigm| {\dQ_4(y)} = 0 \Bigr\} \,.
  \end{aligned}
  \]
  Note that the probability mass function of $\dQ_{3,4}$ equals 
  \[
    \dQ_3(x)  \dfrac{\dQ_4(y)}{\dQ_4(\Omega_{2,r})}
    = \dR(r) \dfrac{\dQ_3(x)}{\dQ_3(\Omega_{1,r})}  \dfrac{\dQ_4(y)}{\dQ_4(\Omega_{2,r})}
    = \dQ_4(y)  \dfrac{\dQ_3(x)}{\dQ_3(\Omega_{1,r})}.
  \]
  Thus there are two other equivalent interpretations:
  \begin{itemize}
    \item First sample $Y\sim \dQ_4$.  
      Conditioning on $Y=y$, the distribution of $X$ is the same
      as the distribution of $X$ conditional on $\frac{\dP_3(X)}{\dQ_3(X)} = \frac{\dP_4(y)}{\dQ_4(y)}$.
    \item First sample $R\sim \dR$.
      Conditioning on $R=r$,  $X$ and $Y$ are independent, and the conditional distribution of $X$ (resp.~$Y$) is the same as 
      the distribution of $X\sim\dQ_3$ (resp.~$Y\sim\dQ_4$) conditional on $\frac{\dP_3(X)}{\dQ_3(X)} = r$ (resp.~$\frac{\dP_4(Y)}{\dQ_4(Y)}=r$).
  \end{itemize}

  Similarly, we can define a joint distribution $\dP_{3,4}$ as
  \[
    \dP_{3,4} (x,y) = 
    \begin{cases}
      \dP_3(x)  \dfrac{\dP_4(y)}{\dP_4(\Omega_{2,r})}, &\text{ if } \exists r \text{ s.t.~} x\in \Omega_{1,r} \wedge y\in \Omega_{2,r} \wedge \dP_4(\Omega_{2,r})>0\\
      0, &\text{ otherwise. }
    \end{cases} 
  \]
  Note that, if letting $(\dR^{c},\dR)$ be the canonical pair of $\dR$,
  the probability mass function of $\dP_{3,4}$ equals 
  \[
    \dP_3(x)  \dfrac{\dP_4(y)}{\dP_4(\Omega_{2,r})}
    = \dR^c(r) \dfrac{\dP_3(x)}{\dP_3(\Omega_{1,r})}  \dfrac{\dP_4(y)}{\dP_4(\Omega_{2,r})}
    = \dP_4(y)  \dfrac{\dP_3(x)}{\dP_3(\Omega_{1,r})}.
  \]
  % Thus there are three equivalent interpretations of $(X,Y)\sim\dP_{3,4}$:
  % \begin{itemize}
  %   \item First sample $Y\sim \dP_4$.  
  %     Conditioning on $Y=y$, the distribution of $X$ is the same
  %     as the distribution of $X$ conditional on $\frac{\dP_3(X)}{\dQ_3(X)} = \frac{\dP_4(y)}{\dQ_4(y)}$.
  %   \item First sample $R^c\sim \dR^c$.
  %     Conditioning on $R^c=r$,  $X$ and $Y$ are independent, and the conditional distribution of $X$ (resp.~$Y$) is the same as 
  %     the distribution of $X\sim\dP_3$ (resp.~$Y\sim\dP_4$) conditional on $\frac{\dP_3(X)}{\dQ_3(X)} = r$ (resp.~$\frac{\dP_4(Y)}{\dQ_4(Y)}=r$).
  % \end{itemize}

  We can define the conditional distributions $\dP_{3|4},\dP_{4|3},\dQ_{3|4},\dQ_{4|3}$
  so that 
  $\dP_{3,4} = \dP_3 \dP_{4|3} = \dP_4 \dP_{3|4}$
  and $\dQ_{3,4} = \dQ_3 \dQ_{4|3} = \dQ_4 \dQ_{3|4}$.
  That is, for any $(x,y) \in \Omega_1\times \Omega_2$
  \begin{align*}
    \dP_{3,4}(x,y) &= \dP_3(x) \cdot \dP_{4|3} (y|x) = \dP_4 (y) \cdot \dP_{3|4} (x|y), \\
    \dQ_{3,4}(x,y) &= \dQ_3(x) \cdot \dQ_{4|3} (y|x) = \dQ_4 (y) \cdot \dQ_{3|4} (x|y).
  \end{align*}
  Moreover, for any $x\in \supp(\dP_3) \cap \supp(\dQ_3)$, we have $x\in\Omega_{1,r}$ for some $r\in(0,\infty)$,
  then
  \[
    \dP_{4|3}(y|x) 
    = 
    \begin{cases}
      \dfrac{\dP_4(y)}{\dP_4(\Omega_{2,r})}, &\text{ if }y\in\Omega_{2,r}\\
      0, &\text{ otherwise }\\
    \end{cases}
    =
    \begin{cases}
      \dfrac{\dQ_4(y)}{\dQ_4(\Omega_{2,r})}, &\text{ if }y\in\Omega_{2,r}\\
      0, &\text{ otherwise }\\
    \end{cases}
    =
    \dQ_{4|3}(y|x).
  \]
  Thus we can define them so that $\dP_{4|3} = \dQ_{4|3}$.
  Similarly, we can let $\dP_{3|4} = \dQ_{3|4}$.
  
  Define distributions $\dP',\dQ',\dP'_1,\dQ'_1,\dP'_2,\dQ'_2$ over $\Omega = \Omega_1\times\Omega_2$ as
  \[
  \begin{aligned} 
    \dP' &= \dP_{3,4}\,, & \dP'_1 &= \dP_1\dP_{4|3}\,, & \dP'_2 &= \dP_2\dP_{3|4}\,,  \\
    \dQ' &= \dQ_{3,4}\,, & \dQ'_1 &= \dQ_1\dQ_{4|3}\,, & \dQ'_2 &= \dQ_2\dQ_{3|4}\,.  
  \end{aligned}
  \]
  We conclude the proof by verifying the properties in \eqref{eq:metric-prop} holds.

  \begin{itemize}
    \item 
      Let $(X,Y) \sim \dQ'_1$,
      then $\frac{\dP'_1(X,Y)}{\dQ'_1(X,Y)} \sim (\dP'_1\| \dQ'_1)$.
      In the meanwhile,
      \[
        \frac{\dP'_1(X,Y)}{\dQ'_1(X,Y)}
        = \frac{(\dP_1\dQ_{4|3})(X,Y)}{(\dQ_1\dQ_{4|3})(X,Y)}
        = \frac{\dP_1(X)\dQ_{4|3}(Y|X)}{\dQ_1(X)\dQ_{4|3}(Y|X)}
        =   \frac{\dP_1(X)}{\dQ_1(X)}
        \sim \dR_1 \,.
      \]
      Thus $(\dP'_1\| \dQ'_1) = \dR_1$.
    \item By the data processing inequality for TV distance (use it twice)
      \[
        \TVD(\dP_1',\dP')
        = \TVD(\dP_1\dP_{4|3},\dP_{3,4})
        = \TVD(\dP_1\dP_{4|3},\dP_3\dP_{4|3})
        = \TVD(\dP_1,\dP_3)
        \,.
      \]
  \end{itemize}
  The rest of properties in \eqref{eq:metric-prop} can be verified similarly.
\end{proof}

\begin{proof}[Proof of \Cref{lem:error-MTVD}]
  For any $\dP_1,\dQ_1,\dP_2,\dQ_2$ such that $\dR_1 = (\dP_1 \| \dQ_1)$, $\dR_2 = (\dP_2 \| \dQ_2)$,
  we have 
  \[
    |\TVD(\dR_1) - \TVD(\dR_2)|
    = |\TVD(\dP_1,\dQ_1) - \TVD(\dP_2,\dQ_2)|
    \leq \TVD(\dP_1,\dP_2) + \TVD(\dQ_1,\dQ_2)
  \]
  by triangle inequality.
  Thus
  \begin{equation*}
    |\TVD(\dR_1) - \TVD(\dR_2)|
    \leq \inf_{\substack{(\dP_1 \| \dQ_1) = \dR_1 \\ (\dP_2 \| \dQ_2) = \dR_2}} \Bigl( \TVD(\dP_1,\dP_2) + \TVD(\dQ_1,\dQ_2) \Bigr)
    \leq 2\MTVD(\dR_1,\dR_2). \qedhere
  \end{equation*}
\end{proof}

\begin{proof}[Proof of \Cref{prop-cross}]
  Consider any distributions $\dP_1,\dQ_1,\dots,\dP_4,\dQ_4$ satisfying $\dR_i = (\dP_i \| \dQ_i)$ for all $i\in[4]$.
  By~\Cref{prop-product}, 
  $\dR_1 \indpprod \dR_2 = ( \dP_1 \dP_2 \| \dQ_1 \dQ_2 )$ and 
  $\dR_3 \indpprod \dR_4 = ( \dP_3 \dP_4 \| \dQ_3 \dQ_4 )$. 
  Thus 
  \[
    \MTVD(\dR_1 \indpprod \dR_2, \dR_3 \indpprod \dR_4 ) 
    \leq \max( \TVD(\dP_1 \dP_2,  \dP_3 \dP_4 ),~ \TVD(\dQ_1 \dQ_2,  \dQ_3 \dQ_4)).
  \]

  By the the triangle inequality,
  \[
  \TVD(\dP_1 \dP_2,  \dP_3 \dP_4 ) 
  \leq \TVD(\dP_1 \dP_2,  \dP_3 \dP_2 )  + \TVD(\dP_3 \dP_2,  \dP_3 \dP_4 ) 
  =  \TVD(\dP_1,  \dP_3) + \TVD(\dP_2,\dP_4)\,.
  \]
  Similarly, $\TVD(\dQ_1 \dQ_2,  \dQ_3 \dQ_4) \leq \TVD(\dQ_1, \dQ_3) + \TVD(\dQ_2, \dQ_4)$.
  So
  \[
    \MTVD(\dR_1 \indpprod \dR_2, \dR_3 \indpprod \dR_4 ) 
    \leq \max(\TVD(\dP_1, \dP_3), \TVD(\dQ_1, \dQ_3)) + \max(\TVD(\dP_2, \dP_4), \TVD(\dQ_2, \dQ_4)).
  \]

  Since the inequality holds for any $\dP_1,\dQ_1,\dots,\dP_4,\dQ_4$ satisfying $\dR_i = (\dP_i \| \dQ_i)$,
  \[
  \begin{aligned}[b]
    &\MTVD(\dR_1 \indpprod \dR_2, \dR_3 \indpprod \dR_4 ) \\
    &\leq \inf_{\substack{(\dP_1\|\dQ_1) = \dR_1\\(\dP_3\|\dQ_3) = \dR_3}} \max\bigl(\TVD(\dP_1, \dP_3), \TVD(\dQ_1, \dQ_3) \bigr)
    + \inf_{\substack{(\dP_2\|\dQ_2) = \dR_2\\(\dP_4\|\dQ_4) = \dR_4}} \max\bigl(\TVD(\dP_2, \dP_4) + \TVD(\dQ_2, \dQ_4)\bigr) \\
    &= \MTVD(\dR_1, \dR_3) + \MTVD(\dR_2, \dR_4) \,.
  \end{aligned}
  \qedhere
  \]
\end{proof}

\begin{proof}[Proof of \Cref{lem:region-triangle}]
  % The lemma has been implicitly proved when showing the triangle inequality of MTV distance in the proof of \Cref{lem-metric}.
  In the proof of \Cref{lem-metric}, we have shown the following statement.
  \begin{quotation}
  For any distributions $\dP_1,\dQ_1,\dP_3,\dQ_3$ over sample space $\Omega_1$ 
  and distributions $\dP_2,\dQ_2,\dP_4,\dQ_4$ over sample space $\Omega_2$
  satisfying $(\dP_1\|\dQ_1) = \dR_1, (\dP_2\|\dQ_2) = \dR_2, (\dP_3\|\dQ_3) = (\dP_4\|\dQ_4) = \dR$,
  there exist distributions $\dP',\dQ',\dP'_1,\dQ'_1,\dP'_2,\dQ'_2$ 
  satisfying \eqref{eq:metric-prop}.
  \end{quotation}

  \Cref{lem:region-triangle} follows almost directly from the statement,
  by letting 
  \begin{align*}
    \RTVD(\dR_1,\dR) \ni (\TVD(\dP_1,\dP_3),\TVD(\dQ_1,\dQ_3)) \leq  (\alpha_1,\beta_1)\,, \\
    \RTVD(\dR_2,\dR) \ni (\TVD(\dP_2,\dP_4),\TVD(\dQ_2,\dQ_4)) \leq  (\alpha_2,\beta_2)\,,
  \end{align*}
  then the existing of $\dP',\dQ',\dP'_1,\dQ'_1,\dP'_2,\dQ'_2$ satisfying \eqref{eq:metric-prop} implies
  \begin{multline*}
    \RTVD(\dR_1,\dR_2) 
    \ni (\TVD(\dP'_1,\dP'_2),\TVD(\dQ'_1,\dQ'_2)) \\
    \leq  (\TVD(\dP'_1,\dP') + \TVD(\dP',\dP'_2), \TVD(\dQ'_1,\dQ') + \TVD(\dQ',\dQ'_2)) 
    \leq (\alpha_1 + \alpha_2,\beta_1 + \beta_2) \,.
    \qedhere  
  \end{multline*}
\end{proof}
\section{Visualize the Sparsification Process}

This section provides an informal intuitive visualization of the sparsification process.
The Neyman-Pearson region of a decision problem $(\dP,\dQ)$ is considered 
when proving the positivity of the minimum total variation distance (Lemma~\ref{lem-metric}).
In that proof, we claim that the Neyman-Pearson region contains exactly the same information as the ratio $\dR \deq (\dP\|\dQ)$.

Say the ratio $\dR \deq (\dP\|\dQ)$ is represented by a sorted table $(r_1,p_1),(r_2,p_2),(r_3,p_3),\dots$,
that is, $r_1<r_2<\dots$ and $\dR(r_i) = p_i$.
% Say the support of the ratio $\dR \deq (\dP\|\dQ)$ consists of values $r_1,r_2,r_3,\dots$, in that order.
By the Neyman-Pearson lemma,
the (upper) boundary of the Neyman-Pearson region is the polygonal chain that connects
% \[
%   (0,0), \underset{\substack{=(\dQ(\leq r_1),\dP(\leq r_1))}}{(p_1,r_1p_1)}, \underset{\substack{=(\dQ([0,r_2]),\dP([0,r_2]))}}{(p_1+p_2, r_1p_1+r_2p_2)}, \dots, ({\textstyle \sum_{t<i}p_t, \sum_{t<i}r_tp_t}), \dots, (1,1)
% \]
\[
  (0,0), \underset{\substack{=(r_1p_1,p_1)}}{(\dP([0,r_1]),\dQ([0,r_1]))}, \underset{\substack{=(r_1p_1+r_2p_2, p_1+p_2)}}{(\dP([0,r_2]),\dQ([0,r_2]))}, \dots, \underset{=({\sum_{t<i}r_tp_t, \sum_{t<i}p_t})}{(\dP([0,r_t],\dQ([0,r_t])))}, \dots, (1,1) \,,
\]
% \[
%   (0,0), \underset{\substack{=(p_1,r_1p_1)}}{(\dQ([0,r_1]),\dP([0,r_1]))}, \underset{\substack{=(p_1+p_2, r_1p_1+r_2p_2)}}{(\dQ([0,r_2]),\dP([0,r_2]))}, \dots, \underset{=({\sum_{t<i}p_t, \sum_{t<i}r_tp_t})}{(\dQ([0,r_t],\dP([0,r_t])))}, \dots, (1,1)
% \]
as illustrated in \Cref{fig:NP}.
The $i$-th direct segment in the polygonal chain is the vector $(r_ip_i,p_i)$.
This also explains how the Neyman-Pearson region contains exactly the same information as the ratio.

The goal of our sparsification process is to find a simpler ratio (the sparsified ratio) that is close to the given ratio.
From another point of view, the goal is to find a region with simpler boundary that is close to given Neyman-Pearson region. 

If a few consecutive direct segments in the boundary have similar slopes, 
they can be replaced by a single direct segment, causing minor change on the shape of the region.
This is exactly how our sparsification algorithm works:
identifying a cluster of probability masses then merging them together.

\begin{figure}[htp]
\centering
  \includegraphics{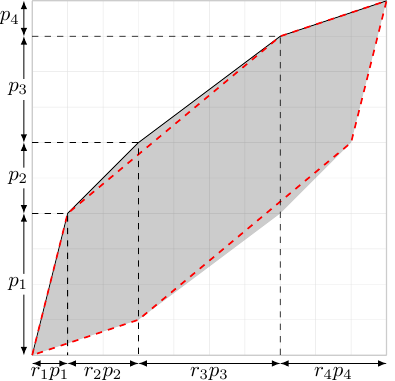}
  \caption{Illustration of the Neyman-Pearson region when ratio has support size 4.
  The red dashed line marks the boundary of the new Neyman-Pearson region of the sparsified ratio.}
  \label{fig:NP}
\end{figure}

% Let $\dR \deq (\dP\|\dQ)$.
% Due to the equivalence between $(\dP,\dQ)$ and $(\dR^\dagger,\dR)$, 
% their Neyman-Pearson are the same.
%
% We have claimed that the Neyman-Pearson region contains exactly the same information as the ratio $\dR \deq (\dP\|\dQ)$.
% This claims follows from the Neyman-Pearson lemma,
% which says: the (lower) boundary of the Neyman-Pearson region is achieved by the likelihood ratio test distinguishers.

\end{document}